\documentclass{article}[12pt]
\usepackage{graphicx,epsfig,color}
\usepackage{enumerate}
\usepackage{algorithm}
\usepackage[noend]{algpseudocode}
\usepackage{amsfonts}
\usepackage{amsmath}
\hyphenation{super-terse mea-sure semi-recursive non-recursive
             non-superterse}
\newcounter{savenumi}

\newtheorem{theoremfoo}{Theorem}
\newenvironment{theorem}{\pagebreak[1]\begin{theoremfoo}}{\end{theoremfoo}}

\newtheorem{propositionfoo}[theoremfoo]{Proposition}

\newtheorem{lemmafoo}[theoremfoo]{Lemma}
\newenvironment{lemma}{\pagebreak[1]\begin{lemmafoo}}{\end{lemmafoo}}
\newtheorem{conjecturefoo}[theoremfoo]{Conjecture}

\newtheorem{corollaryfoo}[theoremfoo]{Corollary}
\newenvironment{corollary}{\pagebreak[1]\begin{corollaryfoo}}{\end{corollaryfoo}}
\newtheorem{exercisefoo}{Exercise}

\newtheorem{openfoo}[theoremfoo]{Question}

\newtheorem{nttn}[theoremfoo]{Notation}

\newtheorem{dfntn}[theoremfoo]{Definition}
\newenvironment{definition}{\pagebreak[1]\begin{dfntn}\rm}{\end{dfntn}}

\newenvironment{proof}
    {\pagebreak[1]{\narrower\noindent {\bf Proof:\quad\nopagebreak}}}{\QED}











\newcommand{\poly}{{\rm poly}}
\def\nre.{$n$\/-r.e.}

\tolerance=2000

\newtheorem{factfoo}[theoremfoo]{Fact}

\newcommand{\squeeze}{
\textwidth 6in
\textheight 8.8in
\oddsidemargin 0.2in
\topmargin -0.4in
}

\newtheorem{propertyfoo}[theoremfoo]{Property}

\makeatletter

\def\@makechapterhead#1{ \vspace*{50pt} { \parindent 0pt \raggedright 
 \ifnum \c@secnumdepth >\m@ne \huge\bf \@chapapp{} \thechapter. \par 
 \vskip 20pt \fi \Huge \bf #1\par 
 \nobreak \vskip 40pt } }

\def\@sect#1#2#3#4#5#6[#7]#8{\ifnum #2>\c@secnumdepth
     \def\@svsec{}\else 
     \refstepcounter{#1}\edef\@svsec{\csname the#1\endcsname.\hskip 1em }\fi
     \@tempskipa #5\relax
      \ifdim \@tempskipa>\z@ 
        \begingroup #6\relax
          \@hangfrom{\hskip #3\relax\@svsec}{\interlinepenalty \@M #8\par}
        \endgroup
       \csname #1mark\endcsname{#7}\addcontentsline
         {toc}{#1}{\ifnum #2>\c@secnumdepth \else
                      \protect\numberline{\csname the#1\endcsname}\fi
                    #7}\else
        \def\@svsechd{#6\hskip #3\@svsec #8\csname #1mark\endcsname
                      {#7}\addcontentsline
                           {toc}{#1}{\ifnum #2>\c@secnumdepth \else
                             \protect\numberline{\csname the#1\endcsname}\fi
                       #7}}\fi
     \@xsect{#5}}

\def\@begintheorem#1#2{\it \trivlist \item[\hskip \labelsep{\bf #1\ #2.}]}

\def\@opargbegintheorem#1#2#3{\it \trivlist
      \item[\hskip \labelsep{\bf #1\ #2\ (#3).}]}

\makeatother



\newif\ifshortconferences
\shortconferencesfalse
\newif\ifmediumconferences
\mediumconferencesfalse

\def\ending#1{{\count1=#1\relax
\count2=\count1
\divide\count2 by 100
\multiply\count2 by 100
\advance\count1 by -\count2
\ifnum\count1=11
th%
\else \ifnum\count1=12
th%
\else \ifnum\count1=13
th%
\else 
\count2=\count1
\divide\count1 by 10
\multiply\count1 by 10
\advance\count2 by -\count1
\ifnum\count2=1
st%
\else \ifnum\count2=2
nd%
\else \ifnum\count2=3
rd%
\else th%
\fi\fi\fi\fi\fi\fi
}}

\def\Proceedingsofthe{\ifshortconferences Proc.\else\ifmediumconferences Proc.\else Proceedings of the\fi\fi}

\newcounter{confnum}

\def\conf#1#2{%
\setcounter{confnum}{#2}%
\addtocounter{confnum}{-\csname #1zero\endcsname}%
\ifnum\value{confnum}=1%
\expandafter\ifx\csname #1One\endcsname\relax%
\Proceedingsofthe\ \arabic{confnum}\ending{\value{confnum}}\ \csname #1name\endcsname%
\else \csname #1One\endcsname\fi%
\else%
\Proceedingsofthe\
\arabic{confnum}\ending{\value{confnum}}\ \csname #1name\endcsname\fi}

\def\qsym{\vrule width0.7ex height0.9em depth0ex}

\newif\ifqed\qedtrue

\def\noqed{\global\qedfalse}

\def\qed{\ifqed{\penalty1000\unskip\nobreak\hfil\penalty50
\hskip2em\hbox{}\nobreak\hfil\qsym
\parfillskip=0pt \finalhyphendemerits=0\par\medskip}\fi\global\qedtrue}

\makeatletter
\def\eqnqed{\noqed
	\def\@tempa{equation}
	\ifx\@tempa\@currenvir\def\@eqnnum{\qsym}%
	\addtocounter{equation}{-1}\else%
    \def\@@eqncr{\let\@tempa\relax
    \ifcase\@eqcnt \def\@tempa{& & &}\or \def\@tempa{& &}%
      \else \def\@tempa{&}\fi
     \@tempa {\def\@eqnnum{{\qsym}}\@eqnnum}
     \global\@eqnswtrue\global\@eqcnt\z@\cr}\fi}


\def\eqnlabel#1#2{\if@filesw {\let\thepage\relax%
   \def\protect{\noexpand\noexpand\noexpand}%
   \edef\@tempa{\write\@auxout{\string
      \newlabel{#2}{{{#1}}{\thepage}}}}%
   \expandafter}\@tempa%
   \if@nobreak \ifvmode\nobreak\fi\fi\fi%
	\def\@tempa{equation}
	\ifx\@tempa\@currenvir\def\theequation{{#1}}%
	\addtocounter{equation}{-1}\else%
    \def\@@eqncr{\let\@tempa\relax
    \ifcase\@eqcnt \def\@tempa{& & &}\or \def\@tempa{& &}%
      \else \def\@tempa{&}\fi
     \@tempa {\def\@eqnnum{{#1}}\@eqnnum}
     \global\@eqnswtrue\global\@eqcnt\z@\cr}\fi}

\makeatother



\def\QED{\qed}

\makeatother




\squeeze

\newcounter{example}[section]
\newenvironment{example}[1][]{\refstepcounter{example}\par\medskip
    \noindent \textbf{Example~\theexample. #1} \rmfamily}{\medskip}

\newcommand{\bigO}{{\rm O}}
\newcommand{\prob}{{\rm Prob}}

\newcommand{\Thick}{{\rm minThickness}}
\newcommand{\ThickMax}{{\rm maxThickness}}

\newcommand{\randomElm}{{\rm RandomElement}}

\newcommand{\currentT}{{\rm currentThickness}}

\newcommand{\setCount}{ m}

\newcommand{\complexityLatticePoint}{\bigO\left(\frac{1}{\beta}d^{\frac{11}{2}}l^3\left(\frac{2d^{\frac{3}{2}}}{\beta}+l|\lambda|\right)^3\right)}

\begin{document}

    \date{}

    \title{Approximate Set Union via Approximate Randomization
        \thanks{This research is supported in part by National Science
            Foundation Early Career Award 0845376 and Bensten Fellowship of the
            University of Texas - Rio Grande Valley.} }

    \author{
        Bin Fu$^1$, Pengfei Gu$^1$, and Yuming Zhao$^2$
        \\\\
        $^1$Department of Computer Science\\
        University of Texas - Rio Grande Valley, Edinburg, TX 78539, USA\\
        \\ \\
        $^2$School of Computer Science\\
        Zhaoqing University, Zhaoqing, Guangdong 526061, P.R. China\\
        \\
        \\{\bf}
    } \maketitle

    \begin{abstract}
        We develop a randomized approximation algorithm for the size of set union problem   $\arrowvert A_1\cup A_2\cup...\cup A_{\setCount}\arrowvert$,
        which is given a list of sets $A_1,\,...,\,A_{\setCount}$ with approximate set size $m_i$ for
        $A_i$ with $m_i\in \left((1-\beta_L)|A_i|,\, (1+\beta_R)|A_i|\right)$, and biased
        random
        generators with $\prob\left(x=\randomElm(A_i)\right)\in
        \left[{1-\alpha_L\over |A_i|},\, {1+\alpha_R\over |A_i|}\right]$ for each input
        set $A_i$ and element $x\in A_i,$ where $i=1,\, 2,\, ...,\, \setCount$. The approximation ratio for
        $\arrowvert A_1\cup A_2\cup...\cup A_m\arrowvert$ is in the range
        $\left[(1-\epsilon)(1-\alpha_L)(1-\beta_L),\,
        (1+\epsilon)(1+\alpha_R)(1+\beta_R)\right]$ for any $\epsilon\in (0,\,1)$,
        where $\alpha_L,\, \alpha_R,\, \beta_L,\,\beta_R\in (0,\,1)$. The complexity
        of the algorithm is measured by both
        time complexity and round complexity. The algorithm is allowed to
        make multiple membership queries and get random elements from the
        input sets in one round.  Our algorithm makes adaptive accesses to input sets with multiple rounds.
        Our algorithm gives an approximation scheme with $\bigO(\setCount\cdot(\log \setCount)^{\bigO(1)})$ running time and $\bigO(\log \setCount)$ rounds,
        where $\setCount$ is the number of sets. Our algorithm can handle input sets that can generate random elements with bias,
        and its approximation ratio depends on the
        bias. Our algorithm gives a flexible tradeoff with time complexity $\bigO\left(\setCount^{1+\xi}\right)$ and round complexity $\bigO\left({1\over \xi}\right)$ for any $\xi\in(0,\,1)$.
        We prove that our algorithm runs sublinear in time under certain condition that each element in $A_1\cup A_2\cup...\cup A_{\setCount}$ belongs to $\setCount^a$ for
        any fixed $a>0$. A $\bigO\left(r(r+l|\lambda|)^3l^3d^4\right)$ running time dynamic programming algorithm is proposed to deal with an interesting problem in number
        theory area that is to count the number of lattice points in a $d-$dimensional ball $B_d(r,\, p,\, d)$ of radius $r$ with center at $p\in D(\lambda,\, d,\, l)$,
        where $D(\lambda,\,d,\, l)=\{ (x_1,\,\cdots,\, x_d) : (x_1,\,\cdots,\, x_d)$ with $x_k=i_k+j_k\lambda$ for an integer $j_k\in [-l,\,l]$, and another arbitrary integer $i_k$ for $k=1,\, 2,\, ...,\, d\}$.
        We prove that it is $\#$P-hard to count the number of lattice points in a set of balls, and we also show that there is no polynomial time algorithm to approximate the number
        of lattice points in the intersection of $n$-dimensional  balls unless P=NP.
    \end{abstract}


    \section{Introduction}
    Computing the cardinality of set union is a basic algorithmic
    problem that has a simple and natural definition. It is related to
    the following problem: given a list of sets $A_1,\,...,\,A_{\setCount}$
    with set size $|A_i|$, and random generators  $\randomElm(A_i)$ for
    each input set $A_i$, where $i=1,\, 2,\, ...,\, \setCount,$ compute
    $\arrowvert A_1\cup A_2\cup...\cup A_{\setCount}\arrowvert$. This
    problem is $\#$P-hard if each set is $0,\,1$-lattice points in a high
    dimensional cube \cite{Valiant}. Karp, Luby, and Madras
    \cite{KarpLubyMadras89} developed a $(1+\epsilon)$-randomized
    approximation algorithm to improve the runnning time for
    approximating the number of distinct elements in the union
    $A_1\cup\cdots\cup A_{\setCount}$ to linear
    $\bigO((1+\epsilon)\setCount/\epsilon^2)$ time. Their algorithm is
    based on the input that provides the exact size of each set and an
    uniform random element generator of each set. Bringmann and
    Friedrich~\cite{BringmannFriedrich10} applied Karp, Luby, and
    Madras' algorithm in deriving approximate algorithm for high
    dimensional geometric object with uniform random sampling. They also
    proved that it is \#P-hard to compute the volume of the intersection
    of high dimensional boxes, and showed that there is no polynomial
    time $2^{d^{1-\epsilon}}$-approximation unless NP=BPP. In the
    algorithms mentioned above, some of them were based on random
    sampling, and some of them provided exact set sizes when approximating
    the cardinalities of multisets of data and some of them dealt with
    two multiple sets. However, in realty, it is really hard to give an
    uniform sampling or exact set size especially when deal with high
    dimensional problems.

    A similar problem has been studied in the streaming model: given a
    list of elements with multiplicity, count the number of distinct
    items in the list.  This problem has a more general format to
    compute frequency moments $F_k=\sum\limits_{i=1}^{m}n_{i}^{k}$,
    where $n_i$ denotes the number of occurrences of $i$ in the sequence.
    This problem has received a lot of attention in the field of
    streaming
    algorithms~\cite{AlonMatiasSzegedy96,BK02,YossefJayramKumarSivakumar02,B18,FFGM07,FlajoletMartin85,GangulyGarofalakisRastogi04,Gibbons01,GibbonsTirthapura01,HaasNaughtonSeshadriStokes95,HTY14,KNM10}.

    {\bf Motivation:} The existing approximate set union
    algorithm~\cite{KarpLubyMadras89} needs each input set has a uniform
    random generator. In order to have approximate set union algorithm
    with broad application, it is essential to have algorithm with
    biased random generator for each input set, and see how
    approximation ratio depends on the bias. In this paper, we propose a
    randomized approximation algorithm to approximate the size of set
    union problem by extending the model used
    in~\cite{KarpLubyMadras89}. In order to show why approximate
    randomization method is useful, we generalize the algorithm that was
    designed by Karp, Luby, and Madras \cite{KarpLubyMadras89} to an
    approximate randomization algorithm. A natural problem that counting
    of lattice points in d-dimensional ball is discussed to support the
    useful of approximate randomization algorithm. In our algorithm,
    each input set $A_i$ is a black box that can provide its size
    $|A_i|$, generate a random element $\randomElm(A_i)$ of $A_i$, and
    answer the membership query $(x\in A_i?)$ in $O(1)$ time. Our
    algorithm can handle input sets that can generate random elements
    with bias with $\prob(x=\randomElm(A_i))\in \left[{1-\alpha_L\over
        |A_i|},\,{1+\alpha_R\over |A_i|}\right]$ for each input set $A_i$ and approximate set size $m_i$ for $A_i$ with
    $m_i\in [(1-\beta_L)|A_i|,\, (1+\beta_R)|A_i|]$.

    As the communication complexity is becoming important in distributed
    environment, data transmission among variant machines may be more
    time consuming than the computation inside a single machine. Our
    algorithm complexity is also measured by the number of rounds. The
    algorithm is allowed to make multiple membership queries and get
    random elements from the input sets in one round. Our algorithm
    makes adaptive accesses to input sets with multiple rounds.  The
    round complexity is related a distributed computing complexity if
    input sets are stored in a distributed environment, and the number
    of rounds indicates the complexity of interactions between a central
    server, which runs the algorithm to approximate the size of set
    union, and clients, which save one set each.

    Computation  via bounded queries to another set has been well
    studied in the field of structural complexity theory. Polynomial
    time truth table reduction has a parallel way to access oracle with
    all queries to be provided in one round~\cite{BussHay-structures}.
    Polynomial time Turing reduction has a sequential way to access
    oracle by providing a query and receiving an answer in one
    round~\cite{Cook-NP-complete}. The constant-round truth table
    reduction (for example, see~\cite{FortnowReingold-tr}) is between
    truth table reduction, and Turing reduction. Our algorithm is
    similar to a bounded round truth table reduction to input sets to
    approximate the size set union. Karp, Luby, and Madras
    \cite{KarpLubyMadras89}'s algorithm runs like a Turing reduction
    which has the number of adaptive queries proportional to the time.

    We design approximation scheme for the number of lattice points in a
    $d$-dimensional ball with its center in $D(\lambda,\, d,\, l)$, where
    $D(\lambda,\,d,\, l)$ to be the set points $p_d=(x_1,\,\cdots,\, x_d)$ with
    $x_i=i+j\lambda$ for an integer $j\in [-l,\, l]$,  another arbitrary
    integer $i$, and an arbitrary real number $l$. It returns an
    approximation in the range $[{(1-\beta) C(r,\, p,\, d)},\,{(1+\beta) C(r,\,
        p,\, d)}]$ in a time $\poly\left(d,\, {1\over \beta},\, |l|,\,|\lambda|\right)$, where
    $C(r,\, p,\,d)$ is the number of lattice points in a $d$-dimensional
    ball with radius $r$ and center $p\in D(\lambda,\, d,\, l)$. We also
    show how to generate a random lattice point in a $d$-dimensional
    ball with its center at $D(\lambda,\, d,\, l)$. It generates each
    lattice point inside the ball with a probability in $\left[{1-\alpha\over
        C(r, p, d)},\,{1+\alpha\over C(r, p, d)}\right]$ in a time $\poly\left(d,\,{1\over
        \alpha},\, |l|,\,|\lambda|,\,\log r\right)$, where the $d$-dimensional ball has
    radius $r$ and center $p\in D(\lambda,\, d,\, l)$. Without the condition
    that a ball center is inside $D(\lambda,\,d,\, l)$, counting the number
    of lattice points in a ball may have time time complexity that
    depends on dimension number $d$ exponentially even the radius is as
    small as $d$. Counting the number of lattice points inside a four
    dimensional ball efficiently implies an efficient algorithm to
    factorize the product of two prime numbers ($n=pq$) as
    $C(\sqrt{n},\,(0,\,...,\,0),\, 4)-C(\sqrt{n-1},\,(0,...,0),\, 4)=8(1+pA+q+n)$
    (see~\cite{AndrewsEkhadZeilberger93, Jacobi69}). Therefore, a fast
    exact counting lattice points inside a four dimensional ball implies
    a fast algorithm to crack RSA public key system.

    This gives a natural example to apply our approximation scheme to
    the number of lattice points in a list of balls. We prove that it is
    $\#$P-hard to count the number of lattice points in a set of balls,
    and we also show that there is no polynomial time algorithm to
    approximate the number of lattice points in the intersection
    n-dimensional balls unless P=NP.  We found that it is an elusive
    problem to develop a $\poly\left(d,\,{1\over \epsilon}\right)$ time
    $(1+\epsilon)$-approximation algorithm for the number of lattice
    points of $d$-dimensional ball with a small radius. We are able to
    handle the case with ball centers in $D(\lambda,\, d,\, l)$, which can
    approximate an arbitrary center by adjusting parameters $\lambda$
    and $l$. This is our main technical contributions about lattice
    points in a high dimensional ball.

    It is a classical problem in analytic number theory for counting the
    number of lattice points in d-dimensional ball, and has been studied
    in a series of
    articles~\cite{SD91,Beck89,KR67,Chen63,KI67,hafner81,Brown99,hueley93,Mazo90,meyer13,Szeg26,Tsang00,AM79,Vi63,Walfisz63,Walfisz57,yudin67}
    in the field of number theory. Researchers are interested in both
    upper bounds and lower bounds for the error term
    $E_d(r)=N_d(r)-\pi^{\frac{d}{2}}\Gamma(\frac{1}{2}d+1)^{-1}r^d,$
    where $N_d(r)=\#\{x\in \mathbb{Z}^d: |x|\leq r\}$  is the number of
    lattice points inside a sphere of radius $r$ centered at the origin
    and $\pi^{\frac{d}{2}}\Gamma(\frac{1}{2}d+1)^{-1}r^d$ (where
    $\Gamma(.)$ is Gamma Function) is the volume of a
    $d-dimensional$ sphere of radius $r$. When $d=2$, the problem is
    called \lq\lq Gauss Circle Problem\rq\rq; Gauss proved that $E_2(r)\leq r$.
    Gauss's bound was improved in papers~\cite{KI67, hafner81, hueley93}. Walfisz \cite{Walfisz57} showed
    that $E_d(r)=\Omega_{\pm}(r^{d-2})$ and $E_d(r)\leq r^{d-2}$, where
    $f(x)=\Omega_{+}(F(x))(f(x)=\Omega_{-}(F(x)))$ as
    $x\rightarrow\infty$ if there exist a sequence
    $\{x_n\}\rightarrow\infty$ and a positive number $C$, such that for
    all $n\geq1,$ $f(x_n)>C|F(x_n)|$ ($f(x_n)<-C|F(x_n)|$). Most of the
    above results focus on the ball centered at the origin, and few
    papers worked on variable centers but also consider fixed dimensions
    and radii going to infinity \cite{Beck89,KR67,AM79,yudin67}.

    {\bf Our Contributions:} We have the following contributions to
    approximate the size of set union. 1. It has constant number of
    rounds to access the input sets. This reduces an important
    complexity in a distributed environment where each set stays a
    different machine. It is in contrast to the existing algorithm that
    needs $\Omega(\setCount)$ rounds in the worst case. 2. It handles
    the approximate input set sizes and biased random sources. The
    existing algorithms assume uniform random source from each set. Our
    approximation ratio depends on the approximation ratio for the input
    set sizes and bias of random generator of each input set. The
    approximate ratio for $|A_1\cup A_2\cup \cdots\cup A_{\setCount}|$
    is controlled in the range in $\left[(1-\epsilon)(1-\alpha_L)(1-\beta_L),\,
    (1+\epsilon)(1+\alpha_R)(1+\beta_R)\right]$ for any $\epsilon\in (0,\,1)$,
    where $\alpha_L,\, \alpha_R,\, \beta_L,\,\beta_R\in (0,\,1)$. 3. It runs in
    sublinear time when each element belongs to at least $\setCount^a$
    sets for any fixed $a > 0.$ We have not seen any sublinear results
    about this problem. 4. We show a tradeoff between the number of
    rounds, and the time complexity. It takes $\log m$ rounds with
    time complexity $\bigO\left(\setCount(\log
    \setCount)^{O(1)}\right)$, and takes
    $\bigO\left(\frac{1}{\xi}\right)$ rounds, with a time complexity
    $\bigO\left(\setCount^{1+\xi}\right)$. We still maintain the time
    complexity nearly linear time in the classical model. Our algorithm
    is based on a new approach that is different from that in
    \cite{KarpLubyMadras89}. 5. We identify two additional parameters
    $z_{min}$ and $z_{max}$ that affect both the complexity of rounds
    and time, where $z_{min}$ is the least number of sets that an
    element belongs to, and $z_{max}$ is the largest number of sets that
    an element belongs to.

    Our algorithm developed in the randomized model only accesses a
    small number of elements from the input sets. The algorithm
    developed in the  streaming model algorithm accesses all the
    elements from the input sets. Therefore, our algorithm is
    incomparable with the results in the streaming
    model~\cite{AlonMatiasSzegedy96,BK02,YossefJayramKumarSivakumar02,B18,FFGM07,FlajoletMartin85,GangulyGarofalakisRastogi04,Gibbons01,GibbonsTirthapura01,
        HaasNaughtonSeshadriStokes95,HTY14,KNM10}.

    {\bf Organization:} The rest of paper is organized as follows. In
    Section~\ref{model-sec}, we define the computational model and
    complexity. Section~\ref{overview-sec} presents some theorems that
    play an important role in accuracy analysis. In Section~\ref{alg},
    we give a randomized approximation algorithm to approximate the size
    of set union problem; time complexity and round complexity also
    analysis in Section~\ref{alg}. Section~\ref{random-lattice-section}
    discusses a natural problem that counting of lattice points in high
    dimensional balls to support the useful of approximation randomized
    algorithm. An application of high dimensional balls in Maximal
    Coverage gives in Section~\ref{ser}.
    In Section~\ref{concl}, we summarize with conclusions.

\section{Computational Model and Complexity}\label{model-sec}
In this section, we show our model of computation, and the
definition of complexity.

\subsection{Model of Randomization}

\begin{definition}
    Let $A$ be a set of elements.
    \begin{enumerate}
        \item
        A {\it $\alpha$-biased random generator} for set $A$ is a generator
        that each element in $A$ is generated with probability in the range
        $\left[{1-\alpha\over |A|},\,{1+\alpha\over |A|}\right]$.
        \item
        A {\it $(\alpha_L,\alpha_R)$-biased random generator} for set $A$ is
        a generator that each element in $A$ is generated with probability
        in the range $\left[{1-\alpha_L\over |A|},\,{1+\alpha_R\over
            |A|}\right]$.
    \end{enumerate}
\end{definition}

\begin{definition}\label{input-list-def}
    Let $L$ be a list of sets $A_1,\,A_2,\,\cdots,\, A_{\setCount}$ such that
    each supports the following operations:
    \begin{enumerate}
        \item
        The size of $A_i$ has an approximation $m_i\in \left[(1-\beta_L)|A_i|,\,
        (1+\beta_R)|A_i|\right]$ for $i=1,\,2,\,\cdots,\,\setCount$. Both $M=\sum\limits_{i=1}^\setCount m_i$ and $m$
        are part of the input.
        \item
        Function \randomElm$(A_i)$ returns a $(\alpha_L,\,\alpha_R)$-biased
        approximate random element $x$ from $A_i$ for $i=1,\,2,\,\cdots,\,
        \setCount$.
        \item
        Function query$(x,\, A_i$) function returns $1$ if $x\in A_i$, and $0$
        otherwise.
    \end{enumerate}
\end{definition}

\begin{definition}\label{app-list-def}
    For a list $L$ of sets $A_1,\,A_2,\,\cdots,\,A_{\setCount}$ and real numbers
    $\alpha_L,\,\alpha_R,\,\beta_L,\,\beta_R\in [0,\,1)$, it is called
    $\left((\alpha_L,\,\alpha_R),\,(\beta_L,\,\beta_R)\right)$-list if each set $A_i$
    is associated with a number $m_i$ with $(1-\beta_L)|A_i|\le m_i\le
    (1+\beta_R)|A_i|$ for $i=1,\,2,\,\cdots,\,\setCount$, and the set $A_i$
    has a $(\alpha_L,\,\alpha_R)$-biased random generator
    \randomElm($A_i$).
\end{definition}

\begin{definition}\label{model-def}
    The model of randomized computation for our algorithm is defined below:
    \begin{enumerate}
        \item
        The input is a list $L$ defined in Definition~\ref{input-list-def}.
        \item
        It allows all operations defined in Definition~\ref{input-list-def}.
    \end{enumerate}
\end{definition}

\subsection{Round and Round Complexity}


The {\it round complexity} is the total number of rounds used in the
algorithm. Our algorithm has several rounds to access input sets. At each round, the
algorithm send multiple requests to random generators, and
membership queries, and receives the answers from them.

Our algorithm is considered as a client-server interaction (see Fig. 1). The
algorithm is controlled by the server side, and each set is a
client. In {\it one round}, the server asks some questions to
clients which are selected.
\begin{figure}[H]
    \centering
    \includegraphics[width=0.5\textwidth]{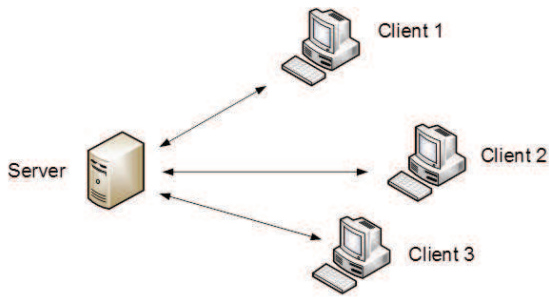} 
    \caption{Client-server Interaction}
\end{figure}

The  parameters $\setCount,\, \epsilon,\, \gamma$ may be used to
determine the time complexity and round complexity, where $\epsilon$ controls
the accuracy of approximation, $\gamma$ controls the failure
probability, and $\setCount$ is the number of sets.

\section{Preliminaries}\label{overview-sec}
During the accuracy analysis, Hoeffiding Inequality
\cite{Hoeffding63} and Chernoff Bound
(see~\cite{MotwaniRaghavan00}) play an important role. They show how the number
of samples determines the accuracy of approximation.




\begin{theorem}[\cite{Hoeffding63}]\label{hoeffiding-theorem}
    Let $X_1,\,\ldots,\, X_{\setCount}$ be $\setCount$ independent
    random variables in $[0,\,1]$ and $X=\sum\limits_{i=1}^{\setCount} X_i$.

    i. If $X_i$ takes $1$ with probability at most $p$ for $i=1,\,\ldots,\,
    \setCount$, then for any $\epsilon>0$,
    $\Pr(X>p\setCount+\epsilon \setCount)<e^{-{\epsilon^2
            \setCount\over 2}}$.

    ii. If  $X_i$ takes $1$ with probability at least $p$ for
    $i=1,\,\ldots ,\, \setCount$, then for any $\epsilon>0$,
    $\Pr(X<p\setCount-\epsilon \setCount)<e^{-{\epsilon^2
            \setCount\over 2}}$.
\end{theorem}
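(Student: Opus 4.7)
The statement is the one-sided Hoeffding/Chernoff inequality for bounded independent variables, so my plan is to carry out the standard exponential-moment argument and show that the constant in the exponent works out to $1/2$. For part (i), fix $\epsilon>0$ and any parameter $t>0$. By monotonicity of $e^{tx}$ and Markov's inequality,
\[
\Pr(X > p\setCount + \epsilon \setCount) \;=\; \Pr\bigl(e^{tX} > e^{t(p+\epsilon)\setCount}\bigr) \;\le\; \frac{\mathbb{E}[e^{tX}]}{e^{t(p+\epsilon)\setCount}}.
\]
By independence, $\mathbb{E}[e^{tX}] = \prod_{i=1}^{\setCount} \mathbb{E}[e^{t X_i}]$, so the whole task reduces to bounding the moment generating function of each $X_i$.

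Next I would invoke Hoeffding's lemma: for a random variable $Y$ supported in $[0,1]$ with mean $\mu=\mathbb{E}[Y]$, one has $\mathbb{E}[e^{t(Y-\mu)}] \le e^{t^2/8}$. This is the standard step that supplies the constant; I would prove it in the usual way by writing $e^{ty}$ as a convex combination along the chord through $y=0$ and $y=1$, bounding $\mathbb{E}[e^{tY}] \le \mu e^{t} + (1-\mu)$, and then showing $\log(\mu e^{t}+(1-\mu)) - t\mu \le t^2/8$ by calculus on the auxiliary function $\varphi(s)=\log(1-\mu+\mu e^{s})-s\mu$ (its second derivative is at most $1/4$, Taylor expand around $0$). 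Under the hypothesis that $X_i$ takes the value $1$ with probability at most $p$ (and otherwise lies in $[0,1]$), we have $\mathbb{E}[X_i]\le p$, hence
\[
\mathbb{E}[e^{t X_i}] \;\le\; e^{t\,\mathbb{E}[X_i]+t^2/8} \;\le\; e^{tp+t^2/8}.
\]

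Substituting back and multiplying over $i$ gives
\[
\Pr(X > p\setCount + \epsilon \setCount) \;\le\; e^{-t\epsilon \setCount + t^2 \setCount/8}.
\]
I would then optimize over $t>0$; the exponent is minimized at $t=4\epsilon$, producing the bound $e^{-2\epsilon^2 \setCount}$, which is in fact \emph{stronger} than the claimed $e^{-\epsilon^2 \setCount/2}$. So the stated inequality follows. Part (ii) follows by the same argument applied to $1-X_i$, or more directly by symmetrizing: if $\Pr[X_i=1]\ge p$ equivalently $\mathbb{E}[X_i]\ge p$, then $Y_i := 1-X_i$ are $[0,1]$-valued with $\mathbb{E}[Y_i] \le 1-p$, and the event $\{X < p\setCount - \epsilon \setCount\}$ becomes $\{\sum Y_i > (1-p)\setCount + \epsilon \setCount\}$, to which part (i) applies verbatim.

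The main obstacle, and essentially the only non-bookkeeping step, is the MGF bound for bounded variables; everything else is the Chernoff template. The subtlety in presentation is that the hypothesis is phrased as ``takes $1$ with probability at most $p$'' rather than ``has mean at most $p$'', but since each $X_i\in[0,1]$, one has to argue (or simply assume, consistent with the standard reading of Hoeffding's theorem) that this translates into the mean bound $\mathbb{E}[X_i]\le p$ needed above. Once that is granted, the proof is a routine four-line calculation, and the constant $1/2$ comes out with room to spare.
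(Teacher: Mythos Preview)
The paper does not prove this theorem; it is stated as a citation to Hoeffding's 1963 paper and used as a black box. So there is no ``paper's proof'' to compare against. Your argument is the standard exponential-moment proof of Hoeffding's inequality and is correct, yielding the sharper bound $e^{-2\epsilon^2\setCount}$ as you note.

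Your flagging of the hypothesis wording is well taken: for a general $[0,1]$-valued variable, ``takes $1$ with probability at most $p$'' does \emph{not} imply $\mathbb{E}[X_i]\le p$ (e.g.\ $X_i\equiv 1/2$ has $\Pr[X_i=1]=0$). In the paper's actual applications the $X_i$ are always indicator variables (membership of a random sample in some set), so the intended reading is clearly the Bernoulli case, where the hypothesis is equivalent to the mean bound and your proof goes through verbatim. The statement as written is slightly sloppy, but your proof is fine for the intended use.
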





\begin{theorem}\label{chernoff1-theorem}
    Let $X_1,\,\ldots,\, X_{\setCount}$ be $\setCount$ independent random $0$-$1$ variables,
    where $X_i$ takes $1$ with probability at least $p$ for $i=1,\,\ldots
    ,\, \setCount$. Let $X=\sum\limits_{i=1}^\setCount X_i$, and $\mu=E[X]$. Then for any
    $\delta>0$,
    $\Pr(X<(1-\delta)p\setCount)<e^{-{1\over 2}\delta^2 p\setCount}$.
\end{theorem}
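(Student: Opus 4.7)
The plan is to apply the standard exponential-moment (Chernoff) technique to the lower tail of $X$. I would start from Markov's inequality applied to $e^{-tX}$ for an unspecified $t>0$, so that
\[
\Pr\bigl(X<(1-\delta)p\setCount\bigr)\le E[e^{-tX}]\,e^{t(1-\delta)p\setCount},
\]
and then bound $E[e^{-tX}]$ using independence together with a per-variable moment estimate. The hypothesis $\Pr(X_i=1)\ge p$ is used only in the per-variable estimate, so no explicit coupling to iid Bernoulli$(p)$ variables is needed.

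For the per-variable bound, I would observe that when $t>0$ we have $1-e^{-t}>0$, hence
\[
E[e^{-tX_i}] = 1-\Pr(X_i=1)(1-e^{-t}) \le 1-p(1-e^{-t}) \le e^{-p(1-e^{-t})},
\]
using $\Pr(X_i=1)\ge p$ and then $1+x\le e^x$. By independence, $E[e^{-tX}]\le e^{-\setCount p(1-e^{-t})}$, and substituting into Markov gives
\[
\Pr\bigl(X<(1-\delta)p\setCount\bigr) \le \exp\!\bigl(\setCount p\,[t(1-\delta)-(1-e^{-t})]\bigr).
\]
I would then minimize the exponent by choosing $t=-\ln(1-\delta)$, which is positive for $\delta\in(0,1)$; plugging in $e^{-t}=1-\delta$ collapses the bound to $\exp\bigl(-\setCount p\,[(1-\delta)\ln(1-\delta)+\delta]\bigr)$.

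The last step is the elementary inequality $(1-\delta)\ln(1-\delta)+\delta\ge \delta^2/2$ for $\delta\in(0,1)$, which I would verify by noting that both sides vanish at $\delta=0$ and that the derivative of the left side, $-\ln(1-\delta)$, dominates the derivative $\delta$ of the right side (since $-\ln(1-\delta)=\delta+\delta^2/2+\cdots\ge\delta$). This yields $\Pr(X<(1-\delta)p\setCount)\le e^{-\setCount p\delta^2/2}$. For $\delta\ge 1$ the bound is trivial, because $(1-\delta)p\setCount\le 0$ while $X\ge 0$. I do not expect any real obstacle in the argument; the only subtle point is being sure that the hypothesis gives a lower bound on $\Pr(X_i=1)$ that translates into an upper bound on $E[e^{-tX_i}]$ in the correct direction, which works because $1-e^{-t}>0$.
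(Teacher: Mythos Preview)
Your argument is correct: this is the standard exponential-moment derivation of the Chernoff lower-tail bound, and each step (Markov on $e^{-tX}$, the per-variable estimate using $\Pr(X_i=1)\ge p$ and $1+x\le e^x$, the optimization $t=-\ln(1-\delta)$, and the elementary inequality $(1-\delta)\ln(1-\delta)+\delta\ge\delta^2/2$) is sound. The strict inequality in the statement can be recovered from the strictness of $1+x<e^x$ for $x\neq 0$ (or from the strict inequality $-\ln(1-\delta)>\delta$), and your handling of $\delta\ge 1$ is fine since then the event has probability~$0$.

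There is nothing to compare against in the paper: the theorem is quoted in the preliminaries as a known Chernoff bound (with a reference to a standard textbook) and is not proved there. Your write-up supplies exactly the textbook proof one would expect.
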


\begin{theorem}\label{ourchernoff2-theorem}
    Let $X_1,\,\ldots,\, X_\setCount$ be $\setCount$ independent random $0$-$1$ variables,
    where $X_i$ takes $1$ with probability at most $p$ for $i=1,\,\ldots,\,
    \setCount$. Let $X=\sum\limits_{i=1}^\setCount X_i$. Then for any $\delta>0$,
    $\Pr(X>(1+\delta)p\setCount)<\left[{e^{\delta}\over
        (1+\delta)^{(1+\delta)}}\right]^{p\setCount}$.
\end{theorem}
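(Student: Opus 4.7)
The plan is to prove this as a standard exponential-Markov (Chernoff) bound, with one small twist to accommodate the fact that each $X_i$ is only upper-bounded (not fixed) in its success probability. First I would introduce a parameter $t>0$ and apply Markov's inequality to $e^{tX}$:
\[
\Pr(X>(1+\delta)p\setCount) \;=\; \Pr\bigl(e^{tX}>e^{t(1+\delta)p\setCount}\bigr) \;\le\; \frac{E[e^{tX}]}{e^{t(1+\delta)p\setCount}}.
\]
By independence of the $X_i$'s, the moment generating function factors: $E[e^{tX}]=\prod_{i=1}^{\setCount} E[e^{tX_i}]$.

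Next I would bound each factor. Writing $p_i=\Pr(X_i=1)\le p$, we have $E[e^{tX_i}]=1+p_i(e^t-1)$. Using $1+x\le e^x$ gives $E[e^{tX_i}]\le e^{p_i(e^t-1)}$. Since $t>0$ forces $e^t-1>0$, the upper bound $p_i\le p$ yields $E[e^{tX_i}]\le e^{p(e^t-1)}$, and hence $E[e^{tX}]\le e^{\setCount p(e^t-1)}$. Substituting back gives
\[
\Pr(X>(1+\delta)p\setCount) \;\le\; \exp\bigl(\setCount p(e^t-1) - t(1+\delta)p\setCount\bigr).
\]

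Finally I would optimize the free parameter $t$ by choosing $t=\ln(1+\delta)$, which is positive since $\delta>0$, so the monotonicity step above is justified. At this choice $e^t-1=\delta$ and $t(1+\delta)=(1+\delta)\ln(1+\delta)$, so the exponent becomes $p\setCount\bigl[\delta-(1+\delta)\ln(1+\delta)\bigr]$, giving exactly
\[
\Pr(X>(1+\delta)p\setCount) \;\le\; \left[\frac{e^{\delta}}{(1+\delta)^{(1+\delta)}}\right]^{p\setCount},
\]
as claimed. The only subtle point, and the main \emph{conceptual} step, is the reduction from variable $p_i\le p$ to the uniform bound $p$: this relies crucially on the sign of $e^t-1$, which is why the argument is carried out with $t>0$ throughout. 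Everything else is a routine calculation mirroring the classical Chernoff derivation found in \cite{MotwaniRaghavan00}.
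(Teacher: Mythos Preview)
Your proof is correct and is exactly the standard Chernoff derivation. Note, however, that the paper does not actually supply a proof of this theorem: it is stated in the preliminaries as a known result, with a pointer to \cite{MotwaniRaghavan00}, so there is no ``paper's own proof'' to compare against---your argument simply fills in the textbook derivation from that reference. One cosmetic point: as written your chain of inequalities yields $\le$ rather than the strict $<$ in the statement; if you care about matching the strict inequality, observe that $1+x<e^x$ for $x\neq 0$, so the bound $E[e^{tX_i}]\le e^{p(e^t-1)}$ is strict whenever some $p_i>0$, and the degenerate case $p=0$ (or all $p_i=0$) gives left side $0$ and right side $1$.
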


Define $g_1(\delta)=e^{-{1\over 2}\delta^2}$ and
$g_2(\delta)={e^{\delta}\over (1+\delta)^{(1+\delta)}}$. Define
$g(\delta)=\max\left(g_1(\delta),\, g_2(\delta)\right)$. We note that
$g_1(\delta)$ and $g_2(\delta)$ are always strictly less than $1$
for all $\delta>0$. It is trivial for $g_1(\delta)$. For
$g_2(\delta)$, this can be verified by checking that the function
$f(x)=(1+x)\ln (1+x)-x$ is increasing and $f(0)=0$. This is because
$f'(x)=\ln (1+x)$ which is strictly greater than $0$ for all $x>0$.

We give a bound for ${e^{\delta}\over (1+\delta)^{(1+\delta)}}$. Let
$u(x)={e^x\over (1+x)^{(1+x)}}$. We consider the case $x\in [0,\,1]$.
We have
\begin{eqnarray*}
    \ln u(x)&=&x-(1+x)\ln (1+x)
    \le x-(1+x)(x-{x^2\over 2})
    =x-(x+{x^2\over 2}-{x^3\over 3})
    \le-{x^2\over 6}.
\end{eqnarray*}
Therefore,
\begin{eqnarray}
u(x)\le e^{-{x^2\over 6}}\label{u(x)-ineqn}
\end{eqnarray}
for all $x\in [0,\,1]$.  We let
\begin{eqnarray}
g^*(x)= e^{-{x^2\over 6}}.\label{g*(x)-eqn}
\end{eqnarray}
We have $g(x)\le g^*(x)$ for all $x\in [0,\,1]$.

A well known fact, called union bound, in probability theory is the
inequality
$$\Pr(E_1\cup E_2 \ldots \cup E_\setCount)\le
\Pr(E_1)+\Pr(E_2)+\ldots+\Pr(E_\setCount),$$ where $E_1,\,E_2,\,\ldots,\, E_{\setCount}$ are
$\setCount$ events that may not be independent. In the analysis of our
randomized algorithm, there are multiple events such that the
failure from any of them may fail the entire algorithm. We often
characterize the failure probability of each of those events, and
use the above inequality to show that the whole algorithm has a
small chance to fail after showing that each of them has a small
chance to fail.

\section{Algorithm Based on Adaptive Random Samplings
}\label{alg}

In this section, we develop a randomized algorithm for the size of
set union when the approximate set sizes and biased random
generators are given for the input sets. We give some definitions
before the presentation of the algorithm. The algorithm developed in
this section has an adaptive way to access the random generators
from the input sets. All the random elements from input sets are
generated in the beginning of the algorithm, and the number of
random samples is known in the beginning of the algorithm. The
results  in this section show a tradeoff between the time complexity
and the round complexity.

\begin{definition}\label{partition-thickness-defn}
    Let $L=A_1,A_2,\cdots, A_{\setCount}$ be a list of finite sets.
    \begin{enumerate}
        \item
        For an element $x$, define $T(x,L)=\big|\{i: 1\le i\le \setCount$  and
        $x \in A_i\}\big|$.
        \item
        For an element $x$, and a subset of indices with multiplicity
        $H$ of $\{1,2,\cdots, \setCount\}$, define $S(x, H)=\big|\{i: i\in
        H$ and $x\in A_i\}\big|$.
        \item
        Define $\Thick(L)=\min\{T(x,L): x\in A_1\cup A_2\cup\cdots\cup
        A_{\setCount}\}$.
        \item
        Define $\ThickMax(L)=\max\{T(x,L): x\in A_1\cup A_2\cup\cdots\cup
        A_{\setCount}\}$.
        \item
        Let $W$ be a subset with multiplicity of $A_1\cup\cdots\cup
        A_{\setCount}$, define $F(W, h,s)={s\over h}\sum\limits_{x\in W}{1\over
            T(x, L)}$, and $F'(W)=\sum\limits_{x\in W}{1\over T(x, L)}={h\over s}F(W,
        h,s)$.
        \item\label{partition-def}
        For a $\delta \in (0,1)$, partition $A_1\cup A_2\cup\cdots\cup
        A_{\setCount}$ into $A_1',\cdots, A_k'$ such that $A_i'=\{x: x\in
        A_1\cup A_2\cup\cdots\cup A_{\setCount}$ and $T(x, L)\in
        [(1+\delta)^{i-1}, (1+\delta)^i)\}$ where $i=1, 2, ..., k.$
        Define $v(\delta, z_1, z_2, L)=k$, which is the
        number of sets in the partition under the condition that $z_1\le
        T(x,L)\le z_2$.
    \end{enumerate}
\end{definition}

\subsection{Overview of Algorithm}

We give an overview of the algorithm. For a list $L$ of input sets
$A_1,\cdots, A_{\setCount}$, each set $A_i$ has an approximate size
$m_i$ and a random generator. It is easy to see that $|A_1\cup
A_2\cup\cdots \cup A_{\setCount}|=\sum\limits_{i=1}^{\setCount}\sum\limits_{x\in
    A_i} {1\over T(x,L)}$. The first phase of the algorithm generates a
set $R_1$ of sufficient random samples from the list of input sets. The
set $R_1$ has the property that ${m_1+\cdots+m_{\setCount}\over
    |R_1|}\cdot \sum\limits_{x\in R_1} {1\over T(x,L)}$ is close to
$\sum\limits_{i=1}^{\setCount}\sum\limits_{x\in A_i} {1\over T(x,L)}$. We will use
the variable $sum$ with initial value zero to approximate it. Each
stage $i$ removes the set $V_i$ of  elements from $R_i$ that each
element $x \in V_i$ satisfies  $T(x,L)\in \left[{T_i\over
    4f_1(\setCount)}, T_i\right]$, and all elements $x\in R_i$ with $T(x,L)\in
\left[{T_i\over f_1(\setCount)}, T_i\right]$ are in $V_i$, where
$T_i=\max\{T(x,L): x\in R_i\}$ and $f_1(\setCount)$ is a function at
least $8$, which will determine the number of rounds, and the trade
off between the running time and the number of rounds. In phase $i$, we
choose a set $H_i$ of $u_i$ (to be large enough) of indices from
$1,\cdots, \setCount$, and use ${S(x, H_i)\setCount\over u_i}$ to
approximate $T(x, L)$. It is accurate enough if $u_i$ is large
enough. The elements left in $R_i-V_i$ will have smaller $T(x,L)$.
The set $R_{i+1}$ will be built for the next stage $i+1$. When
$R_i-V_i$ is shrinked to $R_{i+1}$ by random sampling in $R_i-V_i$,
each element in $R_{i+1}$ will have its weight to be scaled by a
factor ${|R_i-V_i|\over h_{i+1}}$. When an element $x$ is put into
$V_i$, it is removed from $R_i$, and an approximate value of
${1\over T(x,L)}$ multiplied by its weight is added to $sum$.
Finally, we will prove that $sum\cdot (m_1+\cdots +m_{\setCount})$
is close to $\sum\limits_{i=1}^{\setCount}\sum\limits_{x\in A_i} {1\over T(x,L)}$,
which is equal to $|A_1\cup A_2\cup\cdots \cup A_{\setCount}|$.

\begin{example}
    Let $L$ be a list of $10$ sets $A_1,\, A_2,\,\cdots,\, A_{10},$ where $A_i=B_i\cup C$ with $|C|=1000$ and $|B_i|=100$ for $i=1,\, 2,\, \cdots,\, 10.$ In the beginning of the algorithm, we generate a set $R_1$ of $h_1=220$ random samples from list $L,$ where there are $200$ random samples with higher thickness $T(x, \, L),$ namely, these $200$ random samples locate in $C$ and $20$ random samples with lower thickness $T(x, \, L)$, say, these $20$ random samples locate in $B_i.$ At the first round, we only need select sets $A_1,\, A_3,$ and $A_6$  to approximate the thickness $T(x, \, L)$ of the $200$ random samples locating at $C.$ Then at the second round, we have to select all the sets $A_1,\, A_2,\,\cdots,\, A_{10},$ to approximate the thickness $T(x, \, L)$ of the $20$ random samples coming from $B_i$ (See Fig. 2).
\end{example}
\begin{figure}[H]
    \centering
    \includegraphics[width=0.5\textwidth]{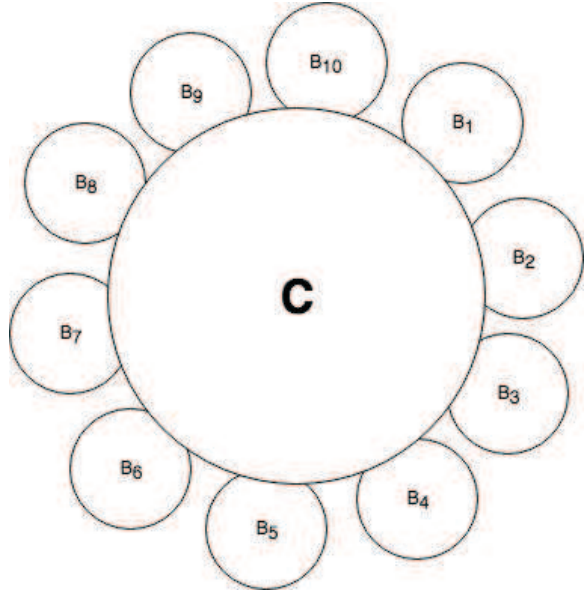} 
    \caption{Set Union of Ten Sets}
\end{figure}

\subsection{Algorithm Description}


Before giving the algorithm, we define an operation that selects a
set of random elements from a list $L$ of sets $A_1,\cdots,
A_{\setCount}$. We always assume $\setCount\ge 2$ throughout the
paper.

\begin{definition}\label{random-choice-def}
    Let $L$ be a list of $\setCount$ sets $A_1, A_2,\cdots,
    A_{\setCount}$ with $m_i\in [(1-\beta_L)|A_i|, (1+\beta_R)|A_i|]$
    and $(\alpha_L, \alpha_R)$-biased random generator \randomElm($A_i$)
    for $i=1,2,\cdots, \setCount$, and $M=m_1+m_2+\cdots+m_{\setCount}$.
    A {\it random choice} of $L$ is to get an element $x$
    via the following two steps:
    \begin{enumerate}
        \item
        With probability ${m_i\over M}$,  select a set $A_i$ among
        $A_1,\cdots, A_{\setCount}$.
        \item
        Get an element $x$ from set $A_i$ via \randomElm$(A_i)$.
    \end{enumerate}
\end{definition}
\newcommand{\fFive}{f_1}      
\newcommand{\fSix}{f_2}       
\newcommand{\fFour}{f_3}      
\newcommand{\fOne}{f_4}       
\newcommand{\fSeven}{f_5}     
\newcommand{\fTwo}{f_6}       

We give some definitions about the parameters and functions that
affect our algorithm below. We assume that $\epsilon\in (0,1)$ is
used to control the accuracy of approximation, and $\gamma\in (0,1)$
is used to control the failure probability. Both parameters are from
the input. In the following algorithm, the two integer parameters
$z_{min}$ and $z_{max}$ with $1\le z_{\min}\le \Thick(L)\le
\ThickMax(L)\le z_{\max}\le \setCount$ can help speed up the
computation. The algorithm is still correct if we use default case
with $z_{min}=1$ and $z_{max}=\setCount$.

\begin{enumerate}
    \item The following parameters are used to control the accuracy
    of approximation at different stages of algorithm:
    \begin{eqnarray}
    \epsilon_0&=&{\epsilon\over 9},  \epsilon_1={\epsilon_0\over 6(\log {\setCount})}, \epsilon_2={\epsilon_1\over 4}, \epsilon_3={\epsilon_0\over 3},\label{epsilons-eqn}\\
    \delta&=&{\epsilon_2\over 2}\label{delta-eqn}.
    \end{eqnarray}
    \item
    The following parameters are used to control the failure probability
    at several stages of the algorithm:
    \begin{eqnarray}
    \gamma_1&=&{\gamma\over 3}, 
    \gamma_2={\gamma\over 6\log {\setCount}}.
    \end{eqnarray}
    \item
    Function $\fFive(.)$ is used to control the number of rounds of the
    algorithm. Its growth rate is mainly determined by the parameter
    $c_1$ that will be determined later:
    \begin{eqnarray}
    \fFive(\setCount)&=&8\setCount^{c_1}\ {\rm with\ }c_1\ge
    0,\label{f5-def-eqn}
    \end{eqnarray}

\item
   Function $\fSix(.)$ is used to check  the number
   of random samples  in $A_j'$ of Stage $1$ in the algorithm.
   We will use different ways to control the accuracy of approximation between the
   case $|A_i|\le {|A_1\cup A_2\cup\cdots\cup A_{\setCount}|\over \fSix(\setCount)}$ and
    the other case $|A_i|> {|A_1\cup A_2\cup\cdots\cup A_{\setCount}|\over \fSix(\setCount)}$. It is mainly used in the proof of
   Lemma~\ref{induction-lemma} that shows it keeps the accuracy of
   approximation when algorithm goes from Stage $i$ to Stage $i+1$.

    \begin{eqnarray}
    \fSix(\setCount)&=&{2v(\delta,z_{\min},z_{\max}, L)\over \epsilon_3} + {2\log {{\setCount}\over  z_{\min}}\over \epsilon_3\log (1+\delta)}. \label{f-6-def} 
    \end{eqnarray}

\item
   Function $\fFour(.)$ is used as a threshold to count the number $t_{i,j}$
   of random samples  in $R_i\cap A_j'$ of Stage $i$ in the algorithm.
   We will use different ways to control the accuracy of approximation between the
   case $t_{i,j}\le \fFour(\setCount)$ and the other case $t_{i,j}>
   \fFour(\setCount)$. It is mainly used in the proof of
   Lemma~\ref{first-phase-lemma} that shows that the number of random samples at Stage $1$ will provide enough accuracy of
   approximation.

    \begin{eqnarray}
    \fFour(\setCount)&=&{\fFive(\setCount)}\cdot {6\ln
        {2\over
            \gamma_2}\over \epsilon_2^2 }. \label{f4-def-eqn}
    \end{eqnarray}

\item
    Function $\fOne(.)$ is used to determine the growth rate of
    function Function $\fSeven(.)$, which is defined by equation
    (\ref{fSeven-def-eqn}).

    \begin{eqnarray}
    \fOne(\setCount)&=& {\fSix(\setCount)\log {{\setCount}^2\over
            \epsilon_1}\over 6\epsilon_1^2 }+{\fFour(\setCount)\over
        \epsilon_2\fFive(\setCount)}.\label{f-1-def}
    \end{eqnarray}

    \item Function $\fSeven(.)$ determines the number of random samples from
    the input sets in the beginning of the algorithm:
    \begin{eqnarray}
    \fSeven(\setCount)&=&{m\fOne(\setCount)\over
    z_{\min}}.\label{fSeven-def-eqn}
    \end{eqnarray}
    \item The following parameter is also used to control failure
    probability in a stage of the algorithm:
    \begin{eqnarray}
    \gamma_3&=&{\gamma_2\over 2\fSeven(\setCount)}.
    \end{eqnarray}
    \item
    Function $\fTwo(.)$ affects the number of random indices in the
    range $\{1,2,\cdots, \setCount\}$. Those random indices will be used
    to choose input sets to detect the approximate $T(x,L)$ for those
    random samples $x$:

    \begin{eqnarray}
    \fTwo(\setCount)&=&\fFive(\setCount)\left({24\over \epsilon_1^2}\cdot {\ln {2\over \gamma_3}}\right). \label{f2-def-eqn}
    \end{eqnarray}
\end{enumerate}
\begin{algorithm}[H]
    \caption{ApproximateUnion($L, z_{\min}, z_{\max}, M, \gamma,
        \epsilon$)}\label{euclid}
    $\mathbf{Input:}$ $L$ is a list of $m$ sets $A_1, A_2,\cdots,
    A_{m}$ with $m\ge 2$, $m_i\in [(1-\beta_L)|A_i|,
    (1+\beta_R)|A_i|]$ and $(\alpha_L, \alpha_R)$-biased random
    generator RandomElement($A_i$) for $i=1,2,\cdots, m$, integers
    $z_{min}$ and $z_{max}$ with $1\le z_{\min}\leq minThickness(L)\leq
    maxThickness(L)\le z_{\max}\le m$, parameter $\gamma\in
    (0,1)$ to control the failure probability, parameter $\epsilon\in
    (0,1)$ to control the accuracy of approximation, and
    $M=m_1+m_2+\cdots+m_{m}$ as the sum of sizes of input sets.\\
    $\mathbf{Output:}$ $sum\cdot M.$
    \begin{algorithmic}[1]
        \State\qquad Let $h_1=f_5(m)$
        \State\qquad Let $i=1$
        \State\qquad Let currentThickness$_1=z_{max}$ \label{ct1-line}
        \State\qquad Let $s_1={m\over{currentThickness_1}}$
        \State\qquad Let $s_1'=1$
        \State\qquad Let $sum=0$
        \State\qquad Obtain a set $R_1$ of $h_1$ random choices of $L$ (see
        Definition~\ref{random-choice-def})\label{R1-select-line}
        \State\qquad Stage $i$
        \State\qquad\qquad Let $u_i=s_i\cdot f_6(m)$
        \State\qquad\qquad Select $u_i$ random indices $H_i=\{k_1,\cdots,
        k_{u_i}\}$ from $\{1,2,\cdots, m\}$
        \State\qquad\qquad Compute $S(x, H_i)$ for each $x\in R_i$
        \State\qquad\qquad Let $V_i$ be the subset of $R_i$ with elements $x$
        satisfying $S(x, H_i)\geq \frac{currentThickness_i}{2f_1(m)\cdot m}\cdot u_i$\label{alg-cond}
        \State\qquad\qquad Let $sum=sum+s_i'\sum\limits\limits_{x\in V_i} {u_i\over S(x, H_i)m}$
        \State\qquad\qquad Let $currentThickness_{i+1}={currentThickness_i\over
            f_1(m)}$\label{cti-line}
        \State\qquad\qquad Let $s_{i+1}={m\over currentThickness_{i+1}}$
        \State\qquad\qquad Let $h_{i+1}={h_1\over  s_{i+1}}$\label{hi-line}
        \State\qquad\qquad If $(|R_i|-|V_i|< h_{i+1})$
        \State\qquad\qquad Then
        \State\qquad\qquad $\{\label{if-else-beginning-line}$
        \State\qquad\qquad\qquad Let $R_{i+1}=R_i-V_i$
        \State\qquad\qquad\qquad Let $a_i=1$
        \State\qquad\qquad \}\label{if-else-middle-line}
        \State\qquad\qquad Else
        \State\qquad\qquad \{
        \State\qquad\qquad\qquad Let $R_{i+1}$ be a set of random $h_{i+1}$
        samples from $R_i-V_i$
        \State\qquad\qquad\qquad Let $a_i= {|R_i|-|V_i|\over h_{i+1}}$
        \State\qquad\qquad \}\label{if-else-ending-line}
        \State\qquad\qquad Let $s_{i+1}'=s_i'\cdot a_i$
        \State\qquad\qquad Let $i=i+1$
        \State\qquad\qquad If  ($currentThickness_i<z_{min}$) \label{until-line}
        \State\qquad\qquad\qquad Return $sum\cdot M$ and terminate the algorithm
        \State\qquad\qquad Else
        \State\qquad\qquad\qquad Enter the next Stage $i$
    \end{algorithmic}
\end{algorithm}

We let $M=m_1+m_2+\cdots+m_{\setCount}$ and $z_{\min}$ be part of
the input of the algorithm. It makes the algorithm be possible to
run in a sublinear time when $z_{\min}\ge \setCount^a$ for a fixed
$a>0$. Otherwise, the algorithm has to spend $\Omega(\setCount)$
time to compute $M$.

\subsection{Proof of Algorithm Performance}

The accuracy and complexity of algorithm ApproximateUnion(.) will be proven in the following lemmas.
Lemma~\ref{basic-facts-lemma} gives some basic properties of the
algorithm.  Lemma~\ref{first-phase-lemma} shows that $R_1$ has
random samples are used so that $F(R_1, h_1,
1)\left(\sum\limits_{i=1}^{\setCount}m_i\right)$ is an accurate approximation for
$\sum\limits_{i=1}^{\setCount}\sum\limits_{x\in A_i} {1\over T(x,L)}$.

\begin{lemma}\label{basic-facts-lemma} The algorithm ApproximateUnion(.) has the following
    properties:
    \begin{enumerate}
        \item\label{f1-ineqn}
        $g^*(\epsilon_1)^{\fOne(\setCount)\over \fSix(\setCount)}\le
        {\epsilon_1\over \setCount^2}$.
        \item\label{v-bound-line}
        $v(\delta,z_{\min}, z_{\max},L)=\bigO\left({\log {z_{\max}\over
                z_{\min}}\over \delta}\right)$.
        \item\label{f-6-def-line}
        ${2v(\delta,z_{\min}, z_{\max},L)\over \fSix(\setCount)}\le
        \epsilon_3$ and $\fSix(\setCount)=\bigO\left({\log{\setCount\over
                z_{\min}}\over \delta\epsilon_3}\right)$.
        \item\label{Ri-def-line}
        $R_i$ contains at most $h_i$ items.
        \item\label{f-3-big-line}
        $g^*(\epsilon_2)^{\fFour(\setCount)\over \fFive(\setCount)}\le {\gamma_2\over 2}$.
        \item\label{f-1-line}
        $\fOne(\setCount)=\bigO\left({1\over \epsilon^4}\left(\log{\setCount\over
        z_{\min}}\right)\cdot \log {{\setCount}\over \epsilon}\cdot
    \left(\log {\setCount}\right)^3+{(\log \setCount)^3(\ln {1\over
            \gamma}+\log\setCount)\over \epsilon^3}\right)$.

        \item\label{f-6-line}
         $\fTwo(\setCount)=\bigO\left({\setCount^{c_1}\over \epsilon^2}\cdot (\log
\setCount)^2(\log {2\over \gamma}+\log \setCount+\log
\fOne(\setCount)) \right)$.
    \end{enumerate}
\end{lemma}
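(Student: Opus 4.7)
The plan is to dispatch the seven claims by direct substitution into the definitions (\ref{epsilons-eqn})--(\ref{f2-def-eqn}) and the closed form $g^{\ast}(x)=e^{-x^{2}/6}$ from (\ref{g*(x)-eqn}), ordering the work so that the easy identities feed the asymptotic estimates.

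I would start with item (\ref{v-bound-line}), a standard bucket count: the geometric partition in Definition~\ref{partition-thickness-defn}(\ref{partition-def}) uses ratio $1+\delta$, so the number of non-empty buckets meeting $[z_{\min},z_{\max}]$ is at most $\lceil\log(z_{\max}/z_{\min})/\log(1+\delta)\rceil = \bigO(\log(z_{\max}/z_{\min})/\delta)$, using $\log(1+\delta)=\Theta(\delta)$ for $\delta\in(0,1)$. Item (\ref{f-6-def-line}) then follows by inspecting the two summands of (\ref{f-6-def}): the first summand alone forces $2v/\fSix\le\epsilon_{3}$, and combining both summands with item (\ref{v-bound-line}) and $z_{\max}\le\setCount$ gives the $\bigO(\log(\setCount/z_{\min})/(\delta\epsilon_{3}))$ estimate. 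Item (\ref{Ri-def-line}) is a one-line induction on $i$: $|R_{1}|=h_{1}$ by line~\ref{R1-select-line}, and in the inductive step the two branches in lines~\ref{if-else-beginning-line}--\ref{if-else-ending-line} either set $|R_{i+1}|=|R_{i}-V_{i}|<h_{i+1}$ by the branch test or $|R_{i+1}|=h_{i+1}$ exactly.

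Items (\ref{f1-ineqn}) and (\ref{f-3-big-line}) are algebraic identities in the exponential. For (\ref{f-3-big-line}), the definition (\ref{f4-def-eqn}) rearranges to $\fFour/\fFive = 6\ln(2/\gamma_{2})/\epsilon_{2}^{2}$, so $g^{\ast}(\epsilon_{2})^{\fFour/\fFive}=\exp(-\epsilon_{2}^{2}\fFour/(6\fFive))=\exp(-\ln(2/\gamma_{2}))=\gamma_{2}/2$. Item (\ref{f1-ineqn}) is handled the same way: dropping the second summand of (\ref{f-1-def}) yields $\fOne/\fSix\ge\log(\setCount^{2}/\epsilon_{1})/(6\epsilon_{1}^{2})$, which when substituted into $g^{\ast}(\epsilon_{1})^{\fOne/\fSix}=\exp(-\epsilon_{1}^{2}\fOne/(6\fSix))$ produces the target bound $\epsilon_{1}/\setCount^{2}$ after matching constants.

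Items (\ref{f-1-line}) and (\ref{f-6-line}) are asymptotic roll-ups of the parameter chain (\ref{epsilons-eqn})--(\ref{fSeven-def-eqn}). The scaling facts I would use are $\epsilon_{1},\epsilon_{2},\delta=\Theta(\epsilon/\log\setCount)$ and $\epsilon_{3}=\Theta(\epsilon)$, together with $\ln(1/\gamma_{2})=\bigO(\ln(1/\gamma)+\log\log\setCount)$ from $\gamma_{2}=\gamma/(6\log\setCount)$. Plugging item (\ref{f-6-def-line}) into (\ref{f-1-def}) produces the two summands in (\ref{f-1-line}) directly: one arises from the $\fSix\log(\setCount^{2}/\epsilon_{1})/\epsilon_{1}^{2}$ term and the other from $\fFour/(\epsilon_{2}\fFive)$ after cancellation. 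For (\ref{f-6-line}), substituting $\fFive(\setCount)=8\setCount^{c_{1}}$ and $\gamma_{3}=\gamma_{2}/(2\fSeven)$ into (\ref{f2-def-eqn}), together with $\ln(1/\gamma_{3})=\bigO(\log(1/\gamma)+\log\setCount+\log\fOne)$ (using $\fSeven=\setCount\fOne/z_{\min}$), gives the claim. The main obstacle is not conceptual but bookkeeping: one must carefully track how logarithmic factors compound through the nested parameters $\epsilon_{0},\epsilon_{1},\epsilon_{2},\delta$ so that the final estimates in (\ref{f-1-line}) and (\ref{f-6-line}) collapse into the stated form.
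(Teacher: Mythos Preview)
Your proposal is correct and takes essentially the same approach as the paper: each item is dispatched by direct substitution into the parameter definitions (\ref{epsilons-eqn})--(\ref{f2-def-eqn}) and the formula $g^{\ast}(x)=e^{-x^{2}/6}$, with the asymptotic items (\ref{f-1-line}) and (\ref{f-6-line}) obtained by unwinding the chain $\epsilon_{1},\epsilon_{2},\delta=\Theta(\epsilon/\log\setCount)$, $\epsilon_{3}=\Theta(\epsilon)$ through (\ref{f-1-def}) and (\ref{f2-def-eqn}). Your write-up is in fact somewhat more explicit than the paper's, which for several items simply cites the relevant equations without showing the intermediate algebra.
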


\begin{proof} The statements are easily proven according to the
    setting in the algorithm.

    Statement ~\ref{f1-ineqn}: It follows from equations~(\ref{g*(x)-eqn})
    and (\ref{f-1-def}).

    Statement ~\ref{v-bound-line}: By
    Definition~\ref{partition-thickness-defn}, we need
    $v(\delta,z_{\min}, z_{\max},L)$ with
    $z_{min}(1+\delta)^{v(\delta,z_{\min}, z_{\max},L)}\ge z_{max}$.
    Thus, we have $v(\delta,z_{\min}, z_{\max},L)\le 2\left({\log
        {z_{\max}\over z_{\min}}\over \log
        (1+\delta)}\right)=\bigO\left({\log {z_{\max}\over z_{\min}}\over
        \delta}\right)$ since $\log (1+\delta)=\Theta(\delta)$.

    Statement ~\ref{f-6-def-line}: It is easy to see that $\log
    (1+\delta)=\Theta(\delta)$ and $1\le z_{max}\le \setCount$. It
    follows from equation (\ref{f-6-def}), and
    Statement ~\ref{v-bound-line}.

    Statement ~\ref{Ri-def-line}: It follows from
    lines~\ref{if-else-beginning-line} to~\ref{if-else-ending-line} in
    the algorithm.

    Statement~\ref{f-3-big-line}: It follows from equation (\ref{f4-def-eqn}).

    Statement ~\ref{f-1-line}:
        By equation
    (\ref{f-1-def}), Statement \ref{f-6-def-line} and equation
    (\ref{epsilons-eqn}), we have
    \begin{eqnarray*}
    \fOne(\setCount)&=& {\fSix(\setCount)\log {{\setCount}^2\over
            \epsilon_1}\over 6\epsilon_1^2 }+{\fFour(\setCount)\over
        \epsilon_2\fFive(\setCount)}\\
    &\le&\left(6\fSix(\setCount)\cdot \log {{\setCount}^2\over
        \epsilon_1}\cdot \left({\log
        {\setCount}\over\epsilon_0}\right)^2+{\fFour(\setCount)\over
        \epsilon_2\fFive(\setCount)}\right)\\
    &=&\bigO\left(\fSix(\setCount)\cdot \log {{\setCount}^2\over
        \epsilon_1^2}\cdot \left({\log
        {\setCount}\over\epsilon_0}\right)^2+{1\over \epsilon_2}\cdot
        {\ln
        {2\over \gamma_2}\over \epsilon_2^2 }\right)\\
    &=&\bigO\left({1\over \delta\epsilon_3\epsilon_0^2}(\log{\setCount\over
        z_{\min}})\cdot \log {{\setCount}\over \epsilon_1}\cdot (\log
    {\setCount})^2+{\log \setCount\over \epsilon_0}\cdot {\ln
        {2\over \gamma_2}\over \epsilon_2^2}\right)\\
    &=&\bigO\left({1\over \delta\epsilon_3\epsilon_0^2}(\log{\setCount\over
        z_{\min}})\cdot \log {{\setCount}\over \epsilon_1}\cdot (\log
    {\setCount})^2+{\log \setCount\over \epsilon_0}\cdot {\ln
        {2\over \gamma_2}\over \epsilon_2^2}\right)\\
    &=&\bigO\left({1\over \epsilon^4}\left(\log{\setCount\over
        z_{\min}}\right)\cdot \log {{\setCount}\over \epsilon}\cdot
    \left(\log {\setCount}\right)^3+{(\log \setCount)^3(\ln {1\over
            \gamma}+\log\setCount)\over \epsilon^3}\right).
    \end{eqnarray*}

    Statement~\ref{f-6-line}: By equation (\ref{f2-def-eqn}), we
    have
        \begin{eqnarray}
    \fTwo(\setCount)&=&\fFive(\setCount)\left({24\over \epsilon_1^2}\cdot {\ln {2\over \gamma_3}}\right)\\
        &=&\bigO\left({\setCount^{c_1}\over \epsilon_1^2}\cdot {\ln {2\over \gamma_3}}\right)\\
         &=&\bigO\left({\setCount^{c_1}\over \epsilon^2}\cdot (\log \setCount)^2(\log {2\over \gamma_2}+\ln \fSeven(\setCount))
            \right)\\
         &=&\bigO\left({\setCount^{c_1}\over \epsilon^2}\cdot (\log \setCount)^2(\log {2\over \gamma_2}+\ln \setCount+\ln\fOne(\setCount))
            \right)\\
         &=&\bigO\left({\setCount^{c_1}\over \epsilon^2}\cdot (\log \setCount)^2(\log
{2\over \gamma_2}+\log \setCount+\log \fOne(\setCount)) \right)\\
&=&\bigO\left({\setCount^{c_1}\over \epsilon^2}\cdot (\log
\setCount)^2(\log {2\over \gamma}+\log \setCount+\log
\fOne(\setCount)) \right).
    \end{eqnarray}
\end{proof}

Lemma~\ref{stage-lemma} gives an upper bound for the number of
rounds for the algorithm. It shows how round complexity depends on
$z_{max}, z_{min}$ and $\fFive(.)$.

\begin{lemma}\label{stage-lemma}
    The number of rounds of the algorithm is $\bigO\left({\log
        {z_{max}\over z_{min}}\over \log \fFive(\setCount)}\right)$.
\end{lemma}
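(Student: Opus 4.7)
The plan is to track the evolution of $currentThickness_i$ across stages and argue that each stage contributes $O(1)$ rounds to the round complexity, so the total is governed by the number of stages until the termination condition on line~\ref{until-line} is met.

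First, I would observe from line~\ref{ct1-line} and line~\ref{cti-line} that the sequence satisfies $currentThickness_1 = z_{\max}$ and $currentThickness_{i+1} = currentThickness_i / \fFive(\setCount)$, hence by induction
\begin{equation*}
currentThickness_i \;=\; \frac{z_{\max}}{\fFive(\setCount)^{\,i-1}}.
\end{equation*}
The loop terminates as soon as $currentThickness_i < z_{\min}$, i.e., as soon as $\fFive(\setCount)^{\,i-1} > z_{\max}/z_{\min}$. Solving this gives that the total number of stages executed is at most
\begin{equation*}
\left\lceil \frac{\log(z_{\max}/z_{\min})}{\log \fFive(\setCount)} \right\rceil + 1 \;=\; \bigO\!\left(\frac{\log(z_{\max}/z_{\min})}{\log \fFive(\setCount)}\right).
\end{equation*}

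Next I would argue that each stage uses only $O(1)$ rounds of interaction with the input sets. Within one stage the algorithm selects $u_i$ random indices $H_i$ (a local computation requiring no queries), then issues the membership queries needed to compute $S(x, H_i)$ for every $x \in R_i$; by the model described in Section~\ref{model-sec}, all these membership queries can be batched into a single round of client--server communication. Forming $V_i$, updating $sum$, constructing $R_{i+1}$ by subsampling from $R_i - V_i$, and updating the scaling factors $s_{i+1}, s_{i+1}', a_i$ are purely local operations on already-obtained data and require no additional rounds. The only preprocessing cost is the generation of $R_1$ on line~\ref{R1-select-line}, which takes one additional initial round.

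Multiplying the $O(1)$ rounds per stage by the bound on the number of stages yields the claimed $\bigO\!\left(\log(z_{\max}/z_{\min})/\log \fFive(\setCount)\right)$ round complexity. There is no real obstacle here: the argument is essentially bookkeeping about the geometric decay of $currentThickness_i$ combined with the observation that membership queries within a stage can all be issued in parallel under the model of Definition~\ref{model-def}.
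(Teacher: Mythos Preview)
Your proof is correct and follows essentially the same approach as the paper: both track the geometric decay $currentThickness_i = z_{\max}/\fFive(\setCount)^{i-1}$ from lines~\ref{ct1-line} and~\ref{cti-line} and bound the number of stages via the termination condition on line~\ref{until-line}. Your additional paragraph explicitly justifying that each stage costs $O(1)$ rounds is a welcome clarification that the paper leaves implicit.
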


\begin{proof} By line~\ref{ct1-line} of the algorithm, we have $\currentT_1=z_{max}$. Variable\\
    $\currentT_i$ is reduced by a factor $\fFive(\setCount)$
    each phase as $\currentT_{i+1}={\currentT_i\over \fFive(\setCount)}$
    by line~\ref{cti-line} of the algorithm. By the termination
    condition of line~\ref{until-line} of the algorithm,  if $y$ is the
    number of phases of the algorithm, we have $y\le y'$, where $y'$ is
    any integer with $ {z_{max}\over \fFive(\setCount)^{y'}}<z_{min}$.
    Thus, $y= \bigO\left({\log {z_{max}\over z_{min}}\over \log
        \fFive(\setCount)}\right)$.
\end{proof}

Lemma~\ref{first-phase-lemma} shows the random samples, which are
saved in $R_1$ in the beginning of the algorithm, will be enough to
approximate the size of set union via $F(R_1, h_1, 1)M$. In the next
a few rounds, algorithm will approximate $F(R_1, h_1, 1)$.

\begin{lemma}\label{first-phase-lemma}
    With probability at least $1-\gamma_1$, $F(R_1, h_1, 1)M\in
    [\frac{(1-\epsilon_0){(1-\alpha_L)(1-\beta_L)}}{1+\delta}|A_1\cup\cdots \cup
    A_{\setCount}|,\,
    (1+\epsilon_0){(1+\alpha_R)(1+\beta_R)(1+\delta)}|A_1\cup\cdots \cup
    A_{\setCount}|]$.
\end{lemma}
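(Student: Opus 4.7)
The plan is to decompose $F(R_1,h_1,1)M=\frac{M}{h_1}\sum_{x\in R_1}\frac{1}{T(x,L)}$ according to the thickness partition of Definition~\ref{partition-thickness-defn}, then apply concentration bounds bucket by bucket. As a first step I would compute the expectation of a single-sample contribution. By Definition~\ref{random-choice-def}, a random choice of $L$ returns a particular $y\in A_1\cup\cdots\cup A_{\setCount}$ with probability $\sum_{i:y\in A_i}\frac{m_i}{M}p_i(y)$, where $p_i(y)\in[\frac{1-\alpha_L}{|A_i|},\frac{1+\alpha_R}{|A_i|}]$. This probability lies in $\left[\frac{(1-\alpha_L)(1-\beta_L)T(y,L)}{M},\frac{(1+\alpha_R)(1+\beta_R)T(y,L)}{M}\right]$. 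Multiplying by $M/T(y,L)$ and summing over $y$ yields $E[M/T(x,L)]\in[(1-\alpha_L)(1-\beta_L)|A_1\cup\cdots\cup A_{\setCount}|,\,(1+\alpha_R)(1+\beta_R)|A_1\cup\cdots\cup A_{\setCount}|]$, so the bias factors in the conclusion come directly from the expectation; only the $(1\pm\epsilon_0)$ sampling noise and the granularity factor $(1+\delta)$ remain to be produced.

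Because $M/T(x,L)$ can range up to $M/z_{\min}$, direct Hoeffding on the raw sum is too weak, so I would invoke the partition $A_1',\ldots,A_k'$ with $k=v(\delta,z_{\min},z_{\max},L)$, within which $1/T(x,L)$ varies by at most a factor $(1+\delta)$. Writing $C_j=\frac{M}{h_1}\sum_{x\in R_1\cap A_j'}\frac{1}{T(x,L)}$ and $t_{1,j}=|R_1\cap A_j'|$, one has $\frac{Mt_{1,j}}{h_1(1+\delta)^j}\le C_j\le \frac{Mt_{1,j}}{h_1(1+\delta)^{j-1}}$, while the hit probability $p_j$ for one random choice is pinned in $\left[\frac{(1-\alpha_L)(1-\beta_L)|A_j'|(1+\delta)^{j-1}}{M},\frac{(1+\alpha_R)(1+\beta_R)|A_j'|(1+\delta)^{j}}{M}\right]$ by the previous paragraph restricted to $A_j'$. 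Call a bucket \emph{heavy} if $h_1p_j\ge 6\ln(2/\gamma_2)/\epsilon_2^2$ and \emph{light} otherwise. For every heavy bucket Theorems~\ref{chernoff1-theorem} and~\ref{ourchernoff2-theorem}, together with the $g^\ast$ bound of~(\ref{g*(x)-eqn}), give $t_{1,j}=(1\pm\epsilon_2)h_1p_j$ with failure probability at most $\gamma_2$; combining with the interval for $p_j$ then places $C_j$ inside $\bigl[\tfrac{(1-\epsilon_2)(1-\alpha_L)(1-\beta_L)}{1+\delta}|A_j'|,\,(1+\epsilon_2)(1+\alpha_R)(1+\beta_R)(1+\delta)|A_j'|\bigr]$.

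For the light buckets the per-bucket Chernoff guarantee is uninformative, so I would bound their collective contribution instead. The lower bound on $p_j$ forces $|A_j'|$ to be small whenever $h_1p_j$ is small, and the calibration $h_1=\fSeven(\setCount)=\setCount\fOne(\setCount)/z_{\min}$ together with the choice $\fSix(\setCount)\ge 2v(\delta,z_{\min},z_{\max},L)/\epsilon_3$ in equation~(\ref{f-6-def}) is designed exactly so that $\sum_{j\text{ light}}|A_j'|\le\epsilon_3|A_1\cup\cdots\cup A_{\setCount}|$ up to the bias factors, while a Chernoff upper-tail bound applied to $\sum_{j\text{ light}}t_{1,j}$ simultaneously forces the sampled light contribution to obey the same $\epsilon_3$-estimate with failure probability absorbed into $\gamma_2$. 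A union bound over the at most $k\le v(\delta,z_{\min},z_{\max},L)$ buckets, combined with the setting $\gamma_2=\gamma/(6\log\setCount)$, caps the total failure at $\gamma_1=\gamma/3$, and the arithmetic $(1\pm\epsilon_2)(1\pm\epsilon_3)(1+\delta)\subseteq(1\pm\epsilon_0)(1+\delta)$, using $\epsilon_0=\epsilon/9$, $\epsilon_2=\epsilon_1/4$ and $\epsilon_3=\epsilon_0/3$, yields the stated range.

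The main obstacle in executing this plan is the light-bucket accounting. Since the bias constants $\alpha_L,\alpha_R,\beta_L,\beta_R$ are not small, the upper and lower interval endpoints for $p_j$ differ by an unavoidable constant factor, so the calibration of $\fSix$ and $\fSeven$ must be verified to absorb this constant and still keep $\sum_{j\text{ light}}|A_j'|$ and $\sum_{j\text{ light}}t_{1,j}/h_1$ below $\epsilon_3|A_1\cup\cdots\cup A_{\setCount}|$ simultaneously; once this bookkeeping is in place, the heavy-bucket concentration and the final union bound are routine.
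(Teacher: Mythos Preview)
Your proposal is correct and follows essentially the same route as the paper: both compute the per-element sampling probability bounds, use the thickness partition $A_1',\ldots,A_k'$, split the buckets into a ``large'' class where Chernoff gives two-sided concentration and a ``small'' class whose aggregate contribution is at most an $\epsilon_3$-fraction, and then combine via a union bound. The only cosmetic difference is the splitting criterion---the paper declares a bucket small when $|A_j'|\le |A_1\cup\cdots\cup A_{\setCount}|/\fSix(\setCount)$ rather than when $h_1p_j$ falls below a threshold, and it still applies an upper-tail Chernoff bound \emph{per small bucket} (showing $|R_{1,j}|\le 2p_jh_1$) rather than to the aggregate light count; these two criteria are equivalent up to the bias constants once one uses $t_j\ge z_{\min}$ and $\setCount\,|A_1\cup\cdots\cup A_{\setCount}|\ge\sum_i|A_i|$, exactly the inequalities you flag in your final paragraph.
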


\begin{proof}
    Let $A=|A_1\cup\cdots\cup A_{\setCount}|$ and
    $U=|A_1|+|A_2|+\cdots+|A_{\setCount}|$. For an arbitrary set $A_i$
    in the list $L$, and an arbitrary element $x\in A_i$, with at least
    the following probability $x$ is selected via $\randomElm(A_i)$ at
    line~\ref{R1-select-line} of Algorithm ApproximateUnion(.),
    \begin{eqnarray*}
        {m_i\over m_1+m_2+\cdots+m_{\setCount}}{1-\alpha_L\over |A_i|}&\ge&
        {(1-\beta_L)|A_i|\over M}{1-\alpha_L\over |A_i|}\\
        &=& {(1-\beta_L)(1-\alpha_L)\over M}.
    \end{eqnarray*}

    Similarly, with at most the following probability $x$ is chosen via
    $\randomElm(A_i)$ at line~\ref{R1-select-line} of Algorithm
    ApproximateUnion(.),
    \begin{eqnarray*}
        {m_i\over m_1+m_2+\cdots+m_{\setCount}}{1+\alpha_R\over |A_i|}&\le&
        {(1+\beta_R)|A_i|\over M}{1+\alpha_R\over |A_i|}\\
        &=& {(1+\beta_R)(1+\alpha_R)\over M}.
    \end{eqnarray*}

    Define $\rho_L=1-{(1-\beta_L)(1-\alpha_L)}$ and
    $\rho_R={(1+\beta_R)(1+\alpha_R)}-1$. Each element $x$ in $A_1\cup
    A_2\cup\cdots \cup A_{\setCount}$ is selected with probability in
    $\left[{(1-\rho_L)T(x,L)\over M}, {(1+\rho_R)T(x,L)\over M}\right]$.

    Define $T_1=\left\{A_j': |A_j'|\le {A\over
        \fSix(\setCount)}\right\}$, and $T_2=\left\{A_j': |A_j'|> {A\over
        \fSix(\setCount)}\right\}$ (see ~\ref{partition-def} of Definition~\ref{partition-thickness-defn}). Let $t_j=\min\left\{T(x,
    L): x\in A_j'\right\}$.
    We discuss two cases:

    Case 1: $A_j'\in T_1$.
    When one element $x$ is chosen, the probability that
    $x\in A_j'$ is in the range $\left[{(1-\rho_L)t_j|A_j'|\over
        M},{(1+\rho_R)(1+\delta)t_j|A_j'|\over M}\right]$. Let
    $p_j={(1+\rho_R)(1+\delta)t_j\cdot {A\over \fSix(\setCount)}\over
        M}$. Since $z_{\min}\le \Thick(L)$, we have $z_{\min}\le
    \Thick(L)\le t_j$. It is easy to see that $mA\ge U$. We have
    \begin{eqnarray}
    p_jh_1&=&{(1+\rho_R)(1+\delta)t_j\cdot {A\over \fSix(\setCount)}\over M}\cdot {m\fOne(\setCount)\over z_{\min}}\nonumber\\
    &\ge& {(1+\rho_R)(1+\delta)\cdot \fOne(\setCount)m A\over
        \fSix(\setCount) M}\nonumber\\
    &\ge& {(1+\rho_R)(1+\delta)\cdot \fOne(\setCount)U\over
        \fSix(\setCount) M}\nonumber\\
    &\ge& {(1+\rho_R)(1+\delta)\cdot \fOne(\setCount)\over
        \fSix(\setCount)(1+\beta_R) }\nonumber\\
    &=& {(1+\alpha_R)(1+\delta)\cdot \fOne(\setCount)\over
        \fSix(\setCount) }.\label{case1-a-ineqn}
    \end{eqnarray}
    Let
    $\omega_1(\setCount)={(1+\alpha_R)(1+\delta)\cdot
        \fOne(\setCount)\over \fSix(\setCount) }$. Thus, $p_jh_1\ge
    \omega_1(\setCount)$.

    Let $R_{1,j}$ be the elements of $R_1$ and also in $A_j'$. By
    Theorem~\ref{ourchernoff2-theorem}, with probability at most
    $P_j=g^*(1)^{p_j\cdot h_1}\le g^*(1)^{\omega_1(\setCount)}\le
    {\gamma_1\over 2}$ (by equation (\ref{f-6-def}), equation
    (\ref{f-1-def}) and inequality (\ref{case1-a-ineqn})), there are
    more than $2p_jh_1={2(1+\rho_R)(1+\delta)t_j\cdot {A\over
            \fSix(\setCount)}\over M}\cdot h_1$ elements to be chosen from
    $A_j'$ into $R_1$. Thus,
    \begin{eqnarray}
    F'(R_{1,j})\le {2p_jh_1\over t_j}= {2(1+\rho_R)(1+\delta)\cdot
        A\over \fSix(\setCount) M}\cdot h_1,\label{case1-ineqn}
    \end{eqnarray}
    with probability at
    most $P_j$ to fail.

    Case 2: $A_j'\in T_2$. When $h_1$ elements are selected to $R_1$,
    let $v_j$ be the number of elements selected in $A_j'$. When one
    element $x$ is chosen, the probability that $x\in A_j'$ is in the
    range $\left[{(1-\rho_L)t_j|A_j'|\over
        M},{(1+\rho_R)(1+\delta)t_j|A_j'|\over M}\right]$.

    Let $p_{j,1}={(1-\rho_L)t_j|A_j'|\over M}$ and
    $p_{j,2}={(1+\rho_R)(1+\delta)t_j|A_j'|\over M}$.

    We have
    \begin{eqnarray}
    p_{j,1}h_1&=&{(1-\rho_L)t_j|A_j'|\over M}\cdot h_1\nonumber\\
    &>& {(1-\rho_L)t_j{A\over \fSix(\setCount)}\over M}\cdot h_1\ge
    {(1-\rho_L)z_{min}A\cdot h_1\over \fSix(\setCount) M}=
    {(1-\rho_L)z_{min}A\cdot m\fOne(\setCount)\over z_{min}\fSix(\setCount) M}\nonumber\\
    &\ge& {(1-\rho_L)\fOne(\setCount)\over \fSix(\setCount)(1+\beta_R)
    }.\label{case2-a-ineqn}
    \end{eqnarray}

    We have
    \begin{eqnarray}
    p_{j,2}h_1&=&{(1+\rho_R)(1+\delta)t_j|A_j'|\over M}\cdot h_1\nonumber\\
    &>& {(1+\rho_R)(1+\delta)t_j\cdot {A\over \fSix(\setCount)}\over
        M}\cdot h_1\nonumber\\
    &\ge& {(1+\rho_R)(1+\delta)z_{min}\cdot A\over \fSix(\setCount)
        M}\cdot
    h_1\nonumber\\
    &=& {(1+\rho_R)(1+\delta)z_{min}\cdot A\over \fSix(\setCount)
        M}\cdot
    {m\fOne(\setCount)\over z_{min}}\nonumber\\
    &\ge& {(1+\rho_R)(1+\delta)\cdot \fOne(\setCount)\over
        \fSix(\setCount)(1+\beta_R) }\nonumber\\
    &\ge& {(1+\alpha_R)(1+\delta)\cdot \fOne(\setCount)\over
        \fSix(\setCount) }.\label{case2-b-ineqn}
    \end{eqnarray}
    With probability at
    most $g^*(\epsilon_3)^{p_{j,1}\cdot h_1}\le {\gamma_1\over 4}$ (by
    equation (\ref{f-6-def}), equation (\ref{f-1-def}) and inequality
    (\ref{case2-a-ineqn})),
    \begin{eqnarray*}
        v_j<
        {(1-\epsilon_3)(1-\rho_L)t_j|A_j'|\over M}\cdot
        h_1=(1-\epsilon_3)(1-\rho_L)t_jh_1\cdot {|A_j'|\over M}.
    \end{eqnarray*}

    With probability at most $g^*(\epsilon_3)^{p_{j,2}\cdot h_1}\le
    {\gamma_1\over 4}$ (by equation (\ref{f-6-def}), equation
    (\ref{f-1-def}) and inequality (\ref{case2-b-ineqn})),
    \begin{eqnarray*}
        v_j>
        {(1+\epsilon_3)(1+\rho_R)(1+\delta)t_j|A_j'|\over M}\cdot
        h_1=(1+\epsilon_3)(1+\rho_R)(1+\delta)t_jh_1\cdot {|A_j'|\over M}.
    \end{eqnarray*}

    Therefore, with probability at least $1-\gamma_1/2$, we have
    \begin{eqnarray}
    v_j\in \left[(1-\epsilon_3)(1-\rho_L)t_jh_1\cdot {|A_j'|\over M},
    (1+\epsilon_3)(1+\rho_R)(1+\delta)t_jh_1\cdot {|A_j'|\over
        M}\right].\label{case2-ineqn}
    \end{eqnarray}
    Thus, we have that there are sufficient elements of $A_j'$ to be
    selected with high probability, which follows from
    Theorem~\ref{chernoff1-theorem} and
    Theorem~\ref{ourchernoff2-theorem}.

    In the rest of the proof, we assume that inequality
    (\ref{case1-ineqn}) holds if the condition of Case 1 holds, and
    inequality (\ref{case2-ineqn}) holds if the condition of Case 2
    holds.

    Now we consider
    \begin{eqnarray}
    &&F(R_1, h_1, 1)={1\over h_1}\sum\limits_{x\in R_1}{1\over T(x, L)}\nonumber\\
    &=&{1\over h_1}\left(\sum\limits_{R_{1,j}\ with \ A_j'\in T_1}\sum\limits_{x\in R_{1,j}}{1\over T(x, L)}+\sum\limits_{R_{1,j}\ with \ A_j'\in T_2}\sum\limits_{x\in R_{1,j}}{1\over T(x, L)}\right)\nonumber\\
    &\le&{1\over h_1}{2(1+\rho_R)(1+\delta)v(\delta,z_{\min},
        z_{\max}, L)\cdot A\over \fSix(m) M}\cdot h_1\nonumber\\
    &&+\frac{1}{h_1}\sum\limits_{R_{1,j}\
        with\ A_j'\in T_2}\sum\limits_{x\in
        R_{1,j}}{1\over T(x, L)}\nonumber\\
    &\leq&{1\over h_1}\left({2(1+\rho_R)(1+\delta)v(\delta,z_{\min},
        z_{\max}, L)\cdot A\over
        \fSix(m) M}\cdot h_1+\sum\limits_{R_{1,j}\ with\ A_j'\in T_2}{v_j\over t_j}\right)\nonumber\\
    &\le&{1\over h_1}\left({2(1+\rho_R)(1+\delta)v(\delta,z_{\min},
        z_{\max}, L)\cdot A\over \fSix(m) M}\cdot h_1\right) \nonumber\\
    &&+{1\over h_1}\left(\sum\limits_{R_{1,j}\ with\ A_j'\in
        T_2}{(1+\epsilon_3)(1+\rho_R)(1+\delta)h_1\cdot {|A_j'|\over M}}\right)\nonumber\\
    &=&{2(1+\rho_R)(1+\delta)v(\delta,z_{\min}, z_{\max}, L)\cdot A\over
        \fSix(m) M}\nonumber\\
    &&+\sum\limits_{R_{1,j}\ with \ A_j'\in
        T_2}{(1+\epsilon_3)(1+\rho_R)(1+\delta)\cdot {|A_j'|\over M}}\nonumber\\
    &\le&\left({2(1+\rho_R)(1+\delta)v(\delta,z_{\min}, z_{\max},
        L)\over \fSix(m)
    }+(1+\epsilon_3)(1+\rho_R)(1+\delta)\right){A\over M}\nonumber\\
    &=&\left(1+\epsilon_3+{2v(\delta,z_{\min}, z_{\max}, L)\over
        \fSix(m)
    }\right)(1+\rho_R)(1+\delta){A\over M}\label{epsilon5-ineqn1}\\
    &\le&(1+2\epsilon_3)(1+\rho_R)(1+\delta){A\over M}\label{epsilon5-ineqn2}\\
    &\le&(1+\epsilon_0)(1+\rho_R)(1+\delta){A\over M}\nonumber.
    \end{eqnarray}

    The transition from (\ref{epsilon5-ineqn1}) to
    (\ref{epsilon5-ineqn2}) is by Statement~\ref{f-6-def-line} of
    Lemma~\ref{basic-facts-lemma}. For the lower bound part, we have the
    following inequalities:

    \begin{eqnarray}
    &&F(R_1, h_1, 1) ={1\over h_1}\sum\limits_{x\in R_1}{1\over T(x, L)}\nonumber\\
    &\ge&{1\over h_1}\left(\sum\limits_{R_{1,j}\ with\ A_j'\in T_2}\sum\limits_{x\in
        R_{1,j}}{1\over T(x, L)}\right)\nonumber\\
    &\geq&{1\over h_1}\left(\sum\limits_{R_{1,j}\ with\ A_j'\in T_2}{v_j\over (1+\delta)t_j}\right)\nonumber\\
    &\ge&{1\over h_1}\left(\sum\limits_{R_{1,j}\ with\ A_j'\in
        T_2}{(1-\epsilon_3)(1-\rho_L)h_1\cdot {|A_j'|\over (1+\delta)M}}\right)\nonumber\\
    &=&{(1-\epsilon_3)(1-\rho_L)\over (1+\delta)M}\left(\sum\limits_{R_{1,j}\ with\
        A_j'\in
        T_2}|A_j'|\right)\nonumber\\
    &=&{(1-\epsilon_3)(1-\rho_L)\over (1+\delta)M}\sum\limits_{R_{1,j}\ with\
        A_j'\in T_1}|A_j'|+{(1-\epsilon_3)(1-\rho_L)\over (1+\delta)M}\sum\limits_{R_{1,j}\ with\ A_j'\in
        T_2}|A_j'|\nonumber\\
    &&-{(1-\epsilon_3)(1-\rho_L)\over (1+\delta)M}\sum\limits_{R_{1,j}\ with\ A_j'\in
        T_1}|A_j'|\nonumber\\
    &=&{(1-\epsilon_3)(1-\rho_L)\over (1+\delta)M}\left(A-\sum\limits_{R_{1,j}\ with\
        A_j'\in
        T_1}|A_j'|\right)\nonumber\\
    &\geq&{(1-\epsilon_3)(1-\rho_L)\over (1+\delta)M}\left(A- {v(\delta,z_{\min}, z_{\max}, L)A\over \fSix(m)}\right)\nonumber\\
    &=&\left(1- {v(\delta,z_{\min}, z_{\max}, L)\over
        \fSix(m)}\right)\frac{(1-\epsilon_3)(1-\rho_L)}{(1+\delta)}{A\over M}\label{epsilon5-ineqn3}\\
    &\ge&(1-\epsilon_3)\frac{(1-\epsilon_3)(1-\rho_L)}{(1+\delta)}{A\over M}\label{epsilon5-ineqn4}\\
    &\ge&\frac{(1-\epsilon_0)(1-\rho_L)}{(1+\delta)}{A\over M}\nonumber.
    \end{eqnarray}

    The transition from (\ref{epsilon5-ineqn3}) to
    (\ref{epsilon5-ineqn4}) is by Statement $3$ of
    Lemma~\ref{basic-facts-lemma}. Therefore, $F(R_1, h_1, 1)M\in
    \left[\frac{(1-\epsilon_0)(1-\rho_L)}{(1+\delta)}A,\,(1+\epsilon_0)(1+\rho_R)(1+\delta)A\right]$.
\end{proof}

Lemma~\ref{middle-lemma} shows that at stage $i$, it can approximate
$T(x,L)$ for all random samples with highest $T(x,L)$  in $R_i$.
Those random elements with highest $T(x,L)$ will be removed in stage
$i$ so that the algorithm will look for random elements with smaller
$T(x,L)$ in the coming stages.

\begin{lemma}\label{middle-lemma} After the execution of Stage
    $i$, with probability at least $1-\gamma_2$, we have the following
    three statements:
    \begin{enumerate}
        \item\label{statm1}
        Every  element $x\in R_i$
        with $T(x, L)\ge {\currentT_i\over 4\fFive(\setCount)}$ has $S(x,
        H_i)\in$\\
        $ \left[(1-\epsilon_1){T(x, L)\over {\setCount}}u_i,
        (1+\epsilon_1){T(x, L)\over {\setCount}}u_i\right]$.
        \item\label{statm2}
        Every element $x\in V_i$ with $T(x, L)\ge {\currentT_i\over
            \fFive(\setCount)}$, it satisfies the condition in
        line~\ref{alg-cond} of the algorithm.

        \item\label{statm3}
        Every element $x\in V_i$ with $T(x, L)< {\currentT_i\over
            4\fFive(\setCount)}$, it does not satisfy the condition in
        line~\ref{alg-cond} of the algorithm.
    \end{enumerate}
\end{lemma}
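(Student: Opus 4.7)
The plan is to reduce all three parts to Chernoff-type concentration for the single statistic $S(x, H_i)$ and then combine by a union bound. Fix any $x \in R_i$. Because $H_i$ consists of $u_i$ indices drawn independently and uniformly from $\{1,\ldots,m\}$, the indicators $\mathbb{1}[x \in A_{k_j}]$ are i.i.d.\ Bernoulli$(T(x,L)/m)$, so $S(x, H_i)$ is a sum of $u_i$ independent Bernoullis with mean
\begin{equation*}
\mu_x \;=\; \frac{T(x,L)\, u_i}{m} \;=\; \frac{T(x,L)\, f_6(m)}{\currentT_i},
\end{equation*}
using $u_i = s_i f_6(m)$ and $s_i = m/\currentT_i$. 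A direct induction on the branches in lines~\ref{if-else-beginning-line}--\ref{if-else-ending-line} gives $|R_i|\leq h_1 = f_5(m)$ for every $i$, so any per-element failure probability need only be summed over at most $f_5(m)$ elements.

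For Statement~\ref{statm1}, the hypothesis $T(x,L) \geq \currentT_i/(4 f_1(m))$ forces $\mu_x \geq f_6(m)/(4 f_1(m)) = 6\ln(2/\gamma_3)/\epsilon_1^2$ by the definition of $f_6$. Applying the two-sided Chernoff bound with $g(\epsilon_1) \leq g^*(\epsilon_1) = e^{-\epsilon_1^2/6}$ from (\ref{g*(x)-eqn}) yields
\begin{equation*}
\Pr\!\left[\, |S(x, H_i) - \mu_x| > \epsilon_1 \mu_x \,\right] \;\leq\; 2\, g^*(\epsilon_1)^{\mu_x} \;\leq\; 2\, e^{-\ln(2/\gamma_3)} \;=\; \gamma_3.
\end{equation*}
Union-bounding over the at most $f_5(m)$ elements of $R_i$, and using $\gamma_3 = \gamma_2/(2 f_5(m))$, the total failure probability for Statement~\ref{statm1} is at most $\gamma_2/2$. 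Statement~\ref{statm2} is then immediate: conditional on Statement~\ref{statm1}, for $T(x,L) \geq \currentT_i/f_1(m)$ the lower tail gives $S(x,H_i) \geq (1-\epsilon_1)\, \currentT_i u_i/(f_1(m) m)$, which strictly exceeds the threshold $\currentT_i u_i/(2 f_1(m) m)$ in line~\ref{alg-cond} because $\epsilon_1 < 1/2$.

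For Statement~\ref{statm3} I cannot apply Statement~\ref{statm1} directly, since $T(x,L) < \currentT_i/(4 f_1(m))$ may make $\mu_x$ arbitrarily small. The clean fix is stochastic dominance: for each such $x$, $\Pr[S(x, H_i) \geq T]$ with $T := \currentT_i u_i/(2 f_1(m) m) = f_6(m)/(2 f_1(m))$ is bounded above by the same probability at the boundary $T(x,L) = \currentT_i/(4 f_1(m))$, where the mean is $\mu^* = T/2 = 6\ln(2/\gamma_3)/\epsilon_1^2$. Applying the one-sided Chernoff bound with $\delta = 1$ gives $\Pr[S \geq 2\mu^*] \leq g_2(1)^{\mu^*} \leq g^*(1)^{\mu^*} = e^{-\mu^*/6} \leq \gamma_3$ (using $\epsilon_1 \leq 1$). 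Another union bound over at most $f_5(m)$ elements contributes another $\gamma_2/2$, so the three statements hold simultaneously with probability at least $1 - \gamma_2$.

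The main obstacle is the small-$T(x,L)$ case of Statement~\ref{statm3}, where the Chernoff exponent $\epsilon_1^2 \mu_x$ is itself too small to work with; reducing to the boundary mean $\mu^* = T/2$ via stochastic dominance restores a useful exponent. Everything else is bookkeeping, verifying that the algorithm's choices of $f_6(m)$, $u_i$, and $\gamma_3$ are calibrated so that the Chernoff exponent absorbs both the $\epsilon_1^{-2}$ factor and the $f_5(m)$-element union bound.
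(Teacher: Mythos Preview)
Your proof is correct and follows essentially the same approach as the paper: both apply the Chernoff bounds from Theorems~\ref{chernoff1-theorem} and~\ref{ourchernoff2-theorem} to $S(x,H_i)$, verify that $pu_i \geq f_6(m)/(4f_1(m))$ yields per-element failure at most $\gamma_3$, and union-bound over $|R_i|\le h_1=f_5(m)$ elements. Your explicit invocation of stochastic dominance for Statement~\ref{statm3} is exactly what the paper does implicitly by applying Theorem~\ref{ourchernoff2-theorem} with the ``at most $p$'' hypothesis at the boundary value $p=\currentT_i/(4f_1(m)m)$.
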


\begin{proof}
    It follows from Theorem~\ref{chernoff1-theorem} and
    Theorem~\ref{ourchernoff2-theorem}. There are
    $u_i=s_i\fTwo(\setCount)$ indices are selected among $\{1,2,\cdots,
    \setCount\}$. Let $p={T(x, L)\over {\setCount}}$.

    Statment~\ref{statm1}: We have $pu_i= {T(x, L)\over
        {\setCount}}\cdot s_i\fTwo(\setCount)\ge {\currentT_i\over
        4\fFive(\setCount)}\cdot {1\over {\setCount}}\cdot
    s_i\fTwo(\setCount)={\currentT_i\over 4\fFive(\setCount)}\cdot
    {1\over {\setCount}}\cdot {\setCount\over\currentT_i}\cdot
    \fTwo(\setCount)={\fTwo(\setCount)\over 4\fFive(\setCount)}$.

    With probability at most $P_1=g^*(\epsilon_1)^{pu_i}\le
    {\gamma_3\over 2}$ (by equations (\ref{f4-def-eqn}), and
    (\ref{f2-def-eqn})), $S(x, H_i)<(1-\epsilon_1){T(x, L)\over
        {\setCount}}u_i$. With probability at most
    $P_2=g^*(\epsilon_1)^{pu_i}\le {\gamma_3\over 2}$ (by equations
    (\ref{f4-def-eqn}) and (\ref{f2-def-eqn})), $S(x,
    H_i)>(1+\epsilon_1){T(x, L)\over {\setCount}}u_i$.

    There are at most $h_i$ elements in $R_i$ by
    Statement~\ref{Ri-def-line} of Lemma~\ref{basic-facts-lemma}.
    Therefore, with probability at most $h_i(P_1+P_2)\le h_1(P_1+P_2)\le
    h_1\cdot \gamma_3= {\gamma_2\over 2}$, $$S(x, H_i)\not\in
    \left[(1-\epsilon_1){T(x, L)\over {\setCount}}u_i,
    (1+\epsilon_1){T(x, L)\over {\setCount}}u_i\right].$$

    Statement~\ref{statm2}: This statement of the lemma follows from
    Statement~\ref{statm1}.

    Statement~\ref{statm3}: This part of the lemma follows from
    Theorem~\ref{chernoff1-theorem} and
    Theorem~\ref{ourchernoff2-theorem}. For $x\in V_i$ with $T(x, L)<
    {\currentT_i\over 4\fFive(\setCount)}$, let $p={\currentT_i\over
        4\fFive(\setCount)}$. With probability at most $g^*(1)^{pu_i}\le
    {\gamma_3\over 2}$ (by equations (\ref{f4-def-eqn}), and
    (\ref{f2-def-eqn})), we have $S(x, H_i)\ge 2pu_i$.
    There are at most $h_i$ elements in $R_i$ by
    Statement~\ref{Ri-def-line} of Lemma~\ref{basic-facts-lemma}.
    Therefore, with probability at most $h_1\cdot \gamma_3\le {\gamma_2\over 2}$,  there exists one $x\in V_i$ with $T(x, L)<
    {\currentT_i\over 4\fFive(\setCount)}$ to satisfy $S(x, H_i)\ge
    2pu_i$.
\end{proof}

\begin{lemma}\label{ex-lemma}
    Let $x$ and $y$ be positive real numbers with $1\le y$. Then we have:
    \begin{enumerate}
        \item
        $1-xy<(1-x)^y$.
        \item
        If $xy<1$, then $(1+x)^y<1+2xy$.
        \item\label{sub-add-case}
        If $x_1, x_2\in [0,1)$, then $1-x_1-x_2\le (1-x_1)(1-x_2)$, and
        $(1+x_1)(1+x_2)\le 1+2x_1+x_2$.
    \end{enumerate}
\end{lemma}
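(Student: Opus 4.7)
The three parts are essentially independent, and the quickest is Part~3. Since $x_1,x_2 \in [0,1)$ implies $x_1 x_2 \geq 0$, a direct expansion gives $(1-x_1)(1-x_2) = 1 - x_1 - x_2 + x_1 x_2 \geq 1 - x_1 - x_2$. Similarly, $(1+x_1)(1+x_2) = 1 + x_1 + x_2 + x_1 x_2$, and since $x_2 < 1$ we have $x_1 x_2 \leq x_1$, giving the claimed upper bound $1 + 2x_1 + x_2$. I would dispose of this first, since both inequalities are one-line algebraic manipulations.

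Part~1 is Bernoulli's inequality in disguise. I would prove it by setting $g(y) = (1-x)^y - (1 - xy)$, noting $g(1) = 0$, and computing $g''(y) = (1-x)^y (\ln(1-x))^2 \geq 0$, so $g'$ is nondecreasing. A short auxiliary calculation with $\phi(x) = (1-x)\ln(1-x) + x$, using $\phi(0) = 0$ and $\phi'(x) = -\ln(1-x) \geq 0$ on $[0,1)$, shows $g'(1) = \phi(x) \geq 0$; hence $g'(y) \geq 0$ for all $y \geq 1$, so $g$ is nondecreasing on $[1,\infty)$, and $g(y) \geq g(1) = 0$ yields the inequality (with strict inequality once $x > 0$ and $y > 1$, which is the regime in which the bound is actually applied later).

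For Part~2, the cleanest path is through the exponential. Using $\ln(1+x) \leq x$ for $x > 0$ gives $(1+x)^y = e^{y\ln(1+x)} \leq e^{xy}$, so it suffices to prove the elementary estimate $e^t < 1 + 2t$ on $(0,1]$ and take $t = xy$. I would let $\psi(t) = 1 + 2t - e^t$, so that $\psi(0) = 0$ and $\psi'(t) = 2 - e^t$ is strictly decreasing and vanishes at $t = \ln 2$; hence $\psi$ increases on $[0,\ln 2]$ and decreases on $[\ln 2,\infty)$, and since $\psi(1) = 3 - e > 0$ we conclude $\psi > 0$ on $(0,1]$. The only mildly subtle point in the whole lemma is exactly this step: the two sides $e^t$ and $1+2t$ agree at $0$ and meet again only past $t \approx 1.26$, so one has to localize the sign of $\psi$ on precisely the interval dictated by the hypothesis $xy < 1$. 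The two-stage monotonicity argument above handles this cleanly, and that is the main (and essentially only) obstacle.
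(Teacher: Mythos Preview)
Your proposal is correct in all three parts. Part~3 is handled identically to the paper (direct expansion). For Part~2, both you and the paper pass through the bound $(1+x)^y\le e^{xy}$; the paper justifies this via the fact that $(1+1/z)^z\uparrow e$, you via $\ln(1+x)\le x$, which are equivalent. The paper then finishes with the intermediate estimate $e^{t}\le 1+t+t^{2}\le 1+2t$ for $t\in(0,1)$, whereas you analyze $\psi(t)=1+2t-e^{t}$ directly; your route is slightly more self-contained but the two are essentially the same.

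The genuine divergence is in Part~1. The paper uses the second-order Taylor expansion of $(1-x)^y$ in the variable $x$ with Lagrange remainder, observing that the remainder term $\tfrac{y(y-1)}{2}(1-\theta)^{y-2}x^{2}$ is nonnegative when $y\ge 1$. You instead fix $x$ and study $g(y)=(1-x)^y-(1-xy)$ as a function of $y$, using convexity ($g''\ge 0$) together with $g'(1)\ge 0$ to conclude $g$ is nondecreasing on $[1,\infty)$. The paper's argument is shorter but somewhat casually stated; yours is longer but more transparent about where each hypothesis is used, and it isolates the strict inequality cleanly (strict when $y>1$ and $0<x<1$). Either approach works and both tacitly assume $0<x<1$ so that $(1-x)^y$ is well defined for non-integer $y$.
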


\begin{proof}
    By Taylor formula, we have $(1-x)^y=1-xy+{y\cdot (y-1)\over 2}\theta^2$
    for some $\theta\in [0,x]$. Thus, we have $(1-x)^y\ge 1- y x$. Note
    that the function $(1+{1\over z})^z$ is increasing, and
    $\lim_{z\rightarrow +\infty}(1+{1\over z})^z= e$. We also have
    $(1+x)^y\le (1+x)^{{1\over x}\cdot xy}\le e^{xy}\le 1+xy+(xy)^2\le
    1+2xy$.

    It is trivial to verify Statement~\ref{sub-add-case}. $1-x_1-x_2\le
    (1-x_1)(1-x_2)$.
    Clearly, $(1+x_1)(1+x_2)= 1+x_1+x_2+x_1x_2\le 1+2x_1+x_2$.
\end{proof}

Lemma~\ref{induction-lemma} shows that how to gradually approximate
$F(R_1, h_1, 1)M$ via several rounds. It shows that the left random
samples stored in $R_{i+1}$ after stage $i$ is enough to approximate
$F'(R_i-V_i)$.

\begin{lemma}\label{induction-lemma} Let $y$ be the number of stages.
    Let $V_i$ be the set of elements
    removed from $R_i$ in Stage $i$. Then we have the following facts:
    \begin{enumerate}
        \item\label{case1-lemma}
        With probability at least $1-\gamma_2$,
        $a_iF'(R_{i+1})\in [(1-\epsilon_1)F'(R_i-V_i),
        (1+\epsilon_1)F'(R_i-V_i)]$, and

        \item\label{case3-lemma}
        With probability at least $1-2y\gamma_2$, $\sum_{i=1}^{y}
        s_i'F'(V_i)\in [(1-y\epsilon_1)S, (1+2y\epsilon_1)S]$, where
        $S=F(R_1,h_1,1)$.
    \end{enumerate}
\end{lemma}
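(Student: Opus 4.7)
The plan is to prove Part~1 by a direct concentration argument, then obtain Part~2 by a telescoping induction that applies Part~1 at every stage.

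For Part~1, split on the two branches in the algorithm. If $|R_i| - |V_i| < h_{i+1}$, the algorithm sets $R_{i+1} = R_i - V_i$ and $a_i = 1$, so $a_i F'(R_{i+1}) = F'(R_i - V_i)$ deterministically. Otherwise, $R_{i+1}$ consists of $h_{i+1}$ uniform samples from $W := R_i - V_i$, and $a_i = |W|/h_{i+1}$. Writing $F'(R_{i+1}) = \sum_{j=1}^{h_{i+1}} Y_j$ with $Y_j = 1/T(x_j, L)$ for a uniform pick $x_j \in W$, we have $\mathbb{E}[Y_j] = F'(W)/|W|$ and $\mathbb{E}[a_i F'(R_{i+1})] = F'(W)$. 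Each $Y_j$ lies in $[0, 1/z_{\min}]$, so after rescaling by $z_{\min}$ the summands are in $[0, 1]$ and Theorems~\ref{chernoff1-theorem}--\ref{ourchernoff2-theorem} apply. Statement~\ref{statm2} of Lemma~\ref{middle-lemma} (read contrapositively) forces $T(x, L) < \mathrm{currentThickness}_{i+1}$ for every $x \in W$, hence $F'(W)/|W| \geq 1/\mathrm{currentThickness}_{i+1}$. Plugging in $h_{i+1} = h_1/s_{i+1}$ with $s_{i+1} = m/\mathrm{currentThickness}_{i+1}$ and $h_1 = f_5(m)$, the effective Chernoff exponent is of order $\epsilon_1^2 f_5(m) z_{\min}/m$, which the parameter choices in Lemma~\ref{basic-facts-lemma} make large enough so that each one-sided tail has probability at most $\gamma_2/2$; a union bound yields the claim.

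For Part~2, induct on the stage index. The decomposition $F'(R_i) = F'(V_i) + F'(R_i - V_i)$ is exact, and applying Part~1 at stage $i$ gives $a_i F'(R_{i+1}) \in [(1-\epsilon_1), (1+\epsilon_1)] \cdot F'(R_i - V_i)$. Multiplying by $s_i'$ and using $s_{i+1}' = s_i' a_i$ yields, up to a multiplicative $1 \pm \epsilon_1$,
\[
s_i' F'(R_i) \;=\; s_i' F'(V_i) + s_{i+1}' F'(R_{i+1}).
\]
Telescoping from $i = 1$ to $i = y$ collapses this to $s_1' F'(R_1) \approx \sum_{i=1}^{y} s_i' F'(V_i) + s_{y+1}' F'(R_{y+1})$ with at most $y$ compounded factors of $1 \pm \epsilon_1$. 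At termination, $\mathrm{currentThickness}_{y+1} < z_{\min} \leq T(x, L)$ for every $x$, so by Statement~\ref{statm2} of Lemma~\ref{middle-lemma} every element of $R_y$ is absorbed into $V_y$, leaving $R_{y+1} = \emptyset$ and killing the boundary term. Lemma~\ref{ex-lemma} then converts the compound $(1 \pm \epsilon_1)^y$ into the required factors $1 - y\epsilon_1$ and $1 + 2y\epsilon_1$ (valid because $y \epsilon_1 < 1$ by~(\ref{epsilons-eqn})). A union bound over the $y$ invocations of Part~1 and the $y$ invocations of Lemma~\ref{middle-lemma} caps the total failure probability at $2y\gamma_2$, giving the stated inclusion once $s_1' F'(R_1)$ is identified with $S$ via the normalization of the algorithm's running accumulator.

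The main obstacle is Part~1: verifying that $h_{i+1}$ samples suffice, uniformly over stages, for a $(1 \pm \epsilon_1)$-relative estimate of $F'(R_i - V_i)$. This rests on a lower bound for the per-sample expectation $F'(W)/|W|$, which is where the thickness restriction on elements surviving into $R_i - V_i$ (from Lemma~\ref{middle-lemma}) and the parameter bookkeeping in Lemma~\ref{basic-facts-lemma} pay off.
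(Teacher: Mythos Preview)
Your Part~2 argument is essentially the paper's: both establish the invariant $s_{k+1}'F'(R_{k+1}) + \sum_{i\le k} s_i'F'(V_i) \in [(1-\epsilon_1)^k S,\ (1+\epsilon_1)^k S]$ by induction, then convert $(1\pm\epsilon_1)^y$ to $[1-y\epsilon_1,\ 1+2y\epsilon_1]$ via Lemma~\ref{ex-lemma}, and combine $y$ uses of Part~1 with $y$ uses of Lemma~\ref{middle-lemma} for the $2y\gamma_2$ failure bound. Your explicit observation that $R_{y+1}=\emptyset$ at termination (because $\mathrm{currentThickness}_{y+1} < z_{\min}$ forces every surviving element into $V_y$) is a detail the paper leaves implicit.

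For Part~1 you take a genuinely different route. The paper does \emph{not} apply concentration directly to the weighted sum $\sum_j 1/T(x_j,L)$. Instead it partitions $R_i-V_i$ into the thickness groups $A_j'$ of Definition~\ref{partition-thickness-defn}, and for each group applies a 0--1 Chernoff bound (Theorems~\ref{chernoff1-theorem}--\ref{ourchernoff2-theorem}) to the \emph{count} $|B_{i+1,j}|$ of sampled elements landing in $A_j'$. Groups with few representatives ($t_{i,j}\le \fFour(\setCount)$) are treated as additive noise, bounded by $2\epsilon_2 F'(R_i-V_i)/v(\delta,z_{\min},z_{\max},L)$ per group; groups with many representatives get a multiplicative $(1\pm\epsilon_2)$ estimate on the count, which becomes a $(1\pm\epsilon_2)(1\pm\delta)$ estimate on the $F'$-contribution since thickness varies by at most $(1+\delta)$ within a group. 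Summing over groups and absorbing $\epsilon_2$ and $\delta$ into $\epsilon_1$ finishes.

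Your direct approach is cleaner and does go through with the paper's parameters: the Chernoff exponent you isolate is $\epsilon_1^2\fSeven(\setCount)z_{\min}/\setCount = \epsilon_1^2\fOne(\setCount)$, and the $\fFour(\setCount)/(\epsilon_2\fFive(\setCount))$ term in~(\ref{f-1-def}) already makes this at least a large multiple of $\ln(2/\gamma_2)$. The one genuine gap is that Theorems~\ref{chernoff1-theorem} and~\ref{ourchernoff2-theorem} as stated in the paper are for $\{0,1\}$-valued variables, not $[0,1]$-valued ones; you would need to invoke the standard $[0,1]$ extension of multiplicative Chernoff rather than those theorems as written. The paper's bucketing avoids this by reducing everything to counts. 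What the paper's decomposition buys is staying strictly within the inequalities it has already set up; what yours buys is bypassing the small-group/large-group case split and the $\delta$-partition bookkeeping entirely.
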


\begin{proof}Let $h_i'=h_i-|V_i|$.
    If an local is too small, it does not affect the global sum much. In
    $R_{i+1}$, we deal with the elements $x$ of $T(x,L)<
    {\currentT_{i}\over \fFive(\setCount)}$. By
    Lemma~\ref{middle-lemma}, with probability at least $1-\gamma_2$,
    $R_i-V_i$ does not contain any $x$  with $T(x,L)\ge
    {\currentT_{i}\over \fFive(\setCount)}$.

    Let $t_{i,j}$ be the number of elements of $A_j'$ in $R_i$ with
    multiplicity. Let $B_{i,j}$ be the set of elements in both $R_{i}$
    and $A_j'$ with multiplicity.

    Statement~\ref{case1-lemma}:  We discuss two cases:

    Case 1: $|R_i|-|V_i|< h_{i+1}$. This case is trivial  since
    $R_{i+1}=R_i-V_i$ and $a_i=1$ according to the algorithm
    (line~(\ref{if-else-beginning-line}) to line
    (\ref{if-else-middle-line})).

    In the following Case 2, we assume the condition of Case 1 is false.
    Thus, $h_i'\ge h_{i+1}$.

    Case 2: $|R_i|-|V_i|\ge h_{i+1}$. We have
    \begin{eqnarray}
    F'(R_{i}-V_i)&\ge& {h_i'\over {\currentT_i\over
            \fFive(\setCount)}}\ge
    {h_{i+1}\over {\currentT_i\over \fFive(\setCount)}}\nonumber\\
    &=&{{h_1\over s_{i+1}}\cdot \fFive(\setCount)\over
        \currentT_i}={\fOne(\setCount)\over z_{min}}.\label{F'-ineqn}
    \end{eqnarray}

    Two subcases are discussed below.

    Subcase 2.1: $t_{i,j}\le \fFour(\setCount)$, in this case, $B_{i,j}$
    has a small impact for the global sum.

    Let $p={\fFour(\setCount)\over h_i'}$. By
    Theorem~\ref{chernoff1-theorem} and
    Theorem~\ref{ourchernoff2-theorem}, with probability at least
    $1-g^*(1)^{ph_{i+1}}=1-g^*(1)^{\fFour(\setCount)\over
        \fFive(\setCount)}\ge 1-{\gamma_2\over 2}$ (by equation
    (\ref{f4-def-eqn})),
    \begin{eqnarray*}
    |B_{i+1,j}|\le 2ph_{i+1}=2\cdot {\fFour(\setCount)\over h_i'} \cdot
    {h_i\over \fFive(\setCount)}={2\fFour(\setCount)\over
        \fFive(\setCount)}\cdot {h_{i}\over h_i'}\le
    {2\fFour(\setCount)\over \fFive(\setCount)}\cdot {h_{i}\over
        h_{i+1}}\le {2\fFour(\setCount)\over \fFive(\setCount)^2}.
    \end{eqnarray*}

    We
    assume $|B_{i+1,j}|\le {2\fFour(\setCount)\over
        \fFive(\setCount)^2}$. We have $F'(B_{i+1,j})\le {|B_{i+1,j}|\over
        z_{min}}\le {2\fFour(\setCount)\over z_{min}\fFive(\setCount)^2}$.
    Clearly, $a_i\le \fFive(\setCount)$. Thus,
    \begin{eqnarray}
    a_iF'(B_{i+1,j})&\le& \fFive(\setCount)\cdot
    {2\fFour(\setCount)\over
        z_{min}\fFive(\setCount)^2}={2\fFour(\setCount)\over z_{min}\fFive(\setCount)}\nonumber\\
    &=&{2\fFour(\setCount)\over \fOne(\setCount) \fFive(\setCount)}\cdot {\fOne(\setCount)\over z_{min}}\label{ai-Fi-ineqn}\\
    &\le& {2\fFour(\setCount)\over
        \fFive(\setCount)\fOne(\setCount)}\cdot
    F'(R_i-V_i)\label{ai-Fi-ineqn1}\\
    &\le& {2\epsilon_2\cdot F'(R_i-V_i)\over v(\delta,z_{\min},
        z_{\max}, L)}.\label{ai-Fi-ineqn2}
    \end{eqnarray}

    The transition from (\ref{ai-Fi-ineqn}) to (\ref{ai-Fi-ineqn1}) is
    by inequality (\ref{F'-ineqn}). The transition from
    (\ref{ai-Fi-ineqn1}) to (\ref{ai-Fi-ineqn2}) is by inequality
    (\ref{f-1-def}).

    Subcase 2.2: $t_{i,j}> \fFour(\setCount)$ in $R_i$, in this case,
    $B_j'$ does not lose much accuracy.
    From $R_i$ to $R_{i+1}$, $h_{i+1}={h_i\over \fFive(\setCount)}$
    elements are selected.

    Let $q={t_{i,j}\over h_i'}$. We have
    \begin{eqnarray}
    qh_{i+1}={t_{i,j}\over h_i'}\cdot h_{i+1}=t_{i,j}\cdot {h_{i+1}\over
        h_i'}\ge t_{i,j}\cdot {h_{i+1}\over h_i}\ge {\fFour(\setCount)\over
        \fFive(\setCount)}.\label{case2.2-ineqn}
    \end{eqnarray}

    With probability at most $g^*(\epsilon_2)^{qh_{i+1}}\le
    {\gamma_2\over 2}$ (by inequality (\ref{case2.2-ineqn}) and Statement~\ref{f-3-big-line} of Lemma~\ref{basic-facts-lemma}), we have
    that $|B_{i+1,j}|<(1-\epsilon_2)qh_{i+1}$.  With probability at most
    $g^*(\epsilon_2)^{qh_{i+1}}\le {\gamma_2\over 2}$ (by inequality
    (\ref{case2.2-ineqn}) and Statement~\ref{f-3-big-line} of Lemma~\ref{basic-facts-lemma}), we have that
    $|B_{i+1,j}|>(1+\epsilon_2)qh_{i+1}$. They follow from
    Theorem~\ref{chernoff1-theorem} and
    Theorem~\ref{ourchernoff2-theorem}.

    We assume $|B_{i+1,j}|\in \left[(1-\epsilon_2)qh_{i+1},
    (1+\epsilon_2)qh_{i+1}\right]$. Thus, $a_iF'(B_{i+1,j})\in
    [(1-\epsilon_2)t_{i,j}, (1+\epsilon_2)t_{i,j}]$. So,
    $a_iF'(B_{i+1,j})\in \left[{(1-\epsilon_2)F'(R_{i,j})\over
        1+\delta}, (1+\epsilon_2)F'(R_{i,j})(1+\delta)\right]$.

    We have

    \begin{eqnarray}
    a_iF'(R_{i+1})&=&a_i\left(\sum_j F'(B_{i+1, j})\right)\label{to-epsilon4-ineqn0}\\
    &\le& (1+\epsilon_2)(1+\delta)F'(R_{i}-V_i)+{2\epsilon_2F'(R_i-V_i)\over
        v(\delta,z_{\min}, z_{\max}, L)} \cdot v(\delta,z_{\min}, z_{\max}, L)\label{to-epsilon4-ineqn0a}\\
    &\le& ((1+\epsilon_2)(1+\delta)+2\epsilon_2)F'(R_i-V_i)\label{to-epsilon4-ineqn-a2}\\
    &\le& (1+4\epsilon_2)F'(R_i-V_i)\label{to-epsilon4-ineqn-b2}\\
    &\le& (1+\epsilon_1)F'(R_i-V_i)\label{to-epsilon4-ineqn-b3}.
    \end{eqnarray}

    The transition from (\ref{to-epsilon4-ineqn0}) to (\ref{to-epsilon4-ineqn0a}) is
    by inequality (\ref{ai-Fi-ineqn2}).
    The transition from (\ref{to-epsilon4-ineqn-a2})
    to (\ref{to-epsilon4-ineqn-b2}) is based on equation
    (\ref{delta-eqn}). The transition from (\ref{to-epsilon4-ineqn-b2})
    to (\ref{to-epsilon4-ineqn-b3}) is based on equations
    (\ref{epsilons-eqn}).

    We have

    \begin{eqnarray}
    a_iF'(R_{i+1})&=&a_i\left(\sum_j F'(B_{i+1, j})\right)\label{to-epsilon4-ineqn1}\\
    &\ge& {(1-\epsilon_2)F'(R_{i}-V_i)\over
        (1+\delta)}-{2\epsilon_2F'(R_i-V_i)\over
        v(\delta,z_{\min}, z_{\max}, L)} \cdot v(\delta,z_{\min}, z_{\max}, L)\label{to-epsilon4-ineqn1a}\\
    &\ge& \left({(1-\epsilon_2)\over (1+\delta)}-2\epsilon_2\right)F'(R_i-V_i)\label{to-epsilon4-ineqn1b}\\
    &\ge& (1-4\epsilon_2)F'(R_i-V_i)\label{to-epsilon4-ineqn1b3}\\
    &\ge& (1-\epsilon_1)F'(R_i-V_i).\label{to-epsilon4-ineqn1b4}
    \end{eqnarray}

The transition from (\ref{to-epsilon4-ineqn1}) to
    (\ref{to-epsilon4-ineqn1a}) is based on inequality (\ref{ai-Fi-ineqn2}).
     The transition from (\ref{to-epsilon4-ineqn1b3})
    to (\ref{to-epsilon4-ineqn1b4}) is based on equations
    (\ref{epsilons-eqn}).

    Statement~\ref{case3-lemma}: In the rest of the proof, we assume
    that if $|R_i|-|V_i|\ge h_{i+1}$, then $F'(R_{i+1})=F'(R_i-V_i)$,
    and if $|R_i|-|V_i|< h_{i+1}$, then $a_iF'(R_{i+1})\in
    [(1-\epsilon_1)F'(R_i-V_i), (1+\epsilon_1)F'(R_i-V_i)]$.

    In order to prove Statement~\ref{case3-lemma},  we give an inductive
    proof that $s_{k+1}'F'(R_{k+1})+\sum_{i=1}^{k} s_i'F'(V_i)\in
    [(1-\epsilon_1)^kS, (1+\epsilon_1)^kS]$. It is trivial for $k=0$.
    Assume that $s_{k}'F'(R_{k})+\sum_{i=1}^{k-1} s_i'F'(V_i)\in
    [(1-\epsilon_1)^{k-1}S, (1+\epsilon_1)^{k-1}S]$.

    Since $F'(R_{k})=F'(R_k-V_k)+F'(V_k)$, we have
    $a_kF'(R_{k+1})+F'(V_k)\in [(1-\epsilon_1)F'(R_{k}),
    (1+\epsilon_1)F'(R_{k})]$.

    Thus, we have
    \begin{eqnarray*}
        s_{k+1}'F'(R_{k+1})+\sum_{i=1}^{k} s_i'F'(V_i)&=&
        s_{k+1}'F'(R_{k+1})+s_{k}'F'(V_k)+\sum_{i=1}^{k-1} s_iF'(V_i)\\
        &=& s_{k}'(a_kF'(R_{k+1})+F'(V_k))+\sum_{i=1}^{k-1} s_iF'(V_i)\\
        &\le& (1+\epsilon_1)s_{k}'F'(R_{k})+\sum_{i=1}^{k-1} s_iF'(V_i)\\
        &\le& (1+\epsilon_1)\left(s_{k}'F'(R_{k})+\sum_{i=1}^{k-1} s_iF'(V_i)\right)\\
        &\le& (1+\epsilon_1)^kS.
    \end{eqnarray*}

    Similarly,  we have
    \begin{eqnarray*}
        s_{k+1}'F'(R_{k+1})+\sum_{i=1}^{k} s_i'F'(V_i)&=&
        s_{k+1}'F'(R_{k+1})+s_{k}F'(V_k)+\sum_{i=1}^{k-1} s_i'F'(V_i)\\
        &=& s_{k}'\left(a_kF'(R_{k+1})+F'(V_k)\right)+\sum_{i=1}^{k-1} s_i'F'(V_i)\\
        &\ge& (1-\epsilon_1)s_{k}'F'(R_{k})+\sum_{i=1}^{k-1} s_i'F'(V_i)\\
        &\ge& (1-\epsilon_1)\left(s_{k}'F'(R_{k})+\sum_{i=1}^{k-1} s_i'F'(V_i)\right)\\
        &\ge& (1-\epsilon_1)^kS.
    \end{eqnarray*}

    Thus, we have $s_{k+1}'F'(R_{k+1})+\sum_{i=1}^{k} s_i'F'(V_i)\in
    [(1-\epsilon_1)^kS, (1+\epsilon_1)^kS]$.

    Therefore, with probability at least $1-y\gamma_2-y\gamma_2$,
    $\sum_{i=1}^{y} s_i'F'(V_i)\in [(1-\epsilon_1)^yS,
    (1+\epsilon_1)^yS]\subseteq [(1-\epsilon_1y)S, (1+2\epsilon_1y)S]$
    by Lemma~\ref{ex-lemma}.

\end{proof}

Lemma~\ref{time-lemma} gives the time complexity of the algorithm.
The running time depends on several parameters.

\begin{lemma}\label{time-lemma}
    The algorithm ApproximateUnion(.) runs in
    $\bigO\left({{\setCount}\fOne(\setCount)\fTwo(\setCount)\over
        z_{\min}}\cdot \left({\log {z_{\max}\over z_{\min}}\over \log
        \fFive(\setCount)}\right)\right)$ time.
\end{lemma}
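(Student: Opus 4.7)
The plan is to separately bound the cost of the one-time initialization (lines 1--\ref{R1-select-line}) and the cost of a single execution of the body of Stage $i$ (lines 8--32), then multiply the per-stage cost by the total number of stages provided by Lemma~\ref{stage-lemma}.

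\textbf{Initialization.} Under the model in Definition~\ref{model-def}, one random choice of $L$ (Definition~\ref{random-choice-def}) selects an index $i$ according to weights $m_i/M$ and invokes $\randomElm(A_i)$ once, which I treat as $O(1)$ cost. So line~\ref{R1-select-line} takes $O(h_1)$ time, and by the definition of $h_1=\fSeven(\setCount)$ together with equation~(\ref{fSeven-def-eqn}) this is $O\big(\setCount\fOne(\setCount)/z_{\min}\big)$ time. All other work outside stages is $O(1)$ arithmetic.

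\textbf{Cost of Stage $i$.} The dominating operation is line~11, which asks, for every $x\in R_i$ and every index $k\in H_i$, the membership query ``$x\in A_k?$''. By the model each query is $O(1)$, so line~11 costs $O(|R_i|\cdot u_i)$. Selecting $u_i$ random indices (line~10), scanning $R_i$ to build $V_i$, updating $sum$, and building $R_{i+1}$ via at most $h_{i+1}\le h_i$ random samples from $R_i-V_i$ all together cost $O(|R_i|+u_i+h_{i+1})$, which is absorbed. The key observation is then that $|R_i|\cdot u_i$ is a quantity independent of $i$: by Statement~\ref{Ri-def-line} of Lemma~\ref{basic-facts-lemma} we have $|R_i|\le h_i$, line~\ref{hi-line} gives $h_i=h_1/s_i$ (with $s_1=1$ handling the base case), and line~9 gives $u_i=s_i\fTwo(\setCount)$. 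Multiplying,
\[
|R_i|\cdot u_i\ \le\ h_i\cdot u_i\ =\ \frac{h_1}{s_i}\cdot s_i\fTwo(\setCount)\ =\ h_1\fTwo(\setCount).
\]
So each stage costs $O(h_1\fTwo(\setCount))$ time.

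\textbf{Summing over stages.} By Lemma~\ref{stage-lemma} the number of stages is $O\!\big(\log(z_{\max}/z_{\min})/\log \fFive(\setCount)\big)$. Multiplying the per-stage bound, adding the initialization cost (which is absorbed since $h_1\le h_1\fTwo(\setCount)$), and substituting $h_1=\setCount\fOne(\setCount)/z_{\min}$ from equation~(\ref{fSeven-def-eqn}) yields the claimed total
\[
O\!\left(\frac{\setCount\,\fOne(\setCount)\,\fTwo(\setCount)}{z_{\min}}\cdot \frac{\log(z_{\max}/z_{\min})}{\log \fFive(\setCount)}\right).
\]

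\textbf{Main obstacle.} The only non-routine point is recognizing the cancellation $h_i\cdot u_i=h_1\fTwo(\setCount)$, which makes the per-stage cost uniform and therefore lets Lemma~\ref{stage-lemma} be applied by a single multiplication rather than by a stage-by-stage sum. Everything else is direct accounting against the model's $O(1)$-time primitives (random choice, $\randomElm$, and membership query) and substitution of the definitions of $h_1$, $u_i$, and $s_i$.
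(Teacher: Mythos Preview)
Your proof is correct and follows essentially the same approach as the paper: both observe that the dominant per-stage cost is $h_i\cdot u_i = h_1\,\fTwo(\setCount)$ via the cancellation $h_i=h_1/s_i$, $u_i=s_i\fTwo(\setCount)$, and then multiply by the number of stages from Lemma~\ref{stage-lemma} and substitute $h_1=\setCount\fOne(\setCount)/z_{\min}$. One minor caveat: your parenthetical ``$s_1=1$'' is not literally true in general since line~4 sets $s_1=\setCount/z_{\max}$, but the paper's own proof makes exactly the same identification $h_1 u_1=h_1\fTwo(\setCount)$, so your argument matches it precisely (and the discrepancy vanishes in the default choice $z_{\max}=\setCount$ used in the corollaries).
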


\begin{proof}
    Let $y$ be the total number of stages. By Lemma~\ref{stage-lemma},
    we have $y=\bigO\left({\log {z_{max}\over z_{min}}\over \log
        \fFive(\setCount)}\right)$.

    The time of each stage is $t_i=h_i\cdot
    u_i=h_1\fTwo(\setCount)={{\setCount}\over
        z_{\min}}\fOne(\setCount)\fTwo(\setCount)$, which is mainly from
    line~\ref{alg-cond} of the algorithm. Therefore, the total time is
    $\sum\limits_{i=1}^{y}t_i\le {{\setCount}\over z_{\min}}\cdot
    \fOne(\setCount)\fTwo(\setCount)y$.

\end{proof}
We have Theorem~\ref{nonadaptive-thm} to show the performance of the
algorithm. The algorithm is sublinear if $\Thick(L)\ge
{\setCount}^a$ for a fixed $a>0$, and has a $z_{\min}$ with
$\Thick(L)\ge z_{\min}\ge {\setCount}^{b}$ for a positive fixed $b$
($b$ may not be equal to $a$) to be part of input to the algorithm.

\begin{theorem}\label{nonadaptive-thm}
    The algorithm ApproximateUnion(.) takes
    $\bigO\left({{\setCount}\fOne(\setCount)\fTwo(\setCount)\over
        z_{\min}}\cdot \left({\log {z_{\max}\over z_{\min}}\over \log
        \fFive(\setCount)}\right)\right)$ time and $\bigO\left({\log
        {z_{\max}\over z_{\min}}\over \log \fFive(\setCount)}\right)$ rounds
    such that with probability at least $1-\gamma$, it gives a
    $$sum\cdot M\in
    [{(1-\epsilon)(1-\alpha_L)(1-\beta_L)}\cdot |A_1\cup\cdots \cup
    A_{\setCount}|, {(1+\epsilon)(1+\alpha_R)(1+\beta_R)}\cdot
    |A_1\cup\cdots \cup A_{\setCount}|],$$ where $z_{\min}$ and
    $z_{\max}$ are parameters with $1\le z_{\min}\le \Thick(L)\le
    \ThickMax(L)$ \\$\le z_{\max}\le \setCount$, where functions
    $\fFive(.)$,  $\fOne(.)$, and $\fTwo(.)$ are defined in equations~(\ref{f-1-def}), (\ref{f5-def-eqn}), and~(\ref{f2-def-eqn}), respectively.
\end{theorem}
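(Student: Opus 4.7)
The plan is to assemble Theorem~\ref{nonadaptive-thm} from the four preceding lemmas: the two complexity claims follow immediately, while the accuracy claim is obtained by chaining the approximation guarantees of Lemma~\ref{first-phase-lemma} and Lemma~\ref{induction-lemma}, and the failure probability is controlled by a union bound over the events that fail the intermediate lemmas.

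\medskip
\noindent\textbf{Complexity.} The round bound $\bigO\!\left(\log(z_{\max}/z_{\min})/\log \fFive(\setCount)\right)$ is just Lemma~\ref{stage-lemma}, and the time bound is exactly Lemma~\ref{time-lemma}; nothing more is needed for these two parts of the statement.

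\medskip
\noindent\textbf{Accuracy chain.} Let $A=|A_1\cup\cdots\cup A_{\setCount}|$, let $y$ be the number of stages executed, and let $S=F(R_1,h_1,1)$. The algorithm returns $sum\cdot M$, where by construction $sum=\sum_{i=1}^{y}s_i'F'(V_i)$. First I would apply Lemma~\ref{induction-lemma}(ii), which gives (conditioned on its good events) $sum\in[(1-y\epsilon_1)S,(1+2y\epsilon_1)S]$. Then I would multiply through by $M$ and invoke Lemma~\ref{first-phase-lemma} to sandwich $SM$ between $\frac{(1-\epsilon_0)(1-\alpha_L)(1-\beta_L)}{1+\delta}A$ and $(1+\epsilon_0)(1+\alpha_R)(1+\beta_R)(1+\delta)A$. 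Composing the two estimates yields
\begin{align*}
sum\cdot M &\le (1+2y\epsilon_1)(1+\epsilon_0)(1+\delta)(1+\alpha_R)(1+\beta_R)\,A,\\
sum\cdot M &\ge \frac{(1-y\epsilon_1)(1-\epsilon_0)}{1+\delta}(1-\alpha_L)(1-\beta_L)\,A.
\end{align*}
It then remains to check, using the parameter settings in (\ref{epsilons-eqn})--(\ref{delta-eqn}), namely $\epsilon_0=\epsilon/9$, $\epsilon_1=\epsilon_0/(6\log\setCount)$, and $\delta=\epsilon_2/2=\epsilon_1/8$, together with the bound $y=\bigO(\log\setCount)$ from Lemma~\ref{stage-lemma}, that $y\epsilon_1=O(\epsilon_0)$ and $\delta=O(\epsilon_0)$, so
$(1+2y\epsilon_1)(1+\epsilon_0)(1+\delta)\le 1+\epsilon$ and $(1-y\epsilon_1)(1-\epsilon_0)/(1+\delta)\ge 1-\epsilon$. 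I would do this with Lemma~\ref{ex-lemma} to convert products of $(1\pm\text{small})$ factors into single additive error terms.

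\medskip
\noindent\textbf{Failure probability.} By a union bound, the bad events to control are: (a) Lemma~\ref{first-phase-lemma} failing, which costs at most $\gamma_1=\gamma/3$; (b) one of the $y$ invocations of Lemma~\ref{middle-lemma} failing, contributing at most $y\gamma_2$; and (c) Lemma~\ref{induction-lemma}(ii) failing, which its statement bounds by $2y\gamma_2$. With $\gamma_2=\gamma/(6\log\setCount)$ and $y=\bigO(\log\setCount)$, these sum to at most $\gamma_1+3y\gamma_2\le\gamma$, as desired.

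\medskip
\noindent\textbf{Main obstacle.} The only nontrivial step is verifying that the composition of the $(1\pm y\epsilon_1)$, $(1\pm\epsilon_0)$, and $(1\pm\delta)$ slacks collapses into a single $(1\pm\epsilon)$ factor. This is a routine but careful arithmetic check: the choice $\epsilon_1=\epsilon_0/(6\log\setCount)$ was engineered precisely so that $y\epsilon_1$ remains a small constant multiple of $\epsilon_0$ across all $y=\Theta(\log\setCount)$ rounds, while $\delta$ is small enough to be absorbed. Everything else is bookkeeping: invoking the two accuracy lemmas, the two complexity lemmas, and taking a union bound.
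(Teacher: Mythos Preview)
Your overall assembly is the same as the paper's, but there is one genuine gap in your accuracy chain. You write ``by construction $sum=\sum_{i=1}^{y}s_i'F'(V_i)$'', and this is not true: line~13 of the algorithm adds $s_i'\sum_{x\in V_i}\frac{u_i}{S(x,H_i)\,\setCount}$, which uses the \emph{estimated} thickness $S(x,H_i)\setCount/u_i$, not the true $T(x,L)$. So $sum$ is only an approximation to $\sum_{i=1}^{y}s_i'F'(V_i)$, and you need Lemma~\ref{middle-lemma} (Statement~1) to show that for each $x\in V_i$ the quantity $\frac{u_i}{S(x,H_i)\,\setCount}$ lies in $\bigl[\frac{1}{(1+\epsilon_1)T(x,L)},\,\frac{1}{(1-\epsilon_1)T(x,L)}\bigr]$. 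The paper inserts exactly this step first: with probability at least $1-y\gamma_2$,
\[
sum\in\Bigl[(1-\epsilon_1)\sum_{i=1}^{y} s_i'F'(V_i),\ (1+\epsilon_1)\sum_{i=1}^{y} s_i'F'(V_i)\Bigr],
\]
and only then applies Lemma~\ref{induction-lemma} and Lemma~\ref{first-phase-lemma} as you do. The extra $(1\pm\epsilon_1)$ factor is absorbed in the final arithmetic check, so the upper product becomes $(1+2y\epsilon_1)(1+\epsilon_0)(1+\epsilon_1)(1+\delta)\le 1+\epsilon$ and similarly on the lower side.

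You do charge $y\gamma_2$ for Lemma~\ref{middle-lemma} in your union bound, but you never actually use its conclusion in the accuracy argument, so as written the accuracy claim does not follow. Once you insert this missing approximation layer, your proof matches the paper's essentially line for line.
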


\begin{proof}
    Let $y$ be the number of stages.
    By Lemma~\ref{middle-lemma}, with probability at least
    $1-y\gamma_2$, $$sum\in \left[\left(1-\epsilon_1\right)\sum\limits_{i=1}^{y} s_i'F'(V_i),\, \left(1+\epsilon_1\right)\sum\limits_{i=1}^{y} s_i'F'(V_i)\right].$$

    By Lemma~\ref{induction-lemma}, with probability at least $1-2y\gamma_2$,
    $$\sum_{i=1}^{y}
    s_i'F'(V_i)\in \left[\left(1-y\epsilon_1\right)F(R_1,h_1,1),\, \left(1+2y\epsilon_1\right)F(R_1,h_1,1)\right].$$
    By  Lemma~\ref{first-phase-lemma},
    with probability at least $1-\gamma_1$,
    \begin{eqnarray*}
        F(R_1, h_1,
        1)\left(\sum\limits_{i=1}^{\setCount}m_i\right)\in
        [&&\frac{(1-\epsilon_0){(1-\alpha_L)(1-\beta_L)}}{1+\delta}|A_1\cup\cdots \cup
        A_{\setCount}|,\\
        &&  {(1+\epsilon_0)(1+\alpha_R)(1+\beta_R)(1+\delta)}|A_1\cup\cdots \cup
        A_{\setCount}|].
    \end{eqnarray*}



    Therefore, with probability at least
    $1-\gamma_1-2y\gamma_2$,
    \begin{eqnarray*}
        sum\cdot M\in
        [&&\frac{(1-y\epsilon_1)(1-\epsilon_0)(1-\epsilon_1)(1-\alpha_L)(1-\beta_L)}{1+\delta}\cdot
        |A_1\cup\cdots \cup A_{\setCount}|,\\
        &&{(1+2y\epsilon_1)(1+\epsilon_0)(1+\epsilon_1)(1+\alpha_R)(1+\beta_R)(1+\delta)}|A_1\cup\cdots
        \cup A_{\setCount}|].
    \end{eqnarray*}
    Now assume
    \begin{eqnarray*}
        sum\cdot M\in
        [&&\frac{(1-y\epsilon_1)(1-\epsilon_0)(1-\epsilon_1)(1-\alpha_L)(1-\beta_L)}{1+\delta}\cdot
        |A_1\cup\cdots \cup A_{\setCount}|,\\
        &&{(1+2y\epsilon_1)(1+\epsilon_0)(1+\epsilon_1)(1+\alpha_R)(1+\beta_R)(1+\delta)}|A_1\cup\cdots
        \cup A_{\setCount}|].
    \end{eqnarray*}

    By Statement~\ref{sub-add-case} of Lemma~\ref{ex-lemma}, we have
    $$1-\epsilon\le 1-y\epsilon_1-\epsilon_0-\frac{9}{8}\epsilon_1\le
    (1-y\epsilon_1)(1-\epsilon_0)\left(1-\frac{9}{8}\epsilon_1\right)\leq \frac{(1-y\epsilon_1)(1-\epsilon_0)\left(1-\epsilon_1\right)}{1+\delta},$$
    and
    $(1+2y\epsilon_1)(1+\epsilon_0)(1+\epsilon_1)(1+\delta)\le
    (1+2y\epsilon_1)(1+2\epsilon_0+\epsilon_1)(1+\delta)\le
    (1+2y\epsilon_1)(1+4\epsilon_0+2\epsilon_1+\delta)\le
    (1+8\epsilon_0+4\epsilon_1+2\delta+2y\epsilon_1)\le
    (1+8\epsilon_0+4\epsilon_1+\epsilon_2+2y\epsilon_1)\le
    (1+8\epsilon_0+{\epsilon_0\over 3}+{\epsilon_0\over
        3}+{\epsilon_0\over 3})\le 1+9\epsilon_0\le 1+\epsilon$.
    Therefore,
    $$sum\cdot M\in
    [{(1-\epsilon)(1-\alpha_L)(1-\beta_L)}\cdot |A_1\cup\cdots \cup
    A_{\setCount}|, {(1+\epsilon)(1+\alpha_R)(1+\beta_R)}\cdot
    |A_1\cup\cdots \cup A_{\setCount}|].$$

    The algorithm may fail at the case after selecting $R_1$, or one of
    the stages. By the union bound, the failure probability is at most
    $\gamma_1+2\gamma_2\cdot \log {\setCount}\le \gamma$. We have that
    with probability at least $1-\gamma$ to output the sum that
    satisfies the accuracy described in the theorem.
    The running time and the number of rounds of the algorithm follow
    from Lemma~\ref{time-lemma} and Lemma~\ref{stage-lemma},
    respectively.
\end{proof}

Since $1\le z_{\min}\le \Thick(L)\le \ThickMax(L)\le z_{\max}\le
\setCount$, we have the following Corollary~\ref{nearlinear-coro}.
Its running time is almost linear in the classical model.

\begin{corollary}\label{nearlinear-coro}
    There is a $\bigO(\poly({1\over \epsilon},\log{1\over \gamma})\cdot
    {\setCount}\cdot (\log {\setCount})^{\bigO(1)})$ time and
    $\bigO(\log \setCount)$ rounds algorithm for $|A_1\cup A_2\cup\cdots
    A_{\setCount}|$  such that with probability at least $1-\gamma$, it
    gives a $sum\cdot M\in [{(1-\epsilon)(1-\alpha_L)(1-\beta_L)}\cdot
    |A_1\cup\cdots \cup A_{\setCount}|,
    {(1+\epsilon)(1+\alpha_R)(1+\beta_R)}\cdot |A_1\cup\cdots \cup
    A_{\setCount}|]$.
\end{corollary}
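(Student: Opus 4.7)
The plan is to derive Corollary~\ref{nearlinear-coro} directly from Theorem~\ref{nonadaptive-thm} by instantiating the parameters $z_{\min}$, $z_{\max}$, and the exponent $c_1$ that controls $\fFive(\setCount)=8\setCount^{c_1}$. Since no information about $\Thick(L)$ or $\ThickMax(L)$ is assumed, I would take the default ``extreme'' values $z_{\min}=1$ and $z_{\max}=\setCount$, which trivially satisfy the hypothesis $1\le z_{\min}\le \Thick(L)\le \ThickMax(L)\le z_{\max}\le \setCount$. This reduces the question to showing that the time bound of Theorem~\ref{nonadaptive-thm} simplifies to $\bigO(\poly(1/\epsilon,\log 1/\gamma)\cdot \setCount\cdot(\log \setCount)^{\bigO(1)})$ and that the round count simplifies to $\bigO(\log \setCount)$ for an appropriate choice of $c_1$.

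To keep the time nearly linear I would choose $c_1=0$, so $\fFive(\setCount)=8$ is a constant. Then $\log\fFive(\setCount)=\Theta(1)$, and Lemma~\ref{stage-lemma} immediately gives the round complexity $\bigO(\log{(z_{\max}/z_{\min})}/\log\fFive(\setCount))=\bigO(\log\setCount)$, matching the corollary's round bound. For the time, Lemma~\ref{time-lemma} yields $\bigO\!\left(\setCount\cdot \fOne(\setCount)\fTwo(\setCount)\cdot \log\setCount\right)$ after substituting $z_{\min}=1$ and $\log\fFive(\setCount)=\Theta(1)$. I then plug in Statements~\ref{f-1-line} and~\ref{f-6-line} of Lemma~\ref{basic-facts-lemma}: with $c_1=0$, these give $\fOne(\setCount)=\bigO(\poly(1/\epsilon)\cdot(\log\setCount)^{\bigO(1)}+(\log\setCount)^{\bigO(1)}\log(1/\gamma)/\epsilon^{3})$ and $\fTwo(\setCount)=\bigO((\log\setCount)^{\bigO(1)}(\log(1/\gamma)+\log\setCount)/\epsilon^{2})$. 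Multiplying these two polylogarithmic factors together, the overall running time collapses into the desired form $\bigO(\poly(1/\epsilon,\log 1/\gamma)\cdot\setCount\cdot (\log\setCount)^{\bigO(1)})$.

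Finally, the accuracy and failure probability require no new argument: Theorem~\ref{nonadaptive-thm} already states that, with probability at least $1-\gamma$, the output $sum\cdot M$ lies in the interval $[{(1-\epsilon)(1-\alpha_L)(1-\beta_L)}|A_1\cup\cdots\cup A_{\setCount}|,\,{(1+\epsilon)(1+\alpha_R)(1+\beta_R)}|A_1\cup\cdots\cup A_{\setCount}|]$, which is exactly the guarantee asserted in the corollary. So the proof is essentially a bookkeeping step that specializes the general Theorem~\ref{nonadaptive-thm} to the ``parameter-free'' regime.

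The only delicate point, and the place where I would double-check the calculation, is confirming that with $c_1=0$ the expression for $\fOne(\setCount)$ (which itself depends on $\fFive(\setCount)$, $\fSix(\setCount)$, and $\fFour(\setCount)$ through equation~(\ref{f-1-def})) indeed collapses to a $\poly(1/\epsilon,\log 1/\gamma)\cdot(\log\setCount)^{\bigO(1)}$ factor rather than picking up a hidden polynomial in $\setCount$; this amounts to verifying that $\fSix(\setCount)=\bigO((\log\setCount)/(\delta\epsilon_3))$ via Statement~\ref{f-6-def-line} of Lemma~\ref{basic-facts-lemma} and that $\fFour(\setCount)/\fFive(\setCount)$ is polylogarithmic in $\setCount$. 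Both follow from the explicit expressions already established, so this is a routine verification rather than a genuine obstacle.
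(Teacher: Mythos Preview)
Your proposal is correct and follows essentially the same approach as the paper: set $c_1=0$ (so $\fFive(\setCount)=8$), take $z_{\min}=1$, $z_{\max}=\setCount$, and invoke Theorem~\ref{nonadaptive-thm} together with Statements~\ref{f-1-line} and~\ref{f-6-line} of Lemma~\ref{basic-facts-lemma} to collapse the time bound to the stated form. Your additional sanity checks on $\fSix(\setCount)$ and $\fFour(\setCount)/\fFive(\setCount)$ are sound and simply make explicit what the paper leaves implicit.
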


\begin{proof}
    We let $\fFive(m)=8$  with $c_1=0$ in equation~(\ref{f5-def-eqn}).
    Let $z_{min}=1$ and $z_{max}=\setCount$. It follows from
    Theorem~\ref{nonadaptive-thm} and Statements~\ref{f-1-line} and~\ref{f-6-line} of
    Lemma~\ref{basic-facts-lemma} as
    we have the inequality (\ref{final-ineqn-coro1}):
    \begin{eqnarray}
      \left({{\setCount}\fOne(\setCount)\fTwo(\setCount)\over
        z_{\min}}\cdot \left({\log {z_{\max}\over z_{\min}}\over \log
        \fFive(\setCount)}\right)\right)=\bigO(\poly({1\over \epsilon},\log{1\over \gamma})\cdot
    {\setCount}\cdot (\log {\setCount})^{\bigO(1)}).\label{final-ineqn-coro1}
    \end{eqnarray}
\end{proof}

\begin{corollary}\label{nearlinear-coro2}For each $\xi>0$, there is a $\bigO(\poly({1\over \epsilon},\log{1\over
            \gamma})\cdot \setCount^{1+\xi})$ time and $\bigO({1\over \xi})$
    rounds algorithm for $|A_1\cup A_2\cup\cdots A_{\setCount}|$ such
    that with probability at least $1-\gamma$, it gives a $sum\cdot M\in
    [{(1-\epsilon)(1-\alpha_L)(1-\beta_L)}\cdot |A_1\cup\cdots \cup
    A_{\setCount}|, {(1+\epsilon)(1+\alpha_R)(1+\beta_R)}\cdot
    |A_1\cup\cdots \cup A_{\setCount}|]$.
\end{corollary}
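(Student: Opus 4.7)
The plan is to reduce Corollary~\ref{nearlinear-coro2} to Theorem~\ref{nonadaptive-thm} by choosing the growth rate of $\fFive$ to trade rounds for time. Keep the default parameters $z_{\min}=1$ and $z_{\max}=\setCount$, and set $c_1=\xi/2$ in equation~(\ref{f5-def-eqn}), so that $\fFive(\setCount)=8\setCount^{\xi/2}$.

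First I would verify the round complexity. Theorem~\ref{nonadaptive-thm} gives $\bigO\!\left(\frac{\log(z_{\max}/z_{\min})}{\log \fFive(\setCount)}\right)$ rounds. With the above choice, $\log(z_{\max}/z_{\min})=\log \setCount$ and $\log \fFive(\setCount)=\log 8+(\xi/2)\log \setCount=\Theta(\xi \log \setCount)$, so the round count is $\bigO(1/\xi)$, as required.

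Next I would bound the running time. By Statement~\ref{f-1-line} of Lemma~\ref{basic-facts-lemma}, $\fOne(\setCount)=\poly(1/\epsilon,\log(1/\gamma),\log \setCount)$, and by Statement~\ref{f-6-line} of Lemma~\ref{basic-facts-lemma} combined with $c_1=\xi/2$, we have $\fTwo(\setCount)=\bigO\!\left(\setCount^{\xi/2}\cdot\poly(1/\epsilon,\log(1/\gamma),\log \setCount)\right)$. Substituting into the time bound of Theorem~\ref{nonadaptive-thm} gives
$$\bigO\!\left(\setCount\cdot \fOne(\setCount)\cdot \fTwo(\setCount)\cdot \frac{1}{\xi}\right)=\bigO\!\left(\frac{1}{\xi}\cdot \setCount^{1+\xi/2}\cdot\poly(1/\epsilon,\log(1/\gamma),\log \setCount)\right).$$

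The only remaining step is to absorb the residual polylogarithmic factors and the $1/\xi$ constant into the $\setCount^{\xi}$ budget. For any fixed $\xi>0$, we have $1/\xi=\bigO(1)$, and $(\log \setCount)^{\bigO(1)}=\littleo(\setCount^{\xi/2})$ for sufficiently large $\setCount$, so $\poly(\log \setCount)\cdot \setCount^{\xi/2}$ is bounded by $\setCount^{\xi}$ up to constants depending on $\xi$. This yields the desired time bound $\bigO\!\left(\poly(1/\epsilon,\log(1/\gamma))\cdot \setCount^{1+\xi}\right)$. The accuracy guarantee and confidence $1-\gamma$ transfer verbatim from Theorem~\ref{nonadaptive-thm}. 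The only mild tension, and the reason for writing $c_1=\xi/2$ rather than $c_1=\xi$, is that a larger $c_1$ shrinks the round count but inflates $\fTwo(\setCount)$; reserving the extra $\setCount^{\xi/2}$ of slack specifically to swallow the polylogarithmic overhead from $\fOne$ and $\fTwo$ is the main (and essentially only) obstacle.
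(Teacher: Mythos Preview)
Your proposal is correct and follows essentially the same approach as the paper: set $z_{\min}=1$, $z_{\max}=\setCount$, and $c_1=\xi/2$ so that $\fFive(\setCount)=8\setCount^{\xi/2}$, then invoke Theorem~\ref{nonadaptive-thm} together with Statements~\ref{f-1-line} and~\ref{f-6-line} of Lemma~\ref{basic-facts-lemma}. The paper's proof is terser, but your additional discussion of why the $\setCount^{\xi/2}$ slack absorbs the polylogarithmic overhead is exactly the computation the paper leaves implicit.
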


\begin{proof}
    We let $\fFive(m)=8\setCount^{\xi/2}$ with $c_1={\xi\over 2}$ in
    equation~(\ref{f5-def-eqn}). Let $z_{min}=1$ and
    $z_{max}=\setCount$. It follows from Theorem~\ref{nonadaptive-thm}
    and Statements~\ref{f-1-line} and~\ref{f-6-line} of Lemma~\ref{basic-facts-lemma} as
    we have the inequality (\ref{final-ineqn-coro2}):
    \begin{eqnarray}
      \left({{\setCount}\fOne(\setCount)\fTwo(\setCount)\over
        z_{\min}}\cdot \left({\log {z_{\max}\over z_{\min}}\over \log
        \fFive(\setCount)}\right)\right)=\bigO(\poly({1\over \epsilon},\log{1\over
            \gamma})\cdot
            \setCount^{1+\xi}).\label{final-ineqn-coro2}
    \end{eqnarray}
\end{proof}

An interesting open problem is to find an $\bigO(\setCount)$ time
and  $\bigO(\log \setCount)$ rounds approximation scheme for
$|A_1\cup A_2\cup\cdots A_{\setCount}|$ with a similar accuracy
performance as Corollary~\ref{nearlinear-coro}. We were not able to
adapt the method from Karp, Luby, and Madras \cite{KarpLubyMadras89}
to solve this problem.

\section{Approximate Random Sampling for Lattice Points in High Dimensional Ball}\label{random-lattice-section}
In this section, we propose algorithms to approximate the numebr of lattice points in a high dimensional ball, and also develop algorithms to generate a random lattice point inside a high dimensional ball.

Before present the algorithms, some definitions are given below.
\begin{definition}\label{definition1}Let integer $d>0$ be a dimensional number, $\mathbb{R}^d$ be the $d-$dimensional Euclidean Space.
    \begin{enumerate}

        \item For two points $p,\ q \in\mathbb{R}^d,$ define $||p-q||$ to be Euclidean Distance.
        \item A point $p\in \mathbb{R}^d$ is a lattice point if $p=(y_1, ..., y_d)$ with $y_i\in \mathbb{Z}$ for $i=1, 2, ..., d.$
        \item Let $p\in\mathbb{R}^d$, and $r>0.$ Define $B_d(r, p, d)$ be a $d-$dimensional ball of radius $r$ with center at $p.$
        \item Let $q=(\mu_1, \mu_2, ..., \mu_d)\in \mathbb{R}^d.$ 
        Define $B_d(r, q, k)=\{(z_1, z_2, ..., z_d)\in\mathbb{R}^d : z_1=\mu_1, ..., z_{d-k}=\mu_{d-k}\  and\ \sum\limits_{i=1}^{d}(\mu_i-z_i)^2\leq r^2\}.$
        \item Let $p\in\mathbb{R}^d$, and $r>0.$ Define $C(r, p, d)$ be
        the number of lattice points in the $d-$dimensional ball of radius $r$ with the
        center at $p$.
        \item Let $\lambda,$ $l$ be real numbers. Define $D(\lambda,d, l)=\{ (x_1,\cdots, x_d) : (x_1,\cdots, x_d)$ with $x_k=i_k+j_k\lambda$ for an integer $j_k\in [-l,\ l]$, and another arbitrary integer $i_k$ for $k=1, 2, ..., d\}$.
        \item Let $\lambda,$ $l$ be real numbers. Define $D^{\ast}(\lambda,d, l)=\{ (x_1,\cdots, x_d) : (x_1,\cdots, x_d)$ with $x_k=j_k\lambda$ for an integer $j_k\in [-l,\ l]$ with $k=1, 2, ..., d\}$.
        \item Let $\lambda=a^{-m},$ where $a$ and $m$ are integer and $a\geq 2.$ Define $D^{\ast\ast}(\lambda,d)=\{(x_1,\cdots, x_d): (x_1,\cdots, x_d)$ with $x_k=i_k+j_k\lambda$ for an integer $j_k\in [-\lambda^{-1}+1, \lambda^{-1}-1],$ and another arbitrary integer $i_k$ for $k=1, 2, ..., d\}$.

    \end{enumerate}
\end{definition}

\subsection{Randomized Algorithm for Approximating Lattice Points for High Dimensional Ball}
In this section, we develop algorithms
to approximate the number of lattice points in a $d$-dimensional ball $B_d(r, p, d)$.
Two subsubsections are discussed below.

\subsubsection{Counting Lattice Points of High Dimensional Ball with Small Radius}

In this section, we develop a dynamic programming algorithm to count the number of lattice points in $d-$dimensional ball $B_d(r, p, d).$ Some definitions and lemmas that is used to prove the performance of algorithm are given before present the algorithm.

\begin{definition}\label{set-lattice-points}
    Let $p$ be a point in $\mathbb{R}^d,$ and $p\in D(\lambda,d, L).$ Define $E(r', p, h, k)$ be the set of $k-$dimensional balls $B_d(r', q, k)$ of radii $r'$ with center at
    $q=(y_1, y_2, ..., y_h, x_{h+1}, ..., x_d)$ where $h=d-k$ is the number of initial integers of the center $q$ and $y_t\in \mathbb{Z}$ for $t=1, 2, ..., h.$
\end{definition}
Lemma~\ref{equivalence-classes} shows that for any two balls with same dimensional number, if their radii equal and the number of initial integers of their center also equal, then they have same number of lattice points.
\begin{lemma}\label{equivalence-classes}
    For two $k-$dimensional balls $B_d(r, q, k)$ and $B_d(r, q', k),$ if  $B_d(r, q, k)\in  E(r, p, h, k)$ and $B_d(r, q', k)\in E(r, p, h, k),$ then $C(r, q, k)=C(r, q', k).$
\end{lemma}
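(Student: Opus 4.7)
The plan is to unpack the definitions and show that two balls lying in the same class $E(r, p, h, k)$ agree in the only coordinates of the center that can affect the lattice-point count, so the count is forced to be equal. In fact, two such balls differ only by an integer translation in their first $h = d - k$ coordinates while sharing the same last $k$ coordinates (inherited from $p$), and integer translations preserve $\mathbb{Z}^d$ bijectively.

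First I would rewrite $B_d(r, q, k)$ for $q = (y_1,\ldots, y_h, x_{h+1}, \ldots, x_d) \in E(r, p, h, k)$ as the Cartesian product of the singleton $\{(y_1,\ldots,y_h)\}$ and a $k$-dimensional Euclidean ball of radius $r$ around $(x_{h+1}, \ldots, x_d)$. A lattice point $(z_1,\ldots,z_d) \in B_d(r, q, k)$ must satisfy $z_i = y_i$ for $i \le h$, and this is consistent with $z_i \in \mathbb{Z}$ precisely because the definition of $E$ forces $y_i \in \mathbb{Z}$. The remaining constraints reduce to
\[
(z_{h+1}, \ldots, z_d) \in \mathbb{Z}^{k} \quad \text{with} \quad \sum_{i=h+1}^{d} (x_i - z_i)^2 \le r^2.
\]
Hence $C(r, q, k)$ equals the number of integer points in a $k$-dimensional ball of radius $r$ centered at $(x_{h+1},\ldots, x_d)$, and it depends on $q$ only through those last $k$ coordinates.

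Next, since every ball in $E(r, p, h, k)$ is built by fixing the last $k$ coordinates of the center to coincide with the corresponding coordinates of $p$ and letting only the first $h$ coordinates range over $\mathbb{Z}^h$, any two balls $B_d(r, q, k)$ and $B_d(r, q', k)$ in $E(r, p, h, k)$ give rise to identical counting expressions, so $C(r, q, k) = C(r, q', k)$. Equivalently, the map $(z_1,\ldots,z_d) \mapsto (z_1 + (y_1' - y_1), \ldots, z_h + (y_h' - y_h), z_{h+1}, \ldots, z_d)$ is a bijection between the lattice points of the two balls, since it is an integer translation.

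The only real subtlety is being explicit that the last $k$ coordinates of the center are pinned down by $p$ within $E(r, p, h, k)$; this is implicit in the notation but needs to be stated carefully so that the translation map above is actually a bijection between the two lattice-point sets rather than merely a size-preserving correspondence between shifted copies. Once that is settled, the conclusion is immediate.
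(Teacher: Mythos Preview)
Your proposal is correct and takes essentially the same approach as the paper: both arguments rest on the observation that two centers in $E(r,p,h,k)$ share the same last $k$ coordinates $(x_{h+1},\ldots,x_d)$ of $p$ and differ only by an integer vector in the first $h$ coordinates, so the integer translation $(z_1,\ldots,z_d)\mapsto(z_1+y_1'-y_1,\ldots,z_h+y_h'-y_h,z_{h+1},\ldots,z_d)$ is a bijection between the two lattice-point sets. The paper writes out this bijection explicitly and checks both directions, whereas you first phrase it as a reduction (the count depends only on $(x_{h+1},\ldots,x_d)$) and then note the equivalent bijection; the content is the same.
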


\begin{proof}
    In order to prove that $C(r, q, k)=C(r, q', k),$ we need to show that
    there is a bijection bewtten the set of  of lattice points inside ball $B_d(r, q, k)$ and 
    the set of lattice points inside ball  $B_d(r, q', k),$ where $q=(y_1, y_2, ..., y_h, x_{h+1}, ..., x_d)$
    and
    $q'=(y'_1, y'_2, ..., y'_h, x_{h+1}, ..., x_d)$ with
    $y'_t, y_t\in \mathbb{Z}$ for $t=1, 2, ..., h.$

    Statement $1$: $\forall\ q_1=(z_1, z_2, ..., z_d)\in B_d(r, q, k),$ where $z_t\in \mathbb{Z}$ for $t=1, 2, ..., d.$

    we have
    $$(z_1-y_1)^2+\cdots+(z_h-y_h)^2+(z_{h+1}-x_{h+1})^2+\cdots+(z_d-x_d)^2\leq r^2$$
    then
    $$(z_1+y'_1-y_1-y'_1)^2+\cdots+(z_h+y'_h-y_h-y'_h)^2+(z_{h+1}-x_{h+1})^2+\cdots+(z_d-x_d)^2\leq r^2.$$

    Therefore, there exists a lattice point $(z_1+y'_1-y_1, ..., z_h+y'_h-y_h, z_{h+1}, ..., z_d)\in B_d(r, q', k)$ correspoding to $q_1.$

    Statement $2$: $\forall\ q'_1=(z'_1, z'_2, ..., z'_d) \in B_d(r, q', k),$ where $z'_t\in \mathbb{Z}$ for $t=1, 2, ..., d.$

    we have
    $$(z'_1-y'_1)^2+\cdots+(z'_h-y'_h)^2+(z'_{h+1}-x_{h+1})^2+\cdots+(z'_d-x_d)^2\leq r^2$$
    and
    $$(z'_1-y'_1+y_1-y_1)^2+\cdots+(z'_h-y'_h+y_h-y_h)^2+(z'_{h+1}-x_{h+1})^2+\cdots+(z'_d-x_d)^2\leq r^2.$$

    Therefore, there exists a lattice point $(z'_1-y'_1+y_1, ..., z'_h-y'_h+y_h, z'_{h+1}, ..., z'_d)\in B_d(r, q, k)$ correspoding to $q'_1.$

    Based on above two statements, there exists a bijection between the set of lattice points inside ball $B_d(r, q, k)$ and the set of lattice points inside ball $B_d(r, q', k).$

    Therefore, $C(r, q, k)=C(r, q', k).$
\end{proof}

Lemma~\ref{without-i-lattice-points} shows that we can move ball $B_d(r, q, k)$ by an integer units in every dimension without changing the number of lattice points in the ball.
\begin{lemma}\label{without-i-lattice-points}
    Let $\lambda$ be a real number. For two $k-$dimensional balls $B_d(r, q_1, k)$ and $B_d(r, q_2, k),$ where $q_1=(y_1, y_2, ..., y_{d-k}, x_{d-k+1}, ..., x_d)$, $q_2=(y'_1, y'_2, ..., y'_{d-k}, x'_{d-k+1}, ..., x'_d)$ with $y_t, y'_t\in \mathbb{Z},$ $t=1, 2, ..., d-k,$ and $x_{t'}=i_{t'}+j_{t'}\lambda$, $i_{t'}$ is an integer and $j_{t'}\in[-l, l]$ for $t'=d-k+1, ..., d,$ if $x'_{t'}=j_{t'}\lambda,$ then we have $C(r, q_1, k)=C(r, q_2, k).$
\end{lemma}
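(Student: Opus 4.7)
The plan is to construct an explicit integer translation between the lattice points of the two balls and verify that it preserves both the structural constraint $z_1 = \mu_1, \ldots, z_{d-k} = \mu_{d-k}$ and the distance inequality. Because $q_1$ and $q_2$ differ in the first $d-k$ coordinates only by integers (as both are integer vectors there), and in the last $k$ coordinates by the integer vector $(i_{d-k+1}, \ldots, i_d)$, the entire displacement $q_1 - q_2$ is an integer vector. Translation by an integer vector maps the integer lattice $\mathbb{Z}^d$ to itself bijectively and preserves Euclidean distances, so it should map lattice points in $B_d(r, q_1, k)$ to lattice points in $B_d(r, q_2, k)$.

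Concretely, I would define the map
\[
\phi(z_1, \ldots, z_d) = (y'_1, \ldots, y'_{d-k},\; z_{d-k+1} - i_{d-k+1},\; \ldots,\; z_d - i_d).
\]
The verification then proceeds in three small steps. First, $\phi$ sends lattice points to lattice points: the first $d-k$ coordinates $y'_1, \ldots, y'_{d-k}$ are integers by hypothesis, and the last $k$ coordinates are differences of integers (since each $z_t$ with $t > d-k$ is an integer lattice coordinate of a point in $B_d(r, q_1, k)$, and each $i_t$ is an integer). Second, $\phi$ lands inside $B_d(r, q_2, k)$: the first $d-k$ coordinates match the center $q_2$ as required by the definition of $B_d(r, q_2, k)$, and for the distance condition,
\[
\sum_{t=d-k+1}^{d}\bigl((z_t - i_t) - j_t\lambda\bigr)^2 \;=\; \sum_{t=d-k+1}^{d}(z_t - x_t)^2 \;\le\; r^2,
\]
using $x_t = i_t + j_t\lambda$ and $x'_t = j_t\lambda$. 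Third, $\phi$ has the obvious integer-shift inverse $\phi^{-1}(z'_1, \ldots, z'_d) = (y_1, \ldots, y_{d-k}, z'_{d-k+1} + i_{d-k+1}, \ldots, z'_d + i_d)$, which by the symmetric calculation maps lattice points of $B_d(r, q_2, k)$ to those of $B_d(r, q_1, k)$. Hence $\phi$ is a bijection on the corresponding lattice point sets.

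An alternative, perhaps cleaner, route is to reduce to Lemma~\ref{equivalence-classes}: apply that lemma first to replace $q_1$ by the ball with center $(y'_1, \ldots, y'_{d-k}, x_{d-k+1}, \ldots, x_d)$, obtaining equality of lattice counts for free; then only the integer shifts $i_{d-k+1}, \ldots, i_d$ in the last $k$ coordinates remain, and the verification above applies verbatim to those shifts alone. This factoring makes it clear that no new idea beyond integer translation invariance of $\mathbb{Z}^d$ and of Euclidean distance is needed.

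I do not anticipate any genuine obstacle here; the work is purely bookkeeping to ensure integrality is preserved and the defining distance inequality of the $k$-dimensional ball is untouched. The one thing to be careful about is the definition of $B_d(r, q, k)$, which fixes the \emph{first} $d-k$ coordinates to the center's coordinates: one must check that $\phi$ correctly replaces the initial integer block $(y_1, \ldots, y_{d-k})$ with $(y'_1, \ldots, y'_{d-k})$ rather than merely shifting it, which is why the map is defined to overwrite those coordinates instead of translating them.
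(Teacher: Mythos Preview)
Your proof is correct. The paper's own proof is a one-line appeal to Lemma~\ref{equivalence-classes}, asserting that both $B_d(r,q_1,k)$ and $B_d(r,q_2,k)$ lie in the same class $E(r,p,h,k)$ with $h=d-k$. Taken literally that citation is loose: membership in a single $E(r,p,h,k)$ requires the last $k$ coordinates of the center to agree with those of $p$, whereas $q_1$ and $q_2$ have last $k$ coordinates $i_{t'}+j_{t'}\lambda$ and $j_{t'}\lambda$, which differ by the integer vector $(i_{d-k+1},\ldots,i_d)$. What the paper is really relying on is that the \emph{proof method} of Lemma~\ref{equivalence-classes}---integer translation of $\mathbb{Z}^d$ preserves both lattice membership and Euclidean distance---applies just as well to shifts in the last $k$ coordinates. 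Your explicit bijection $\phi$ makes precisely this step rigorous, and your alternative route factors the argument cleanly: first use Lemma~\ref{equivalence-classes} to replace $(y_1,\ldots,y_{d-k})$ by $(y'_1,\ldots,y'_{d-k})$, then handle the remaining integer shift in the last block directly. So the underlying idea is the same as the paper's, but your write-up actually closes the gap that the paper's one-liner glosses over.
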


\begin{proof}
     Since $B_d(r, q_1, k)\in E(r, p, h, k)$ and $B_d(r, q_2, k)\in E(r, p, h, k)$ with $h=d-k,$ we have $C(r, q_1, k)=C(r, q_2, k)$ via Lemma~\ref{equivalence-classes}.
\end{proof}
We define $R(r, p, d)$ be a set of radii $r'$ for the balls that generated by the intersection of $B_d(r, p, d)$ wiht hyper-plane $x_1=y_1,$ ..., $x_k=y_k,$ ..., $x_d=y_d.$
\begin{definition}\label{definition-R}
    For a $d-$dimensional ball $B_d(r, p, d)$ of radius $r$ with center at $p=(x_1, x_2, ..., x_d)$.

    \begin{enumerate}
        \item  Define $R(r, p, d)=\{r': r'^2=r^2-\sum\limits_{i=1}^{k}(y_i-x_i)^2\ with\  y_i\in \mathbb{Z}\ and\   \sum\limits_{i=1}^{k}(y_i-x_i)^2\leq r^2\ for\ some\ integer\ k\in[1, d]\}$.
    \end{enumerate}
\end{definition}
Lemma~\ref{range-of-radius} shows that we can reduce the cardinality of $R(r, p, d)$ from exponentional to polynomial when setting the element of the ball's center has same type (i.e. $p\in D(\lambda, d, l)$.)
\begin{lemma}\label{range-of-radius}
    Let $B_d(r, p, d)$ be a $d-$dimensional ball of radius $r$ with center at $p,$ where $p\in D^{\ast}(\lambda, d, l),$ then $|R(r, p, d)|\leq 4(r+l|\lambda|)^3l^3d^3$ and
    $R(r, p, d)$ can be generated in $\bigO\left( (r+l|\lambda|)^3l^3d^3\right)$ time.
\end{lemma}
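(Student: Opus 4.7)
The plan is to exploit the structure $p\in D^{\ast}(\lambda,d,l)$ so that every squared-distance sum appearing in $R(r,p,d)$ becomes an integer linear combination of $1,\lambda,\lambda^2$, and then bound the number of such combinations by elementary box counting.

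First I would write $p=(j_1\lambda,\ldots,j_d\lambda)$ with each $j_i\in\mathbb{Z}\cap[-l,l]$. For any $k\in[1,d]$ and integers $y_1,\ldots,y_k$ satisfying the side condition $\sum_{i=1}^k(y_i-x_i)^2\le r^2$, expansion gives
$$\sum_{i=1}^k(y_i-j_i\lambda)^2 \;=\; A-2\lambda B+\lambda^2 C,\qquad A=\sum_{i=1}^k y_i^2,\;\; B=\sum_{i=1}^k y_i j_i,\;\; C=\sum_{i=1}^k j_i^2,$$
with $A,B,C\in\mathbb{Z}$. Consequently every $r'\in R(r,p,d)$ is determined by an integer triple $(A,B,C)$ via $r'=\sqrt{\,r^2-(A-2\lambda B+\lambda^2 C)\,}$, so it suffices to count the admissible triples.

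From $(y_i-x_i)^2\le r^2$ I get $|y_i-x_i|\le r$, hence $|y_i|\le r+l|\lambda|$ since $|x_i|\le l|\lambda|$. Combined with $|j_i|\le l$ and $k\le d$, this forces
$$0\le A\le d(r+l|\lambda|)^2,\qquad |B|\le dl(r+l|\lambda|),\qquad 0\le C\le dl^2.$$
The number of integer triples inside this box is at most $(d(r+l|\lambda|)^2+1)(2dl(r+l|\lambda|)+1)(dl^2+1)$, which after absorbing the $+1$ terms into the leading factors is bounded by $4d^3l^3(r+l|\lambda|)^3$, yielding the cardinality bound.

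For the generation algorithm I would sweep through every integer triple $(A,B,C)$ in the box above; for each one, compute $v=A-2\lambda B+\lambda^2 C$ in $O(1)$ time, and whenever $v\in[0,r^2]$ insert $\sqrt{r^2-v}$ into a hash table that suppresses duplicates. This visits $O((r+l|\lambda|)^3l^3d^3)$ triples with $O(1)$ work each, matching the stated running time. The main subtlety I expect is that the sweep may encounter triples not realized by any actual choice of $(y_1,\ldots,y_k)$, so the output is a priori a superset of $R(r,p,d)$; however, because every realized triple lies inside the enumerated box (directly from the derived $|y_i|$ bound), $R(r,p,d)$ is contained in this output and inherits the same cardinality bound. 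Justifying that this over-approximation suffices for both conclusions is the main conceptual point; the algebraic expansion and coordinate bounds themselves are routine.
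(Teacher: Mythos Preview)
Your approach is essentially identical to the paper's: both expand $(y_i-j_i\lambda)^2$ to write $r'^2=r^2-(A-2\lambda B+\lambda^2 C)$ with integer $A,B,C$, bound each of $A,B,C$ via $|y_i|\le r+l|\lambda|$ and $|j_i|\le l$, and then multiply the three ranges. The only cosmetic discrepancy is the constant: your box count $(d(r+l|\lambda|)^2+1)(2dl(r+l|\lambda|)+1)(dl^2+1)$ does not literally collapse to $4d^3l^3(r+l|\lambda|)^3$ after absorbing the $+1$'s (it gives a somewhat larger constant), but the paper is equally loose here, dropping the $+1$'s entirely, so this is not a substantive gap. Your explicit remark that the enumeration yields a superset of $R(r,p,d)$, and that this suffices, is a point the paper leaves implicit.
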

\begin{proof}
    Since $r'^2=r^2-\sum\limits_{i=1}^{k}(y_i-x_i)^2$ for $0\leq k\leq d,$ we have $r'$ as:
    \begin{eqnarray*}
        r'^{2}&&=r^{2}-(y_{1}-j_1\lambda)^{2}-\cdots-(y_{d}-j_d\lambda)^{2}\\
        &&= r^2-[y_{1}^{2}-2y_{1}j_{1}\lambda+j_{1}^{2}\lambda^2]-\cdots-[y_{d}^{2}-2y_{d}j_{d}\lambda+j_d^{2}\lambda^{2}]\\
        &&= r^2-\{y_1^{2}+y_2^{2}+\cdots+y_{d}^{2}\}\\
        &&+\{2y_1j_1+2y_2j_2+\cdots+2y_dj_d\}\lambda\\
        &&-\{j_1^2+j_2^2+j_3^2+\cdots+j_d^2\}\lambda^2.
    \end{eqnarray*}

    Let $R'=\{r'|r'^2=r^2-(x+y\lambda+z\lambda^2)\  with\ x,\ y,\ and\  z\ is\  nonnegative\  integer\}$,  it is easy to see that $r'\in R'$ then  $R\subseteq R'.$

    Let
    \begin{equation*}\label{productt}
    \left\{
    \begin{array}{lr}
    X=\{x'|x'=y_1^{2}+y_2^{2}+...+y_{d}^{2}\ with\ y_i\in[r-l|\lambda|, r+l|\lambda|],\ 0\leq i\leq d\} &  \\
    Y=\{y'|y'=2y_1j_1+2y_2j_2+...+2y_dj_d\ with\ y_ij_i\in[I(r-l|\lambda|), I(r+l|\lambda|)],\ 0\leq i\leq d\}& \\
    Z=\{z'|z'=j_1^2+j_2^2+j_3^2+...+j_d^2\ with\ j_i\in[-l, l],\ 0\leq i\leq d\}, &
    \end{array}
    \right.
    \end{equation*}
    then we have:
    \begin{equation}\label{product}
    \left\{
    \begin{array}{lr}
    |Z|\leq dl^2 &  \\
    |Y|\leq 4d(r+l|\lambda|)l& \\
    |X|\leq d(r+l|\lambda|)^2. &
    \end{array}
    \right.
    \end{equation}

    For each $r'\in R$, we have $r'^2=r^2-(x+y\lambda+z\lambda^2)$ with $x\in X$, $y\in Y$, and $z\in Z$.
    Therefore, $|R|\leq dl^2\cdot4d(r+l|\lambda|)l\cdot d(r+l|\lambda|)^2=4(r+l|\lambda|)^3l^3d^3$ via inequality~(\ref{product}). Then $R(r, p, d)$ can be generated in $\bigO\left( (r+l|\lambda|)^3l^3d^3\right)$ time.

\end{proof}

Lemma~\ref{range-of-radius-pi} is a spacial case of Lemma~\ref{range-of-radius}. It shows that there at most $(r^2+1)a^{2m}$ cases of the radii when the elements of the center are the type like fractions in base $a$. For example, $p=(3.891, 5.436, ..., 5.743)\in \mathbb{R}^d.$

\begin{lemma}\label{range-of-radius-pi}
    Let $\lambda=a^{-m}$ where $a$ is a interger with  $a\geq 2.$ Let $B_d(r, p, d)$ be a $d-$dimensional ball of radius $r$ with center at $p\in D^{\ast\ast}(\lambda,d),$  then $|R(r, p, d)|\leq (r^2+1)a^{2m}$ and
    $R(r, p, d)$ can be generated in $\bigO\left( (r^2+1)a^{2m}\right)$ time.
\end{lemma}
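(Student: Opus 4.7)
The plan is to reduce the counting of possible squared radii to counting integers in an interval, using the special arithmetic structure of $D^{\ast\ast}(\lambda,d)$ with $\lambda = a^{-m}$. Write the center coordinates as $x_k = i_k + j_k\lambda$ with $i_k,j_k \in \mathbb{Z}$. For any integer lattice coordinate $y_k$, the difference $y_k - x_k = (y_k - i_k) - j_k\lambda = n_k - j_k\lambda$ where $n_k := y_k - i_k \in \mathbb{Z}$. Therefore
\begin{equation*}
\sum_{i=1}^{k}(y_i - x_i)^2 \;=\; \sum_{i=1}^{k}(n_i - j_i\lambda)^2 \;=\; A - 2B\lambda + C\lambda^2,
\end{equation*}
where $A = \sum n_i^2$, $B = \sum n_i j_i$, $C = \sum j_i^2$ are all integers.

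Substituting into the definition of $R(r,p,d)$, every $r' \in R(r,p,d)$ satisfies $r'^2 = r^2 - A + 2B\lambda - C\lambda^2$. Multiplying through by $a^{2m}$ and using $\lambda = a^{-m}$, I get
\begin{equation*}
a^{2m}\, r'^2 \;=\; a^{2m}\, r^2 \;+\; \bigl(-a^{2m}A + 2a^m B - C\bigr),
\end{equation*}
and the second summand is an integer. Hence $a^{2m} r'^2$ lies in the coset $a^{2m}r^2 + \mathbb{Z}$. Combined with the constraint $0 \le r'^2 \le r^2$ from Definition~\ref{definition-R}, this forces $a^{2m} r'^2$ to be one of at most $\lfloor a^{2m} r^2 \rfloor + 1$ distinct real numbers, which is bounded by $a^{2m} r^2 + 1 \le (r^2+1)a^{2m}$ since $a^{2m} \ge 1$. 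Each distinct value of $a^{2m} r'^2$ yields a distinct $r'$, so $|R(r,p,d)| \le (r^2+1)a^{2m}$.

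For the generation bound, I would enumerate every integer $N$ with $0 \le N \le \lfloor a^{2m} r^2 \rfloor$, and for each one output the candidate $r' = \sqrt{(a^{2m} r^2 - N)/a^{2m}}$. This produces a superset of $R(r,p,d)$ of the required size in $O((r^2+1)a^{2m})$ time, and filtering (or simply working with this superset in the subsequent dynamic program, as was done in Lemma~\ref{range-of-radius}) finishes the job.

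The only subtle point, which I view as the main thing to get right rather than a genuine obstacle, is verifying that $\lambda = a^{-m}$ really does clear the denominators of both the linear and quadratic terms in $\lambda$ simultaneously: the linear term $2B\lambda$ needs multiplication by $a^m$, while the quadratic term $C\lambda^2$ needs multiplication by $a^{2m}$, so multiplying the whole equation by $a^{2m}$ works for both. The rest is just the observation that integers in an interval of length $L$ number at most $L+1$.
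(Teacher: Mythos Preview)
Your proof is correct and shares the paper's key observation: since $\lambda=a^{-m}$, the quantity $r^2-r'^2$ is always an integer multiple of $a^{-2m}$, and being confined to $[0,r^2]$ it can take at most $(r^2+1)a^{2m}$ values. The paper packages this a little differently: it writes $r'^2=r^2-(x+y\lambda+z\lambda^2)$ with integer $x,y,z$ and then asserts the separate bounds $|x|\le r^2+1$, $|y|\le a^m$, $|z|\le a^m$ (implicitly reducing to a canonical form via the relations $a^m\lambda=1$ and $a^m\lambda^2=\lambda$), and multiplies these counts. You instead clear denominators by multiplying through by $a^{2m}$ and count integers in an interval directly. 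Your route is more transparent and sidesteps any question about the canonical-form reduction; the paper's route has the minor advantage of staying parallel to the proof of the preceding lemma. Both yield the same bound and the same $O\bigl((r^2+1)a^{2m}\bigr)$ enumeration, and your remark that the enumerated set is a superset suffices for the downstream dynamic program.
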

\begin{proof}
    We have
    \begin{eqnarray*}
        r'^{2}&&=r^{2}-(y_{1}-j_1\lambda)^{2}-\cdots-(y_{d}-j_d\lambda)^{2}\\
        &&= r^2-[y_{1}^{2}-2y_{1}j_{1}\lambda+j_{1}^{2}\lambda^2]-\cdots-[y_{d}^{2}-2y_{d}j_{d}\lambda+j_d^{2}\lambda^{2}]\\
        &&= r^2-\{y_1^{2}+y_2^{2}+\cdots+y_{d}^{2}\}\\
        &&+\{2y_1j_1+2y_2j_2+\cdots+2y_dj_d\}\lambda\\
        &&-\{j_1^2+j_2^2+j_3^2+\cdots+j_d^2\}\lambda^2
    \end{eqnarray*}
    via Lemma~\ref{range-of-radius}.

    For each $r'^2$, it can be transformed into $r'^{2}=r^2-(x+y\lambda+z\lambda^2)$ with $x,\, y$ and $z$ are integers,

    and
    \begin{equation}\label{product1}
    \left\{
    \begin{array}{lr}
    |z|\leq a^m &  \\
    |y|\leq a^m& \\
    |x|\leq (r^2+1). &
    \end{array}
    \right.
    \end{equation}

    Therefore, $|R|\leq (r^2+1)a^{2m}$ via inequality~(\ref{product1}). Then $R(r, p, d)$ can be generated in $\bigO\left( (r^2+1)a^{2m}\right)$ time.

\end{proof}

\begin{definition}
    For a $d-$dimensional ball $B_d(r, p, d)$ of radius $r$ with center at $p=(x_1, x_2, ..., x_d)$.

    \begin{enumerate}
        \item Define $p[k]=(0, ..., 0, x_{k+1}, ..., x_{d})$ for some integer $k\in[1, d]$.
        \item Define $Z(r,x,t)$ with $Z(r, x, t)^2=r^2-(t-x)^2$ if $|t-x|\le r,$ where $t$ is a integer and $x\in \mathbb{R}$.
    \end{enumerate}
\end{definition}
We give a dynamic programming algorithm to count the number of lattice points in a $d-$dimensional ball $B_d(r, p , d).$
\begin{algorithm}[H]
    \caption{CountLatticePoint($r$, $p$, $d$)}\label{euclid}
    $\mathbf{Input:}$ $p=(x_1, x_2, ..., x_d)$ where $x_k=i_k+j_k\lambda$ for  an integer $j_k\in [-l, l]$, and another arbitrary integer $i_k$ for $k=1, 2, ..., d.$     $r$ is radius and $d$ is dimensional numbers.\\
    $\mathbf{Output:}$ The number of lattice points of the $d-$dimensional ball $B_d(r, p, d).$
    \begin{algorithmic}[1]
        \State\qquad Let $r_0=r$
        \State\qquad For $k=d-1$ to $0$ \label{line2}
        \State\qquad\qquad for each $r_k\in R(r, p, d)$ \label{line3}
        \State\qquad\qquad\qquad let $C(r_k,p[k],d-k)=\sum\limits_{t \in \mathbb{Z}\ and\ t\in[-r_k+x_{k+1},\  r_k+x_{k+1}]}C(z(r_k, x_{k+1}, t),p[k+1],d-(k+1))$\label{line4}
        \State\qquad\qquad\qquad save $C(r_k,p[k],d-k)$ to the look up table
        \State\qquad Return $C(r_0, p[0], d)$

    \end{algorithmic}
\end{algorithm}

We note that if $d-(k+1)=0$ then $C(z(r_k, x_{k+1}, t),p[k+1],d-(k+1))=1,$ otherwise $z(r_k, x_{k+1}, t)$ is in $R(r, p, d)$ (i.e. $C(z(r_k, x_{k+1}, t),p[k+1],d-(k+1))$ is avaiable in the table).


\begin{theorem}\label{smaller-total-point-lemma}
    Assume $\lambda$ be a real number and $p\in D(\lambda, d, l),$ then
    there is a $\bigO(r(r+l|\lambda|)^3l^3d^4)$ time algorithm to count $C(r,p, d)$.
\end{theorem}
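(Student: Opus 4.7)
The plan is to establish correctness by backward induction on the outer loop index $k$, and then multiply the number of dynamic-programming cells by the per-cell work.

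First I would verify the recurrence at line~\ref{line4}. The ball $B_d(r_k, p[k], d-k)$ is the disjoint union over integers $t \in [-r_k + x_{k+1},\, r_k + x_{k+1}]$ of its slice at coordinate $x_{k+1} = t$, and by Pythagoras this slice is a $(d-k-1)$-dimensional ball of squared radius $r_k^2 - (t - x_{k+1})^2 = Z(r_k, x_{k+1}, t)^2$. Therefore
\[
C(r_k, p[k], d-k) \;=\; \sum_{t \in \mathbb{Z} \cap [-r_k + x_{k+1},\, r_k + x_{k+1}]} C\bigl(Z(r_k, x_{k+1}, t),\, p[k+1],\, d - k - 1\bigr),
\]
with the convention $C(\cdot, \cdot, 0) = 1$, which is exactly what the algorithm computes.

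Second I would argue the table lookups are well-defined. By induction on recursion depth, any sub-radius appearing in a lookup has the form $r'^2 = r^2 - \sum_{i=1}^{j}(y_i - x_i)^2$ for integers $y_i$, so $r' \in R(r, p, d)$ by Definition~\ref{definition-R}; hence every queried radius is already stored in the table built at the previous outer iteration. Furthermore, Lemma~\ref{equivalence-classes} (together with Lemma~\ref{without-i-lattice-points}, which absorbs the integer parts of the coordinates of $p$) guarantees that $C(r', q, k)$ depends only on the pair $(r', k)$ and not on the specific center $q$ within its equivalence class. This is the key fact that justifies indexing the table only by radius and dimension. The outer loop orders $k$ from $d-1$ down to $0$, so when a cell at dimension $d-k$ is filled, every cell at dimension $d-k-1$ that it needs is already available, and the base case $k+1 = d$ is just the count $1$ per integer slice.

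Finally the runtime bound. By Lemma~\ref{range-of-radius}, $|R(r, p, d)| = O((r + l|\lambda|)^3 l^3 d^3)$ and $R(r, p, d)$ can be generated in the same time. Iterating over $d$ values of $k$ yields $O((r + l|\lambda|)^3 l^3 d^4)$ table cells; each cell is computed by a sum over at most $2r + 1$ integer slices, each a single table lookup. Multiplying gives the claimed $O\bigl(r(r + l|\lambda|)^3 l^3 d^4\bigr)$ total.

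The main obstacle is the second step: verifying that (i) every sub-radius produced by the recursion actually lies in the precomputed set $R(r, p, d)$, and (ii) that the table can be keyed by radius alone without losing information about the center. Once these two invariants are checked against Definition~\ref{definition-R} and Lemma~\ref{equivalence-classes}, the rest of the argument is a routine counting of DP cells and per-cell work.
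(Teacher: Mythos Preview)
Your proposal is correct and follows the same approach as the paper: the running-time bound is obtained by multiplying the $d$ outer iterations, the $O((r+l|\lambda|)^3 l^3 d^3)$ radii from Lemma~\ref{range-of-radius}, and the at most $2\lfloor r\rfloor+1$ summands per cell. The paper's own proof records only this arithmetic and omits the correctness argument you supply in your first two steps (the slicing recurrence and the well-definedness of the table via Lemmas~\ref{equivalence-classes} and~\ref{without-i-lattice-points}); your additions are sound and fill a genuine gap in the paper's exposition.
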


\begin{proof}
    Line~\ref{line2} has $d$ iterations, Line~\ref{line3} takes $4(r+l|\lambda|)^3l^3d^3$ to compute $r_k$ via Lemma~\ref{range-of-radius}, and Line~\ref{line4} has at most $2\lfloor r\rfloor+1$ items to add up.

    Therefore, the algorithm CountLatticePoints(.) takes $\bigO(r(r+l|\lambda|)^3l^3d^4)$ running time.


\end{proof}

\textbf{Remark:} When $\lambda=\frac{1}{\pi},$ this is a specail case of Theorem~\ref{smaller-total-point-lemma}, and the running time of the algorithm is $\bigO(r(r+l|\lambda|)^3l^3d^4).$ The algorithm can count the lattice points of high dimensional ball if the element of the center of the ball has same type like $i+j\lambda$ even though $\lambda$ is a irrational number.

Theorem~\ref{smaller-total-point-lemma-ration} shows that the algorithm can count the number of lattice points of high dimensional ball if the element of the center of the ball has same type like fractions in base $a$.
\begin{theorem}\label{smaller-total-point-lemma-ration}
    Assume $\lambda=a^{-m}$ and $p\in D^{\ast\ast}(\lambda,d)$, where $m$ and $a$ are integers with $a\geq 2,$ then
    there is a $\bigO(r^3a^{2m}d)$ time algorithm to count $C(r,p, d)$.
\end{theorem}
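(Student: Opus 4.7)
The plan is to re-run exactly the same dynamic programming algorithm \texttt{CountLatticePoint} as used for Theorem~\ref{smaller-total-point-lemma}, but to replace the invocation of Lemma~\ref{range-of-radius} with the sharper Lemma~\ref{range-of-radius-pi}, which applies precisely under the hypothesis $\lambda=a^{-m}$ and $p\in D^{\ast\ast}(\lambda,d)$. The correctness of the algorithm is unchanged: Lemma~\ref{equivalence-classes} and Lemma~\ref{without-i-lattice-points} still guarantee that the count $C(r_k,p[k],d-k)$ depends only on the radius $r_k$ and the number of remaining (non-integer) coordinates of the center, so the recurrence in line~\ref{line4} correctly rebuilds $C(r,p,d)$ from a table of values indexed by radii in $R(r,p,d)$.

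First, I would set up the standard cost breakdown of the algorithm as in the proof of Theorem~\ref{smaller-total-point-lemma}: the outer loop (line~\ref{line2}) has $d$ iterations; at iteration $k$, we loop over all $r_k\in R(r,p,d)$ (line~\ref{line3}); and for each $r_k$, the summation in line~\ref{line4} contains at most $2\lfloor r\rfloor+1=\bigO(r)$ integer values of $t$. Each inner evaluation $C(z(r_k,x_{k+1},t),p[k+1],d-(k+1))$ is a constant-time table lookup since the radius $z(r_k,x_{k+1},t)$ either equals $1$ (when $d-(k+1)=0$) or is by construction a member of $R(r,p,d)$ populated in the previous iteration.

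Next, I would invoke Lemma~\ref{range-of-radius-pi} to bound $|R(r,p,d)|\le(r^2+1)a^{2m}$ and note that $R(r,p,d)$ is itself generated in $\bigO((r^2+1)a^{2m})$ time, which is absorbed into the subsequent cost. Multiplying these three factors gives total running time
\begin{equation*}
\bigO\!\left(d\cdot (r^2+1)a^{2m}\cdot r\right)=\bigO\!\left(r^3 a^{2m} d\right),
\end{equation*}
which is the claimed bound.

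The only subtle point, and the one I would be most careful about, is verifying that the radii appearing in the recursive calls remain inside the precomputed set $R(r,p,d)$ under the assumption $p\in D^{\ast\ast}(\lambda,d)$: concretely, that $z(r_k,x_{k+1},t)$ has the form required by Definition~\ref{definition-R} so that the lookup in line~\ref{line4} is valid. This follows because $x_{k+1}=j_{k+1}\lambda$ with $j_{k+1}$ an integer in $[-a^m+1,a^m-1]$, so $(t-x_{k+1})^2$ has the form $(\text{integer})+(\text{integer})\lambda+(\text{integer})\lambda^2$ that was used in the proof of Lemma~\ref{range-of-radius-pi}, and the generated radii are closed under the recursive construction. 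Everything else is a direct bookkeeping of the dynamic programming costs.
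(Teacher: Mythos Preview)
Your proposal is correct and follows essentially the same approach as the paper: both run the algorithm \texttt{CountLatticePoint} and bound its cost as $d$ iterations (line~\ref{line2}) times $|R(r,p,d)|\le (r^2+1)a^{2m}$ from Lemma~\ref{range-of-radius-pi} (line~\ref{line3}) times $\bigO(r)$ summands (line~\ref{line4}), giving $\bigO(r^3a^{2m}d)$. Your added remarks on correctness via Lemmas~\ref{equivalence-classes} and~\ref{without-i-lattice-points} and on the closure of $R(r,p,d)$ under the recursion are sound refinements that the paper leaves implicit.
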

\begin{proof}
    Line~\ref{line2} has $d$ iterations, Line~\ref{line3} takes $(r^2+1)a^{2m}$ to compute $r_k$ via Lemma~\ref{range-of-radius-pi}, and Line~\ref{line4} has at most $2\lfloor r\rfloor+1$ items to add up.

    Therefore, the algorithm CountLatticePoints(.) takes $\bigO(rd(r^2+1)a^{2m})$ running time.

\end{proof}

\begin{corollary}\label{}
    Assume $\lambda=10^{-m}$ and $p\in D^{\ast\ast}(\lambda, d)$, where $m$ is a integer, then
    there is a $\bigO(r^310^{2m}d)$ time algorithm to count $C(r,p, d)$.
\end{corollary}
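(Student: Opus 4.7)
The plan is to obtain this corollary as an immediate specialization of Theorem~\ref{smaller-total-point-lemma-ration}. The hypothesis of that theorem requires $\lambda=a^{-m}$ for integers $m$ and $a$ with $a\ge 2$, together with $p\in D^{\ast\ast}(\lambda,d)$; the conclusion is a running time of $\bigO(r^3 a^{2m} d)$ for computing $C(r,p,d)$ via the algorithm CountLatticePoint$(r,p,d)$.

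First I would check that the corollary's hypotheses match. Setting $a=10$, the assumption $\lambda=10^{-m}$ is exactly the form $a^{-m}$ with $a=10\ge 2$ and $m$ an integer, and the assumption $p\in D^{\ast\ast}(\lambda,d)$ is identical to that used in the theorem. So both hypotheses of Theorem~\ref{smaller-total-point-lemma-ration} hold with $a=10$.

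Next I would invoke the theorem to conclude that CountLatticePoint$(r,p,d)$ computes $C(r,p,d)$ in time $\bigO(r^3 a^{2m} d)$. Substituting $a=10$ directly yields the stated bound $\bigO(r^3 10^{2m} d)$.

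There is essentially no obstacle here, as the corollary is a verbatim specialization of the preceding theorem at $a=10$; the only thing worth remarking is that Lemma~\ref{range-of-radius-pi}, which provides the $|R(r,p,d)|\le (r^2+1)a^{2m}$ bound driving the running time, applies without change under this choice of $a$, so no additional analysis of the algorithm's inner loop is needed.
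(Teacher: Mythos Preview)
Your proposal is correct and matches the paper's approach: the paper states this corollary immediately after Theorem~\ref{smaller-total-point-lemma-ration} without proof, treating it as the direct specialization $a=10$, exactly as you do.
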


\subsubsection{Approximating Lattice Points in High Dimensional Ball with Large Radius}
In this section, we present an $(1+\beta)$-approximation algorithm to approximate the number of lattice points in a $d-$dimensional ball $B_d(r, p, d)$ of large radius with an arbitrary center $p$, where $\beta$ is used to control the accuracy of approximation.

Some definitions are presented before prove theorems.
\begin{definition}\label{define-e-i}
    For each lattice point $q=(y_1, y_2,..., y_d)\in \mathbb{R}^d$ with $y_i\in \mathbb{Z}$ for $i=1, 2, ..., d$.
    \begin{enumerate}
        \item Define $Cube(q)$ to be the $d-$dimensional unit cube with center at $\left(y_1+\frac{1}{2}, ..., y_d+\frac{1}{2}\right).$
        \item Define $I(B_d(r, p, d))=\{q\ |\ Cube(q)\subset B_d(r, p, d) \}.$
        \item Define $E(B_d(r, p, d))=\{q\ |\ Cube(q) \notin I(B_d(r, p, d))\ and\ Cube(q) \cap B_d(r, p, d)\neq \emptyset\}.$
    \end{enumerate}
\end{definition}

Theorem~\ref{large-total-point-lemma-ration} gives an $(1+\beta)-$approximation with running time $\bigO(d)$ algorithm to approximate the number of lattice point $C(r, p, d)$ with $p$ is an arbitrary center and $r>\frac{2d^{\frac{3}{2}}}{\beta}.$
\begin{theorem}\label{large-total-point-lemma-ration}
    For an arbitrary $\beta\in (0, 1),$ there is a $(1+\beta)-$approximation algorithm to compute $C(r, p, d)$ of $d-$dimensional ball $B_d(r, p, d)$ with running time $\bigO(d)$ for an arbitrary center $p$ when $r>\frac{2d^{\frac{3}{2}}}{\beta}.$
\end{theorem}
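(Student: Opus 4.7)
The plan is to use the volume $\mathrm{Vol}(B_d(r,p,d)) = \pi^{d/2} r^d / \Gamma(d/2+1)$ itself as the approximation for $C(r,p,d)$. This quantity is translation-invariant (depending only on $r$ and $d$, not on $p$), and it can be evaluated in $O(d)$ time by computing $\Gamma(d/2+1)$ via the recurrence $\Gamma(n+1)=n\Gamma(n)$. Thus the algorithm is trivial; the entire content of the theorem is showing that the volume is a $(1+\beta)$-approximation of $C(r,p,d)$ whenever $r > 2d^{3/2}/\beta$.

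Using Definition~\ref{define-e-i}, each lattice point $q$ inside $B_d(r,p,d)$ has $Cube(q)$ intersecting the ball, so $q \in I(B_d(r,p,d)) \cup E(B_d(r,p,d))$; conversely, every $q\in I(B_d(r,p,d))$ lies inside the ball. Hence
\begin{equation*}
|I(B_d(r,p,d))| \;\le\; C(r,p,d) \;\le\; |I(B_d(r,p,d))| + |E(B_d(r,p,d))|.
\end{equation*}
Since the interior cubes are pairwise disjoint and contained in the ball, while the union of all cubes in $I\cup E$ contains the ball, the same sandwich holds with $C(r,p,d)$ replaced by $\mathrm{Vol}(B_d(r,p,d))$. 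Therefore $|\,\mathrm{Vol}(B_d(r,p,d)) - C(r,p,d)\,| \le |E(B_d(r,p,d))|$. Because every point of a unit cube lies within $\sqrt{d}$ of its center, a cube in $E$ must meet the ball's boundary, so it is contained in the spherical shell of radial width $\sqrt{d}$ around $\partial B_d(r,p,d)$; this gives
\begin{equation*}
|E(B_d(r,p,d))| \;\le\; \mathrm{Vol}\bigl(B_d(r+\sqrt d,p,d)\bigr) - \mathrm{Vol}\bigl(B_d(r-\sqrt d,p,d)\bigr) \;=\; V_d\bigl[(r+\sqrt d)^d - (r-\sqrt d)^d\bigr],
\end{equation*}
where $V_d = \pi^{d/2}/\Gamma(d/2+1)$.

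The main technical step is to bound this shell volume. Setting $x = \sqrt d / r$, the hypothesis $r > 2d^{3/2}/\beta$ gives $dx < \beta/2 < 1/2$. Using $(1+x)^d \le e^{dx} \le 1 + dx + (dx)^2$ (valid when $dx\le 1$) and $(1-x)^d \ge 1 - dx$, I get $(1+x)^d - (1-x)^d \le 2dx + (dx)^2 \le 3 dx$. Hence
\begin{equation*}
\frac{|E(B_d(r,p,d))|}{\mathrm{Vol}(B_d(r,p,d))} \;\le\; (1+x)^d - (1-x)^d \;\le\; 3dx \;=\; \frac{3 d^{3/2}}{r} \;<\; \tfrac{3}{2}\beta.
\end{equation*}
Combining with $|\mathrm{Vol} - C| \le |E|$ yields $(1-\tfrac{3}{2}\beta)\,\mathrm{Vol} \le C \le (1+\tfrac{3}{2}\beta)\,\mathrm{Vol}$. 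Rescaling the hypothesis (replace $\beta$ by $\beta/3$, i.e.\ require $r > 6 d^{3/2}/\beta$, or tighten the binomial inequalities) absorbs the constant and gives the claimed $(1+\beta)$-approximation in $O(d)$ time.

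The step most likely to require care is the binomial estimate for $(1+x)^d - (1-x)^d$: one must squeeze out the right constant so that the final shell-to-volume ratio is exactly $\beta$ under the stated condition $r > 2d^{3/2}/\beta$, rather than a constant multiple of $\beta$; this may force the use of a slightly sharper inequality such as $(1+x)^d - (1-x)^d \le 2dx\,(1+x)^{d-1} \le 2dx\,e^{dx}$. Everything else (the sandwich inequalities, the reduction of $|E|$ to a shell, and the $O(d)$-time volume evaluation) is routine.
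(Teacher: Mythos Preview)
Your proposal is correct and follows essentially the same approach as the paper: output the ball volume $V_d(r)=\pi^{d/2}r^d/\Gamma(d/2+1)$ as the approximation, and justify this by sandwiching $C(r,p,d)$ between $V_d(r-\sqrt d)$ and $V_d(r+\sqrt d)$ via the cube decomposition of Definition~\ref{define-e-i}. The only cosmetic difference is that the paper bounds the ratio $C(r,p,d)/V_d(r)$ directly by $(1\pm\sqrt d/r)^d$ and then uses $e^{d^{3/2}/r}\le 1+2d^{3/2}/r$, whereas you bound $|C-V_d(r)|\le |E|$ and control the two-sided shell $(1+x)^d-(1-x)^d$; both routes land on the same $\Theta(d^{3/2}/r)$ relative error, and your honest remark about the residual constant is matched by an analogous slack in the paper's own derivation (which in the end needs $\beta>2d^{3/2}/(r-2d^{3/2})$ rather than exactly $\beta>2d^{3/2}/r$).
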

\begin{proof}
    Let $|I(B_d(r, p, d))|$ be the number of lattice points $q\in I(B_d(r, p, d))$, $|E(B_d(r, p, d))|$ be the number of lattice points $q\in E(B_d(r, p, d)),$ and $V_d(r)$ be the volume of a $d-$dimensional ball with radius $r$.

    Now consider two $d-$dimensional balls $B_d(r-\sqrt{d}, p, d)$ and $B_d(r+\sqrt{d}, p, d)$ that have the same center as ball $B_d(r, p, d).$ Since every lattice point $q$ corresponds to a $Cube(q)$ via Definition~\ref{define-e-i}, 
    then we have:
    \begin{equation*} \label{}
    \left\{
    \begin{array}{lr}
    V_{d}(r-\sqrt{d})\leq |I(B_d(r, p, d))| \leq V_{d}(r)&  \\
    0\leq |E(B_d(r, p, d))|\leq V_{d}(r+\sqrt{d})-V_{d}(r).&
    \end{array}
    \right.
    \end{equation*}
    Therefore,
    $$V_{d}(r-\sqrt{d})\leq C(r, p, d)=|I(B_d(r, p, d))|+|E(B_d(r, p, d))|\leq V_{d}(r+\sqrt{d}).$$
    Then the bias is $\frac{|I(B_d(r, p, d))|+|E(B_d(r, p, d))|}{V_{d}(r)}$ when using $V_{d}(r)$ to approximate $C(r, p, d).$

    The volume formula for a $d-dimensional$ ball of raduis $r$ is
    $$V_d(r)=f(d)\cdot r^d$$
    where $f(d)=\pi^{\frac{d}{2}}\Gamma\left(\frac{1}{2}d+1\right)^{-1}$ and $\Gamma(.)$ is Euler's gamma function. Then
    \begin{eqnarray*}
    \frac{|I(B_d(r, p, d))|+|E(B_d(r, p, d))|}{V_{d}(r)}&&\leq\frac{V_{d}(r+\sqrt{d})}{V_{d}(r)} \\
    &&=\frac{f(d)\cdot(r+\sqrt{d})^d}{f(d)\cdot r^d} \\
    &&=\left(1+\frac{\sqrt{d}}{r}\right)^d\\
    &&\leq e^{\frac{d^{\frac{3}{2}}}{r}}\\
    &&\leq 1+\frac{2d^{\frac{3}{2}}}{r} \label{radius-decide11}.
    \end{eqnarray*}
    Similarly, we have
    \begin{eqnarray*}
    \frac{|I(B_d(r, p, d))|+|E(B_d(r, p, d))|}{V_{d}(r)}&&\geq\frac{V_{d}(r-\sqrt{d})}{V_{d}(r)} \\
    &&=\frac{f(d)\cdot(r-\sqrt{d})^d}{f(d)\cdot r^d} \\
    &&=\left(1-\frac{\sqrt{d}}{r}\right)^d\\
    &&\geq 1-\frac{d^{\frac{3}{2}}}{r}\\\label{radius-decide22}
    &&\geq 1-\frac{2d^{\frac{3}{2}}}{r}.
    \end{eqnarray*}
    From above two inequalities, we have
    $$\left(1-\frac{2d^{\frac{3}{2}}}{r}\right)\cdot V_{d}(r)\leq C(r, p, d)\leq \left(1+\frac{2d^{\frac{3}{2}}}{r}\right)\cdot V_{d}(r),$$
    then we have
    $$\frac{1}{1+\frac{2d^{\frac{3}{2}}}{r}}\cdot C(r, p, d)\leq V_{d}(r)\leq \frac{1}{1-\frac{2d^{\frac{3}{2}}}{r}}\cdot C(r, p, d).$$

    Simplify the above inequality, we have
    $$
    \left(1-\frac{2d^{\frac{3}{2}}}{r-2d^{\frac{3}{2}}}\right)C(r, p, d)\leq V_{d}(r)\leq\left(1+\frac{2d^{\frac{3}{2}}}{r-2d^{\frac{3}{2}}}\right)C(r, p, d).
    $$
    Thus, we have
    \begin{equation}\label{runnning-time}
    (1-\beta)C(r, p, d)\leq V_{d}(r)\leq(1+\beta)C(r, p, d)
    \end{equation}
    with $\beta>\frac{2d^{\frac{3}{2}}}{r-2d^{\frac{3}{2}}}.$


    It takes $\bigO(d)$ to compute $V_d(r)=f(d)\cdot r^d$, since it takes $\bigO(d)$ to compute $f(d)$ where $f(d)=\pi^{\frac{d}{2}}\Gamma\left(\frac{1}{2}d+1\right)^{-1}.$ Therefore, the algorithm takes $\bigO(d)$ running time to approximate $C(r, p, d)$
    becasue of Equation~(\ref{runnning-time}).
\end{proof}

\begin{theorem}\label{theorem14}
    There is an $(1+\beta)$-approximation algorithm with running time $\bigO(d)$ to approximate $C(r, p, d)$ of $B_d(r, p ,d)$ with an arbitrry center $p$ when $r>\frac{2d^{\frac{3}{2}}}{\beta}$; and there is an dynamic programming algorithm with running time 
    $\complexityLatticePoint$ to count $C(r, p, d)$ with center $p\in D(\lambda,
    d, l)$  when $r\leq \frac{2d^{\frac{3}{2}}}{\beta}.$
\end{theorem}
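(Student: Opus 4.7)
The plan is to glue together the two earlier results, Theorem~\ref{large-total-point-lemma-ration} (large radius) and Theorem~\ref{smaller-total-point-lemma} (small radius), using the cutoff $r^{*} = \frac{2d^{3/2}}{\beta}$ that already appears naturally in the large-radius analysis. The algorithm branches on whether $r > r^{*}$ or $r \le r^{*}$, and applies the appropriate subroutine.

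For the first clause ($r > \frac{2d^{3/2}}{\beta}$) I would simply invoke Theorem~\ref{large-total-point-lemma-ration} as a black box: the center $p$ is arbitrary, the condition on $r$ matches exactly, and the output $V_d(r)=f(d)\cdot r^{d}$ lies in $[(1-\beta)C(r,p,d),(1+\beta)C(r,p,d)]$, computed in $O(d)$ time. No further work is needed for this case.

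For the second clause ($r \le \frac{2d^{3/2}}{\beta}$) I would apply Theorem~\ref{smaller-total-point-lemma} with the same $p\in D(\lambda,d,l)$, obtaining an exact value of $C(r,p,d)$ in time $O\!\left(r(r+l|\lambda|)^{3}l^{3}d^{4}\right)$. The only work left is to plug in the upper bound $r \le \frac{2d^{3/2}}{\beta}$ to rewrite the running time in the target form: this yields
\begin{eqnarray*}
r\,(r+l|\lambda|)^{3}l^{3}d^{4}
&\le& \frac{2d^{3/2}}{\beta}\left(\frac{2d^{3/2}}{\beta}+l|\lambda|\right)^{3}l^{3}d^{4}\\
&=& O\!\left(\frac{1}{\beta}\,d^{11/2}\,l^{3}\left(\frac{2d^{3/2}}{\beta}+l|\lambda|\right)^{3}\right),
\end{eqnarray*}
which is exactly $\complexityLatticePoint$. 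Since the counted value is exact, it trivially satisfies the $(1\pm\beta)$-approximation requirement as well.

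There is no real obstacle here; the proof is essentially a case split followed by the algebraic substitution above. The only thing worth being careful about is that Theorem~\ref{smaller-total-point-lemma} requires the center to lie in $D(\lambda,d,l)$, which is precisely the hypothesis assumed in the second clause of the statement, so the two regimes are consistent with what the underlying lemmas demand.
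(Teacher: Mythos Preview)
Your proposal is correct and follows exactly the same approach as the paper: a case split on the threshold $r^{*}=\frac{2d^{3/2}}{\beta}$, invoking Theorem~\ref{large-total-point-lemma-ration} for the large-radius regime and Theorem~\ref{smaller-total-point-lemma} for the small-radius regime. In fact you supply the algebraic substitution of $r\le \frac{2d^{3/2}}{\beta}$ into the $O(r(r+l|\lambda|)^{3}l^{3}d^{4})$ bound to obtain the stated complexity, a step the paper's proof leaves implicit.
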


\begin{proof}
    We discuss two cases based the radius of the $d$-dimensional ball.

    Case 1: When counting the number of
    lattice points of a $d$-dimensional ball with center $p \in D(\lambda,
    d, l)$ for $r\leq \frac{2d^{\frac{3}{2}}}{\beta}$, apply Theorem~ \ref{smaller-total-point-lemma}.

    Case 2: When approximating the number of
    lattice points of a $d$-dimensional ball with an arbitrary center $p$ for $r>\frac{2d^{\frac{3}{2}}}{\beta}$, apply Theorem~ \ref{large-total-point-lemma-ration}.
\end{proof}

\begin{corollary}
    There is a dynamic programming algorithm to count $C(r, p, d)$ of $B_d(r, p ,d)$ with running time $\complexityLatticePoint$  for $p\in D(\lambda,
    d, l)$ when $r\leq \frac{2d^{\frac{3}{2}}}{\beta}.$
\end{corollary}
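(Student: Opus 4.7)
The plan is to observe that this corollary is essentially a specialization of Case 1 of Theorem~\ref{theorem14}, with the running time made explicit in terms of the parameter $\beta$. The dynamic programming algorithm invoked is precisely CountLatticePoint$(r,p,d)$ from the preceding subsection, whose correctness and running time are already established by Theorem~\ref{smaller-total-point-lemma}. So no new algorithm is needed; only a routine substitution of the hypothesis $r \le \frac{2d^{3/2}}{\beta}$ into the complexity expression from that theorem.

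First I would restate that Theorem~\ref{smaller-total-point-lemma} gives an exact counting algorithm for $C(r,p,d)$ when $p\in D(\lambda,d,l)$, running in time $\bigO\bigl(r(r+l|\lambda|)^3 l^3 d^4\bigr)$. Next I would substitute the hypothesis $r \le \frac{2d^{3/2}}{\beta}$ termwise: the leading $r$ becomes a factor of $\frac{2d^{3/2}}{\beta}$, contributing $\frac{1}{\beta}d^{3/2}$ against the existing $d^4$ to produce $d^{11/2}/\beta$; the factor $(r+l|\lambda|)^3$ is bounded by $\left(\frac{2d^{3/2}}{\beta}+l|\lambda|\right)^3$; and $l^3$ is carried through unchanged. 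Collecting these factors yields the claimed bound
\[
\bigO\!\left(\frac{1}{\beta}\,d^{11/2}\,l^3\left(\frac{2d^{3/2}}{\beta}+l|\lambda|\right)^{\!3}\right),
\]
which is exactly $\complexityLatticePoint$.

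There is no real obstacle here; the only issue worth a sentence of verification is that the hypothesis $r\le \frac{2d^{3/2}}{\beta}$ is used only to upper-bound the leading $r$ in $\bigO\bigl(r(r+l|\lambda|)^3 l^3 d^4\bigr)$ and the first occurrence of $r$ inside the cubed term, while leaving the rest of the bound intact. Correctness of the counting itself is inherited verbatim from CountLatticePoint$(r,p,d)$ via Theorem~\ref{smaller-total-point-lemma}, so the corollary follows.
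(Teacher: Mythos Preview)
Your proposal is correct and follows essentially the same approach as the paper: the corollary is stated without its own proof because it is just Case~1 of Theorem~\ref{theorem14}, which in turn invokes Theorem~\ref{smaller-total-point-lemma}. You have simply made explicit the arithmetic substitution of $r\le \frac{2d^{3/2}}{\beta}$ into the bound $\bigO\bigl(r(r+l|\lambda|)^3 l^3 d^4\bigr)$ that the paper leaves implicit.
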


\subsection{A Randomized Algorithm for Generating Random Lattice Point of High Dimensional Ball}
In this section, we propose algorithms to generate a random lattice point inside a high dimensional ball. Two subsections are discussed below.

\subsubsection{Generating a Random Lattice Point inside High Dimensional Ball with Small Radius}

In this section, we develop a recursive algorithm to generate a random lattice point inside a $d-$dimensional ball $B_d(r, p, d)$ of small radius with center $p\in D(\lambda, d, l).$

The purpose of the algorithm
RecursiveSmallBallRandomLatticePoint$(r, p, t, d)$ is to recursively
generate a random lattice point in the ball $B_d(r, p, t)$.
\begin{algorithm}[H]
    \caption{RecursiveSmallBallRandomLatticePoint(r, p, t, d)}\label{}
    $\mathbf{Input:}$ $p=(y_1, y_2,...,y_{d-t}, x_{d-t+1}, ..., x_d)$ where
    $x_k=i_k+j_k\lambda$ with arbitrary integer $i_k,$ integer $j_k\in
    [-l, l],$ and $y_i\in \mathbb{Z}, i=1,2,...,d-t$,  $t$ is a
    dimension number with $0\leq t\leq d.$\\
    $\mathbf{Output:}$ Generate a random lattice point inside $t-$dimensional ball.
    \begin{algorithmic}[1]
        \State\qquad Save $C(r_k,p[k],d-k)$ into look up table C-Table by using Algorithm $CountLatticePoint(r, p, d)$ for $k=0, 1, ..., d-1$
        \State\qquad If $t=0$
        \State \qquad\qquad Return lattice point $(y_1, y_2,..., y_d)$
        \State\qquad Else
        \State\qquad\qquad Return RecursiveSmallBallRandomLatticePoint$(r', q, t-1,
        d)$ with probability
        $\frac{C(r', q, t-1)}{C(r, p, t)},$
        where $q=(y_1, y_2,...,y_{d-t},y_{d-t+1}, x_{d-t+2}, ..., x_d)$ with
        $y_{d-t+1}\in[x_{d-t+1}-r, x_{d-t+1}+r]$ satisfying $||p-q||^2\le
        r^2$,
        and $r'^2=r^2-||p-q||^2$\label{algorithm2line4}
    \end{algorithmic}
\end{algorithm}
We note that $C(., ., .)$ is available at C-Table in $\bigO(1)$ step and the implementation of line~\ref{algorithm2line4} of the algorithm is
formally defined below: Partition $I=[1,\ C(r, p, t)]\cap \mathbb{Z}$
into
$I_1,\cdots, I_w$, where $I_i$ is uniquely corresponds to an integer
$y_{d-t+1}\in[x_{d-t+1}-r, x_{d-t+1}+r]$ satisfying  $q=(y_1, y_2,...,y_{d-t},y_{d-t+1}, x_{d-t+2}, ..., x_d)$, $||p-q||^2\le
r^2$, and $|I_i|=C(r', q, t-1)$. Generate a random number $z\in I$.
If $z\in I_i$ ($I_i$ is mapped to $y_{d-t+1}$), then it returns
RecursiveSmallBallRandomLatticePoint$(r', q, t-1, d)$ with $q=(y_1, y_2,...,y_{d-t},y_{d-t+1}, x_{d-t+2}, ..., x_d)$.

The algorithm RandomSmallBallLatticePoint$(r, p, d)$ is to generate
a random lattice point in the ball $B_d(r, p, d)$. It calls the
function RecursiveSmallBallRandomLatticePoint(.).
\begin{algorithm}[H]
    \caption{RandomSmallBallLatticePoint(r, p,
        d)}\label{}
    $\mathbf{Input:}$ $p=(x_1, x_2,..., x_d)$ where $x_k=i_k+j_k\lambda$ with
    arbitrary integer $i_k,$ integer $j_k\in [-l, l]$ for $k=1,2,\cdots,
    d$.\\
    $\mathbf{Output:}$ Generate a random lattice point inside $d-$dimensional ball.
    \begin{algorithmic}[1]
        \State\qquad Return RecursiveSmallBallRandomLatticePoint$(r, p, d, d)$
    \end{algorithmic}
\end{algorithm}
\begin{theorem}\label{smaller-radius-lattice-point}
    For an arbitrary $\beta \in (0, 1),$ assume $\lambda$ be a real number and $p\in D(\lambda, d, l),$ then
    there is a $\complexityLatticePoint$ time algorithm to generate a lattice point inside a $d-$dimensional ball $B_d(r, p, d).$ 
\end{theorem}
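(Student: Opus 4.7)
The plan is to verify two things about \textbf{RandomSmallBallLatticePoint}$(r,p,d)$: that it produces a uniformly distributed lattice point of $B_d(r,p,d)$, and that its running time matches $\complexityLatticePoint$ in the small-radius regime $r\leq 2d^{3/2}/\beta$ where this subroutine is meant to be applied. The two ingredients are the already-established counting algorithm of Theorem~\ref{smaller-total-point-lemma}, used in a preprocessing phase to populate the table \emph{C-Table}, and a standard ``pick a coordinate, then recurse'' sampling scheme.

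For correctness I would argue by downward induction on $t$ that \textbf{RecursiveSmallBallRandomLatticePoint}$(r,p,t,d)$ outputs each lattice point of $B_d(r,p,t)$ with probability exactly $1/C(r,p,t)$. The base case $t=0$ is immediate since $B_d(r,p,0)=\{p\}$ by Definition~\ref{definition1}, and the first $d$ coordinates of $p$ are integers by construction. For the inductive step, partition the lattice points of $B_d(r,p,t)$ according to the integer value of the coordinate $y_{d-t+1}\in[x_{d-t+1}-r,\,x_{d-t+1}+r]$; once that value is fixed, the remaining lattice points form the $(t-1)$-dimensional ball $B_d(r',q,t-1)$ with $r'^2=r^2-(y_{d-t+1}-x_{d-t+1})^2$ and $q=(y_1,\ldots,y_{d-t},y_{d-t+1},x_{d-t+2},\ldots,x_d)$. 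Sampling $y_{d-t+1}$ with probability $C(r',q,t-1)/C(r,p,t)$ and then invoking the recursive call (which is uniform by the inductive hypothesis) gives each lattice point total probability $\frac{C(r',q,t-1)}{C(r,p,t)}\cdot\frac{1}{C(r',q,t-1)}=\frac{1}{C(r,p,t)}$, as required.

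For the running time, the preprocessing step invokes \textbf{CountLatticePoint}$(r,p,d)$ which, by Theorem~\ref{smaller-total-point-lemma}, fills C-Table with all values $C(r_k,p[k],d-k)$ for $k=0,1,\ldots,d-1$ and $r_k\in R(r,p,d)$ in $\bigO(r(r+l|\lambda|)^3l^3d^4)$ time. After preprocessing, the recursion descends $d$ levels and at each level partitions an integer interval of length $\bigO(r)$ using $\bigO(r)$ table lookups plus one random draw, so the recursion adds only $\bigO(rd)$, which is dominated. Substituting the hypothesis $r\leq 2d^{3/2}/\beta$ into $\bigO(r(r+l|\lambda|)^3l^3d^4)$ yields exactly $\complexityLatticePoint$.

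The main piece of bookkeeping, and the only place the argument could subtly break, is checking that every pair $(r',q,t-1)$ queried during the recursion corresponds to an entry actually stored in C-Table. This reduces to two observations I would state explicitly: first, by Lemma~\ref{without-i-lattice-points}, translating the center $q$ by an integer vector in its integer-prefix coordinates does not alter the count, so $C(r',q,t-1)=C(r',p[d-t+1],t-1)$, which is precisely the form stored; second, the recursive update $r'^2=r^2-\sum(y_i-x_i)^2$ combined with Definition~\ref{definition-R} and Lemma~\ref{range-of-radius} certifies that $r'\in R(r,p,d)$. Once these two matchings are confirmed, both correctness and the running-time bound follow directly.
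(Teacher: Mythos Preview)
Your proposal is correct and follows essentially the same approach as the paper: a telescoping-probability argument for uniformity (you phrase it as a downward induction on $t$, the paper writes out the product $\frac{C(r',q,d-1)}{C(r,p,d)}\cdots\frac{1}{C(r^{(d-1)},q^{(d-1)},0)}=\frac{1}{C(r,p,d)}$ directly), together with the counting cost of Theorem~\ref{smaller-total-point-lemma} substituted at $r\le 2d^{3/2}/\beta$. Your version is in fact more careful than the paper's, since you explicitly verify via Lemma~\ref{without-i-lattice-points} and Definition~\ref{definition-R}/Lemma~\ref{range-of-radius} that every recursive query hits an entry of C-Table, and you account for the $\bigO(rd)$ cost of the recursion rather than the paper's (slightly loose) $\bigO(d)$; neither refinement changes the final bound.
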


\begin{proof}
    By algorithm RandomSmallBallLatticePoint(.), we can generate a random lattice point inside $d-$dimensional ball $B_d(r, p, d)$ with probability $\frac{C(r', q, d-1)}{C(r, p, d)}\cdot\frac{C(r'', q', d-2)}{C(r', q, d-1)}\cdot...\cdot\frac{1}{C(r^{(d-1)}, q^{(d-1)}, 0)}=\frac{1}{C(r, p, d)}.$

    It takes $\complexityLatticePoint$ to compute $C(r, p, d)$ via Theorem~\ref{theorem14},
    then algorithm SmallBallRandomLatticePoint(.) takes $\complexityLatticePoint+\bigO(d)$ running time. Thus, the algorithm takes $\complexityLatticePoint$ running time.
\end{proof}

\subsubsection{Generating a Random Lattice Point of High Dimensional Ball with Large Radius}
In this section, we develop an $(1+\alpha)-$approximation algorithm to generate a random lattice point inside a $d-$dimensional ball $B_d(r, p, d)$ of large radius $r$ with arbitrary center $p,$ where $\alpha$ is used to control the accuracy of approximation.

We first propose an approximation algorithm RecursiveBigBallRandomLatticePoint(.) to generate a random lattice point inside a $d-$dimensional ball $B_d(r, p, d)$ of radius $r$ with lattice point center $p,$ then we apply algorithm RecursiveBigBallRandomLatticePoint(.) to design algorithm BigBallRandomLatticePoint(.) to generate an approximate random lattice point in a $d-$dimensional ball $B_d(r', p, d)$ of radius $r'$ with arbitrary center $p.$

Before present the algorithms, we give some definition and lemmas that is used to analysis algorithm RecursiveBigBallRandomLatticePoint(.).

\begin{definition}\label{definition-pro}
    For an arbitrary $\beta\in (0, 1),$ let $B_d(r, q, k)$ be $k-$dimensional ball of radius $r$ with arbitrary center $q.$  Define $P(r, q, k)$ as

    $$ P(r, q, k)=\left\{
    \begin{array}{rcl}
    C(r, q, k)     &      & {r\leq \frac{2d^{\frac{3}{2}}}{\beta}}\\
    V_k(r)       &      & {otherwise,}
    \end{array} \right. $$
    where $C(r, q, k)$ is the number of lattice point of $k-$dimensional ball $B_d(r, q, k)$ and $  V_k(r)$ is the volume of ball $B_d(r, q, k).$
\end{definition}
Lemma~\ref{appro-lattice-center-lattice-point} shows that we can use $P(r, q, k)$ to approximate $C(r, q, k)$ for $k-$dimensional ball $B_d(r, q, k)$ no matter how much the radius $r$ it is.

\begin{lemma}\label{appro-lattice-center-lattice-point}
    For an arbitrary $\beta\in (0, 1).$ Let $B_d(r, q, k)$ be $k-$dimensional ball of radius $r$ with arbitrary center $q,$ then $(1-\beta)C(r, q, k)\leq P(r, q, k)\leq (1+\beta)C(r, q, k).$ 
\end{lemma}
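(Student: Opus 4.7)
The plan is to split on the two cases in the definition of $P(r, q, k)$ given in Definition~\ref{definition-pro}. The lemma is essentially a packaging of the two results already established: the exact counting of Theorem~\ref{smaller-total-point-lemma} / Theorem~\ref{theorem14} in the small radius regime, and the volume approximation of Theorem~\ref{large-total-point-lemma-ration} in the large radius regime.

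First, I would handle the case $r \le \frac{2d^{3/2}}{\beta}$. Here $P(r, q, k) = C(r, q, k)$ by definition, so the desired inequality $(1-\beta)C(r, q, k) \le P(r, q, k) \le (1+\beta)C(r, q, k)$ reduces to $(1-\beta)C(r, q, k) \le C(r, q, k) \le (1+\beta)C(r, q, k)$, which is immediate since $\beta \in (0,1)$ and $C(r, q, k) \ge 0$.

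Second, I would handle the case $r > \frac{2d^{3/2}}{\beta}$, where $P(r, q, k) = V_k(r)$. The goal is to invoke Theorem~\ref{large-total-point-lemma-ration} in dimension $k$, which gives $(1-\beta)C(r, q, k) \le V_k(r) \le (1+\beta)C(r, q, k)$ provided $r > \frac{2k^{3/2}}{\beta}$. Since $k \le d$ we have $k^{3/2} \le d^{3/2}$, and therefore the hypothesis $r > \frac{2d^{3/2}}{\beta}$ is strictly stronger than the hypothesis required by Theorem~\ref{large-total-point-lemma-ration} applied in dimension $k$. Hence the volume bound applies and yields the claimed two-sided inequality for $P(r, q, k) = V_k(r)$.

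Combining the two cases proves the lemma. There is no genuine obstacle here: the real technical content lives in Theorem~\ref{large-total-point-lemma-ration} (the geometric sandwiching between $B_d(r-\sqrt{d}, p, d)$ and $B_d(r+\sqrt{d}, p, d)$ together with the $(1 \pm \sqrt{d}/r)^d$ estimate), and the present lemma is simply the clean statement one obtains after defining $P$ to select the correct estimator in each radius regime. The only subtlety worth flagging is the dimension bookkeeping, namely that the threshold $2d^{3/2}/\beta$ used in the definition of $P$ is expressed in terms of the ambient dimension $d$ rather than the intrinsic dimension $k$ of the ball, and one must observe $k \le d$ to conclude that the volume approximation remains valid in the $k$-dimensional slice.
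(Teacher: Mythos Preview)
Your proposal is correct and follows essentially the same two-case split as the paper's proof: trivial equality when $r \le 2d^{3/2}/\beta$, and an appeal to Theorem~\ref{large-total-point-lemma-ration} when $r > 2d^{3/2}/\beta$. You are in fact slightly more careful than the paper in explicitly noting the dimension bookkeeping (that $k \le d$ ensures the threshold $2d^{3/2}/\beta$ dominates the $2k^{3/2}/\beta$ required by the theorem in dimension $k$), which the paper leaves implicit.
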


\begin{proof}
    Two cases are considered.

    Case 1: If $r\leq \frac{2d^{\frac{3}{2}}}{\beta},$ we have $P(r, q, k)= C(r, q, k)$ via Definition~\ref{definition-pro}.

    Case 2: If $r> \frac{2d^{\frac{3}{2}}}{\beta},$
    we have:
    $$(1-\beta)\cdot C(r, q, k)\leq V_k(r)\leq (1+\beta)\cdot C(r, q, k)$$
    via Theorem~\ref{large-total-point-lemma-ration},
    where $V_k(r)$ be the volume of $k-$dimensional ball $B_d(r, q, k)$ with radius $r.$

    Therefore, we have
    $$(1-\beta)\cdot C(r, q, k)\leq P(r, q, k)\leq (1+\beta)\cdot C(r, q, k),$$
    because $P(r, q, k)= V_k(r)$ via Definition~\ref{definition-pro}.

    By combining the above two cases, we conclude that:
    $$(1-\beta)C(r, q, k)\leq P(r, q, k)\leq (1+\beta)C(r, q, k).$$
\end{proof}
Lemma~\ref{same-lattice-center-lattice-point} shows that for two $k-$dimensional balls, if their radius are almost equal, then the number of their lattice points also are almost equal.
\begin{lemma}\label{same-lattice-center-lattice-point}
    For an arbitrary $\beta\in (0, 1)$ and a real number $\delta,$ let $B_d(r', q, k)$ be a $k-$dimensional ball of radius $r'$ with lattice center at $q$ and $B_d(r'', q, k)$ be a $k-$dimensional ball of radius $r''> \frac{2d^{\frac{3}{2}}}{\beta}$ with lattice center at $q$, where $q=(y_1, y_2, ..., y_d)$ with $y_t\in \mathbb{Z}$ and $t=1, 2, ..., d,$ if $r''\leq r'\leq \left(1+\delta\right)r'',$ then  $C(r'', q, k)\leq C(r', q, k)\leq \frac{1+\beta}{1-\beta}\left(1+\delta\right)^kC(r'', q, k).$
\end{lemma}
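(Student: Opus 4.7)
The plan is to split the inequality into its lower and upper parts, where the lower bound is a one-line containment argument and the upper bound follows by sandwiching each lattice count against a volume via Lemma~\ref{appro-lattice-center-lattice-point} and then comparing the two volumes using the explicit formula $V_k(r) = f(k) r^k$.

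First, I would dispose of the easy direction $C(r'', q, k) \leq C(r', q, k)$ by noting that $r'' \leq r'$ implies $B_d(r'', q, k) \subseteq B_d(r', q, k)$, so any lattice point counted by $C(r'', q, k)$ is also counted by $C(r', q, k)$. This step needs no hypothesis on $\beta$ or on $q$ beyond the fact that both balls share the same center.

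For the upper bound, I would use that the hypothesis $r'' > 2 d^{3/2}/\beta$ together with $r' \geq r''$ places \emph{both} radii in the ``large'' regime, so by Definition~\ref{definition-pro} we have $P(r', q, k) = V_k(r')$ and $P(r'', q, k) = V_k(r'')$. Applying Lemma~\ref{appro-lattice-center-lattice-point} twice gives
\begin{equation*}
(1-\beta) C(r', q, k) \leq V_k(r') \quad \text{and} \quad V_k(r'') \leq (1+\beta) C(r'', q, k).
\end{equation*}
Using the explicit volume formula $V_k(r) = f(k) r^k$ and the hypothesis $r' \leq (1+\delta) r''$, I obtain $V_k(r')/V_k(r'') = (r'/r'')^k \leq (1+\delta)^k$. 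Chaining these three inequalities then yields
\begin{equation*}
C(r', q, k) \leq \frac{V_k(r')}{1-\beta} \leq \frac{(1+\delta)^k V_k(r'')}{1-\beta} \leq \frac{1+\beta}{1-\beta}(1+\delta)^k C(r'', q, k),
\end{equation*}
which is exactly the claimed upper bound.

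There is no substantive obstacle here; the only thing to be careful about is confirming that the condition $r'' > 2 d^{3/2}/\beta$ lifts to $r'$ as well so that both invocations of Lemma~\ref{appro-lattice-center-lattice-point} land in the volume branch of the piecewise definition of $P$. The hypothesis that $q$ is a lattice point is not actually used in the argument, since Lemma~\ref{appro-lattice-center-lattice-point} is stated for arbitrary centers; it appears to be included in the lemma statement only because the intended application (in the upcoming random-lattice-point sampler) will call it on balls with lattice centers.
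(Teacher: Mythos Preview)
Your proposal is correct and follows essentially the same route as the paper: the lower bound is the trivial containment argument, and the upper bound chains $C(r',q,k)\le \frac{1}{1-\beta}V_k(r')\le \frac{(1+\delta)^k}{1-\beta}V_k(r'')\le \frac{1+\beta}{1-\beta}(1+\delta)^k C(r'',q,k)$ using the volume formula and the large-radius approximation. The only cosmetic difference is that the paper invokes Theorem~\ref{large-total-point-lemma-ration} directly rather than going through the wrapper Lemma~\ref{appro-lattice-center-lattice-point}, and your observation that the lattice-center hypothesis on $q$ is not actually used in the argument is accurate.
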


\begin{proof}
    Let $V_d(r)$ be the volume of $d-$dimensional ball of radius $r.$ Since the volume formula for a $d-dimensional$ ball of raduis $r$ is
    $$V_d(r)=f(d)\cdot r^d$$
    where $f(d)=\pi^{\frac{d}{2}}\Gamma\left(\frac{1}{2}d+1\right)^{-1}$ and $\Gamma(.)$ is Euler's gamma function. Then, we have the following as:
    $$V_k(r'')\leq V_k(r')\leq V_k(r'')\cdot\left(1+\delta\right)^k.$$

    Since $r''> \frac{2d^{\frac{3}{2}}}{\beta},$ $r'\geq r''>2\frac{d^{\frac{3}{2}}}{\beta},$
    then we have
    \begin{equation} \label{relationship-c-v}
    \left\{
    \begin{array}{lr}
    \frac{1}{1+\beta}V_k(r')\leq C(r', q, k)\leq \frac{1}{1-\beta}V_k(r')&  \\
    \frac{1}{1+\beta}V_k(r'')\leq C(r'', q, k)\leq \frac{1}{1-\beta}V_k(r'')&
    \end{array}
    \right.
    \end{equation}
    via Theorem~\ref{large-total-point-lemma-ration},

    Plugging inequality~(\ref{relationship-c-v}) to above inequality, then we have
    \begin{eqnarray*}
    C(r', q, k)&&\leq \frac{1}{1-\beta}V_k(r') \\
    &&\leq\frac{1}{1-\beta}V_k(r'')\cdot\left(1+\delta\right)^k\\
    &&=\frac{(1+\beta)}{(1-\beta)}\frac{1}{(1+\beta)}V_k(r'')\cdot\left(1+\delta\right)^k\\
    &&\leq \frac{(1+\beta)}{(1-\beta)}\cdot\left(1+\delta\right)^kC(r'', q, k) \label{inequlity4}
    \end{eqnarray*}
    and we also have
    $$C(r', q, k)\geq C(r'', q, k).$$

    Therefore,
    $$C(r'', q, k)\leq C(r', q, k)\leq \frac{1+\beta}{1-\beta}\left(1+\delta\right)^kC(r'', q, k).$$
\end{proof}

\begin{definition}
    For an integer interval $[a,b]$, $c\in \mathbb{Z}$, $r>0$, and
    $\delta\in (0,1)$, an $(r,c, 1+\delta)$-partition for $[a,b]$ is to
    divide $[a,b]$ into $[a_1,b_1], [a_2,b_2],\cdots, [a_w, b_w]$ that
    satisfies the following conditions:
    \begin{enumerate}
        \item
        $a_1=a, a_{i+1}=b_i+1$ for $i=1,\cdots, w-1$.
        \item
        For any $x, y\in \{a_i, b_i\}$,  $r^2-(x-c)^2\le
        (1+\delta)^2(r^2-(y-c)^2)$ and $r^2-(y-c)^2\le
        (1+\delta)^2(r^2-(x-c)^2)$.
        \item
        For any $x\in \{a_i, b_i\}$ and $y\in \{a_{i+1},b_{i+1}\}$,
        $r^2-(x-c)^2> (1+\delta)^2(r^2-y^2)$ or $r^2-(y-c)^2>
        (1+\delta)^2(r^2-x^2)$.
    \end{enumerate}
\end{definition}

The purpose of the algorithm  RecursiveBigBallRandomLatticePoint(.) is to recursivly generate a random lattice point inside the $d-$dimensional ball $B_d(r, p, d)$ of radius $r$ with lattice point center $p.$
\begin{algorithm}[H]
    \caption{ RecursiveBigBallRandomLatticePoint(r, p, t, d)}\label{}
    $\mathbf{Input:}$ $p=(z_1, z_2, ..., z_{d-t}, y_{d-t+1}, ..., y_d)$ where
    $z_i\in \mathbb{Z}$ with $1\leq i\leq d-t$, and $y_i\in \mathbb{Z}$
    with $d-t+1\leq i\leq d$, $\alpha\in (0, 1)$ is a parameter to
    control the bias, $r$ is radius, and $t$ is dimensional number.\\
    $\mathbf{Output:}$ $Z=\{z_1, ..., z_d\}.$
    \begin{algorithmic}[1]
        \State\qquad If $t=0$
        \State\qquad\qquad Return $(z_1, z_2, ..., z_d)$
        \State\qquad Let $I_1=[a_1, b_1],\cdots, I_w=[a_w, b_w]$ be the union of intervals via $\left(r, y_{d-t+1}, 1+\frac{\epsilon_4}{g(d)}\right)$-partitions
        for $[\lceil{y_{d-t+1}-r}\rceil, y_{d-t+1}]\cap \mathbb{Z}$ and
        $[y_{d-t+1}+1, \lfloor{y_{d-t+1}+r}\rfloor]\cap \mathbb{Z}$, where $\epsilon_4\in (0, 1)$ and $g(d)$ is a function of $d$ \label{algorithm3line3}
        \State\qquad Let $M=\sum\limits_{i=1}^w (b_i-a_i+1) P(r_{i}, p_{i}, t-1)$,
        where $p_i=(z_1, z_2, ...,z_{d-t}, b_i, y_{d-t+2}, ..., y_d)$, and $r_i^2=r^2-(b_i-y_{d-t+1})^2$
        \State\qquad Return RecursiveBigBallRandomLatticePoint$(r_{i}', p_{i}',
        t-1,d)$ with probability
        $\frac{P(r_{i}, p_{i}, t-1)}{M}$,
        where
        $z_{d-t+1}=b_{i}$, $p_{i}=(z_1, z_2, ...,z_{d-t}, z_{d-t+1}, y_{d-t+2}, ..., y_d)$, and $r_{i}^2=r^2-(z_{d-t+1}-y_{d-t+1})^2$,
        $p_{i}'=(z_1, z_2, ...,z_{d-t}, z_{d-t+1}', y_{d-t+2}, ..., y_d)$, and
        $r_{i}'^2=r^2-(z_{d-t+1}'-y_{d-t+1})^2$ and a random integer $z_{d-t+1}'\in [a_{i}, b_{i}]$\label{algorithm3line5}
    \end{algorithmic}
\end{algorithm}

We note that the implementation of $\left(r, y_{d-t+1}, 1+\frac{\epsilon_4}{g(d)}\right)$-partitions in line~\ref{algorithm3line3} is as the following pictures:
\begin{figure}[H]
    \centering
    \includegraphics[width=0.6\textwidth]{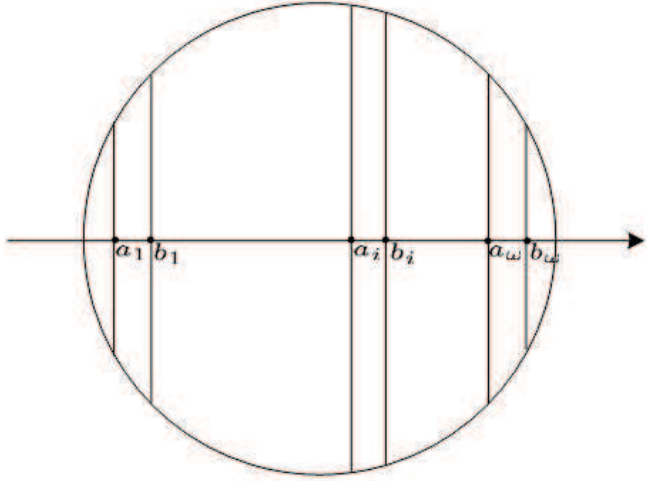} 
    \caption{Example of $\left(r, y_{d-t+1}, 1+\frac{\epsilon_4}{g(d)}\right)$-Partitions in 2D}
\end{figure}

\begin{figure}[H]
    \centering
    \includegraphics[width=0.8\textwidth]{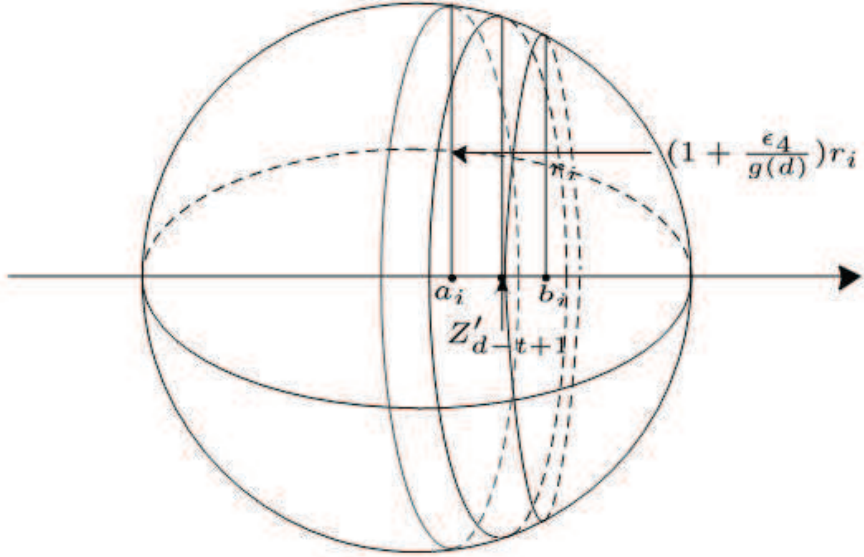} 
    \caption{Example of $\left(r, y_{d-t+1}, 1+\frac{\epsilon_4}{g(d)}\right)$-Partitions in 3D}
\end{figure}

We have the following algorithm that can generate an approximate
random lattice point in a large ball with an arbitrary center, which
may not be a lattice point.
\begin{definition}\label{definiti}Let integer $d>0$ be a dimensional number, $\mathbb{R}^d$ be the $d-$dimensional Euclidean Space.
    \begin{enumerate}
        \item A point $q=(x'_1, x'_2, ..., x'_d)\in \mathbb{R}^d$ is the nearest lattice point of $p=(x_1, ..., x_d)\in \mathbb{R}^d$ if it satisfies $ x'_i=\left\{
        \begin{array}{rcl}
        \lfloor x_i \rfloor     &      & {x_i-[x_i]\leq\frac{1}{2}}\\
        \lceil x_i \rceil      &      & {x_i-[x_i]>\frac{1}{2},}
        \end{array} \right. $ for $x_i\geq 0$ or $ x'_i=\left\{
        \begin{array}{rcl}
        \lceil x_i \rceil     &      & {|x_i|-[|x_i|]\leq\frac{1}{2}}\\
        \lfloor x_i \rfloor      &      & {|x_i|-[|x_i|]<\frac{1}{2},}
        \end{array} \right. $ for $x_i< 0,$ where $i=1, 2, ..., d.$

    \end{enumerate}
\end{definition}

\begin{algorithm}[H]
    \caption{BigBallRandomLatticePoint(r, p, d)}\label{}
    $\mathbf{Input:}$ $p=(x_1,\cdots, ..., x_d)$ where $x_i\in \mathbb{R}$ with
    $1\leq i\leq d,$ $\alpha\in (0, 1)$ is a parameter to control the
    bias, $r$ is radius, and $k$ is dimensional number.\\
    $\mathbf{Output:}$ Generate a random lattice point inside $d-$dimensional ball.
    \begin{algorithmic}[1]
        \State\qquad Let $q$ be the nearest lattice point of $p$ in $\mathbb{R}^d$
        \State\qquad Repeat
        \State\qquad\qquad Let $s=$RecursiveBigBallRandomLatticePoint$(r+\sqrt{d}, q, d)$\label{al3}
        \State\qquad Until $s\in B_d(r, p, d)$
        \State\qquad Return $s$
    \end{algorithmic}
\end{algorithm}

\begin{theorem}\label{lattice-center-lattice-point}
    For an arbitrary $\alpha\in (0, 1),$ there is an algorithm with runing time $\bigO\left(\frac{d^3\log r}{\alpha}\right)$ and $(1+\alpha)-$bias for a $d-$dimensional ball $B_d(r, q, d)$ to generate a random lattice point with radius $r>\frac{2d^3}{\alpha}$ that centered at $q=(y_1, y_2, ..., y_d)$ with $y_t\in \mathbb{Z},$ $t=1, 2, ..., d.$
\end{theorem}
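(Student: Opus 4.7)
I would analyze the output distribution of RecursiveBigBallRandomLatticePoint$(r,q,d,d)$ with $q$ a lattice point by tracking, level by level, the ratio between the algorithm's probability of selecting each coordinate and its uniform target. Fix a lattice point $s=(s_1,\dots,s_d)\in B_d(r,q,d)$; at recursion depth $t=d,d-1,\dots,1$ the algorithm picks the coordinate $s_{d-t+1}$ with probability $P(r_i,p_i,t-1)/M$, where $I_i=[a_i,b_i]$ is the interval of the $\bigl(r,y_{d-t+1},1+\epsilon_4/g(d)\bigr)$-partition that contains $s_{d-t+1}$ and $r_i^2=r^2-(b_i-y_{d-t+1})^2$. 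The uniform conditional target is $C(r_{s_{d-t+1}},p_{s_{d-t+1}},t-1)\big/\sum_z C(r_z,p_z,t-1)$, where $z$ ranges over integers in $[y_{d-t+1}-r,y_{d-t+1}+r]$. I would show that the ratio of these two quantities factors through exactly two sources of per-level multiplicative error: replacing $C$ by $P$ (controlled by Lemma~\ref{appro-lattice-center-lattice-point}, giving a factor in $[(1-\beta),(1+\beta)]$), and using $r_i$ as a representative radius for every integer $z\in I_i$ (controlled by Lemma~\ref{same-lattice-center-lattice-point}, giving a factor at most $\tfrac{1+\beta}{1-\beta}(1+\epsilon_4/g(d))^{t-1}$, since the partition definition forces $r_z/r_i\in[1,1+\epsilon_4/g(d)]$).

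Next I would compose these per-level errors over all $d$ recursive calls. The cumulative multiplicative bias is at most $\bigl((1+\epsilon_4/g(d))^d\cdot\tfrac{1+\beta}{1-\beta}\bigr)^d$. Choosing $g(d)=\Theta(d^2/\alpha)$ (so $\epsilon_4/g(d)=\Theta(\alpha/d^2)$) together with $\beta=\Theta(\alpha/d)$, this cumulative bias lies in $[1-\alpha,1+\alpha]$. Summed against the telescoping product of conditional probabilities across the $d$ levels, the marginal probability of producing $s$ falls in $\bigl[(1-\alpha)/C(r,q,d),\,(1+\alpha)/C(r,q,d)\bigr]$, which is precisely the claimed $(1+\alpha)$-bias. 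The hypothesis $r>2d^3/\alpha$ is there so that the radii $r_i$ encountered at every recursion level remain in the large-radius regime of Theorem~\ref{large-total-point-lemma-ration}, keeping the $V_k$-approximation valid at those levels; for the rare level at which some $r_i$ falls below the threshold, $P$ collapses to $C$ exactly, and the partition condition forces $|I_i|=1$ near the boundary, so no extra bias accrues.

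For running time, the partition count $w$ at each level is $O(\log r/\delta)$ with $\delta=\epsilon_4/g(d)=\Theta(\alpha/d^2)$: the quantity $r^2-(x-c)^2$ sweeps over $(0,r^2]$ and consecutive groups differ by at least a factor $(1+\delta)^2$ in this quantity, so a geometric covering gives $O(\log(r^2)/\log(1+\delta))=O(\log r/\delta)$ non-singleton groups, plus $O(1)$ singleton groups near $|x-c|\approx r$. This yields $w=O(d^2\log r/\alpha)$. Each endpoint evaluation $P(r_i,p_i,t-1)=f(t-1)r_i^{t-1}$ (from the closed form in Theorem~\ref{large-total-point-lemma-ration}) is $O(1)$ after a one-time $O(d)$ precomputation of $f(\cdot)$; building the cumulative sums of $(b_i-a_i+1)P(r_i,p_i,t-1)$ and sampling one interval takes $O(w)$ more. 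Over $d$ levels the total is $O(d\cdot w)=O(d^3\log r/\alpha)$, as claimed.

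The main obstacle I anticipate is the simultaneous tuning of $g(d)$, $\beta$, and $\epsilon_4$ in the bias analysis: the exponent $t-1$ in Lemma~\ref{same-lattice-center-lattice-point} forces the compounded bias to scale like $(1+\epsilon_4/g(d))^{d^2}$, so $g(d)$ must be at least quadratic in $d/\alpha$, but making it larger would inflate $w$ and break the $O(d^3\log r/\alpha)$ running-time target. Verifying that $g(d)=\Theta(d^2/\alpha)$ simultaneously delivers the $(1+\alpha)$-bias and stays within the runtime budget, while also ensuring that the radii $r_i$ encountered deeper in the recursion inherit the large-radius hypothesis from the outer $r>2d^3/\alpha$, is the technically delicate part of the argument.
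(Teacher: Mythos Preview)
Your approach is essentially the same as the paper's: both bound the per-level bias using the two lemmas (Lemma~\ref{appro-lattice-center-lattice-point} for the $C\leftrightarrow P$ replacement and Lemma~\ref{same-lattice-center-lattice-point} for the representative-radius error within each partition interval), compose these over the $d$ recursive levels, and tune $\epsilon_4/g(d)=\Theta(\alpha/d^2)$ together with $\beta=\Theta(\alpha/d)$ so that the cumulative bias stays within $1\pm\alpha$; the running-time argument via the geometric partition count $w=O(g(d)\log r/\epsilon_4)$ and $d$ recursion levels is also the same. The only differences are cosmetic: the paper fixes the concrete constants $g(d)=d^2$, $\epsilon_4=\alpha/4$, $\beta=\alpha/(\alpha+16d+16)$ and separates the upper and lower bias bounds (its upper bound carries only the $((1+\beta)/(1-\beta))^{d+1}$ factor while the $(1+\epsilon_4/g(d))^{-d(d-1)/2}$ factor appears only in the lower bound), whereas you state a single symmetric envelope $\bigl((1+\epsilon_4/g(d))^d(1+\beta)/(1-\beta)\bigr)^d$ with $\Theta(\cdot)$ constants left implicit.
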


\begin{proof}
     In line~\ref{algorithm3line5} of algorithm RecursiveBigBallRandomLatticePoint(.), define

     $ r'^{2}_i=\left\{
     \begin{array}{rcl}
     r^2-(y_{d-t+1}-a_i)^2     &      & {if\ a_i\leq y_{d-t+1}}\\
     r^2-(y_{d-t+1}-b_i)^2      &      & {otherwise,}
     \end{array} \right. $,

     $ p'_i=\left\{
     \begin{array}{rcl}
     (z_1, z_2, ...,z_{d-t}, a_i, y_{d-t+2}, ..., y_d)     &      & {if\ a_i\leq y_{d-t+1}}\\
     (z_1, z_2, ...,z_{d-t}, b_i, y_{d-t+2}, ..., y_d)      &      & {otherwise,}
     \end{array} \right. $

     and

     $ r^{2}_i=\left\{
     \begin{array}{rcl}
     r^2-(y_{d-t+1}-b_i)^2     &      & {if\ b_i\leq y_{d-t+1}}\\
     r^2-(y_{d-t+1}-a_i)^2      &      & {otherwise,}
     \end{array} \right. $

    $ p_i=\left\{
    \begin{array}{rcl}
    (z_1, z_2, ...,z_{d-t}, b_i, y_{d-t+2}, ..., y_d)     &      & {if\ b_i\leq y_{d-t+1}}\\
    (z_1, z_2, ...,z_{d-t}, a_i, y_{d-t+2}, ..., y_d)      &      & {otherwise.}
    \end{array} \right. $

    Let $v(i)= (b_i-a_i+1),$ and $r'_{i}=\frac{r_{i}}{1+\frac{\epsilon_4}{g(d)}},$ then we have
    $$\sum\limits_{i}C(r'_{i}, p'_{i}, t-1)v(i)\leq C(r_i, p_i, t)\leq \sum\limits_{i}C(r_{i}, p_{i}, t-1)v(i).$$

    Since $r'_{i}=\frac{r_{i}}{1+\frac{\epsilon_4}{g(d)}},$ then
    $$\frac{1-\beta}{1+\beta}\left(1+\frac{\epsilon_4}{g(d)}\right)^{-(t-1)}\sum\limits_{i}C(r_{i}, p_{i}, t-1)v(i)\leq C(r_i, p_i, t)$$
    and
    $$C(r_i, p_i, t)\leq \sum\limits_{i}C(r_{i}, p_{i}, t-1)v(i)$$
    via Lemma~\ref{same-lattice-center-lattice-point}, where $\delta=1+\frac{\epsilon_4}{g(d)}$.

    Via Lemma~\ref{appro-lattice-center-lattice-point} we have
    $$
    \left(\frac{1-\beta}{1+\beta}\right)^2\left(1+\frac{\epsilon_4}{g(d)}\right)^{-(t-1)}\sum\limits_{i}P(r_{i}, p_{i}, t-1)v(i)\leq P(r_i, p_i, t)
    $$
    and
    $$
    P(r_i, p_i, t)\leq \frac{1+\beta}{1-\beta}\sum\limits_{i}P(r_{i}, p_{i}, t-1)v(i).
    $$

    Thus, we have
    $$\left(\frac{1-\beta}{1+\beta}\right)^2\left(1+\frac{\epsilon_4}{g(d)}\right)^{-(t-1)}\leq\frac{P(r_i, p_i, t)}{\sum\limits_{i}P(r_{i}, p_{i}, t-1)v(i)}\leq \frac{1+\beta}{1-\beta}.$$

    From above inequality, we have
    $$\left(\frac{1-\beta}{1+\beta}\right)^2\left(1+\frac{\epsilon_4}{g(d)}\right)^{-(t-1)}\frac{1}{P(r_i, p_i, t)}\leq\frac{1}{\sum\limits_{i}P(r_{i}, p_{i}, t-1)v(i)}\leq \frac{1+\beta}{1-\beta}\frac{1}{P(r_i, p_i, t)}.$$

    Via Lemma~\ref{appro-lattice-center-lattice-point} we have
    $$\frac{(1-\beta)^2}{(1+\beta)^3}\left(1+\frac{\epsilon_4}{g(d)}\right)^{-(t-1)}\frac{1}{C(r_i, p_i, t)}\leq\frac{1}{\sum\limits_{i}P(r_{i}, p_{i}, t-1)v(i)}\leq \frac{1+\beta}{(1-\beta)^2}\frac{1}{C(r_i, p_i, t)}.$$

    Let $g(d)=d^2,$
    $\epsilon_4=\frac{\alpha}{4}$ and $\beta=\frac{\alpha}{\alpha+16d+16}$. Since Algorithm RecursiveBigBallRandomLatticePoint(.) has $d$ iteration, we can generate a random lattice point with bias of probability as:
    \begin{eqnarray*}
    &&\frac{P(r_{i}, p_{i}, d-1)}{\sum\limits_{i}P(r_{i}, p_{i}, d-2)v(i)}\cdot\frac{P(r_{i}, p_{i}, d-2)}{\sum\limits_{i}P(r_{i}, p_{i}, d-1)v(i)}\cdots\frac{P(r_{i}, p_{i}, 0)}{\sum\limits_{i}P(r_{i}, p_{i}, 0)v(i)}\\
    &\leq& \frac{1+\beta}{(1-\beta)^2}\frac{1}{C(r, p, d)}\cdot\left(\frac{1+\beta}{1-\beta}\right)^{d-1}\cdot P(r_i, p_i, 0)\\
    &\leq& \frac{1}{1-\beta}\frac{1}{C(r, p, d)}\cdot\left(\frac{1+\beta}{1-\beta}\right)^{d}\cdot (1+\beta)C(r_i, p_i, 0)\\
    &=& \left(\frac{1+\beta}{1-\beta}\right)^{d+1}\frac{1}{C(r, p, d)}\\
    &=&\left(1+\frac{2\beta}{1-\beta}\right)^{d+1}\frac{1}{C(r, p, d)}\\
    &\leq& e^{\frac{2\beta}{1-\beta}(d+1)}\frac{1}{C(r, p, d)}\\
    &\leq&\left(1+\frac{4\beta}{1-\beta}(d+1)\right)\frac{1}{C(r, p, d)}\\
    &\leq&(1+\alpha)\frac{1}{C(r, p, d)}
    \end{eqnarray*}
    and
    \begin{eqnarray*}
    &&\frac{P(r_{i}, p_{i}, d-1)}{\sum\limits_{i}P(r_{i}, p_{i}, d-2)v(i)}\cdot\frac{P(r_{i}, p_{i}, d-2)}{\sum\limits_{i}P(r_{i}, p_{i}, d-1)v(i)}\cdots\frac{P(r_{i}, p_{i}, 0)}{\sum\limits_{i}P(r_{i}, p_{i}, 0)v(i)}\\
    &\geq&\frac{(1-\beta)^2}{(1+\beta)^3}\left(1+\frac{\epsilon_4}{g(d)}\right)^{-(d-1)}\frac{1}{C(r, p, d)}\left(\frac{1-\beta}{1+\beta}\right)^{2(d-1)}\left(1+\frac{\epsilon_4}{g(d)}\right)^{\frac{-(d-1)(d-2)}{2}}(1-\beta)C(r_i, p_i, 0)\\
    &=&\left(\frac{1-\beta}{1+\beta}\right)^{2d+1}\left(1+\frac{\epsilon_4}{g(d)}\right)^{\frac{-(d-1)d}{2}}\frac{1}{C(r, p, d)}\\
    &\geq&\left(1-\frac{2\beta}{1+\beta}\right)^{2d+1}\left(1+\frac{\epsilon_4}{g(d)}\right)^{\frac{-d^2}{2}}\frac{1}{C(r, p, d)}\\
    &\geq&\left(1-\frac{2\beta}{1+\beta}\right)^{2d}\left(1+\frac{\epsilon_4}{g(d)}\right)^{-d^2}\frac{1}{C(r, p, d)}\\
    &\geq&\left(1-\frac{4\beta d}{1+\beta}\right)\left(1-\frac{\epsilon_4d^2}{g(d)}\right)\frac{1}{C(r, p, d)}\\
    &\geq&\left(1-\frac{4\beta d}{1+\beta}\right)\left(1-\epsilon_4\right)\frac{1}{C(r, p, d)}\\
    &\geq&\left(1-\frac{4\beta d}{1+\beta}-\epsilon_4\right)\frac{1}{C(r, p, d)}\\
    &\geq&\left(1-\frac{4\beta d}{1+\beta}\right)\left(1-\epsilon_4\right)\frac{1}{C(r, p, d)}\\
    &\geq&\left(1-\alpha\right)\frac{1}{C(r, p, d)}.
    \end{eqnarray*}

    Therefore, we can generate a random lattice point with probability between
    $$\left[\left(1-\alpha\right)\frac{1}{C(r, p, d)},\ \left(1+\alpha\right)\frac{1}{C(r, p, d)}\right].$$

    In line~\ref{algorithm3line3} of algorithm RecursiveBigBallRandomLatticePoint(.), it forms a $\left(r, y_{d-t+1}, 1+\frac{\epsilon_4}{g(d)}\right)$-partition
    $I_1,\cdots, I_w$ for
    $[\lceil{y_{d-t+1}-r}\rceil, \lfloor{y_{d-t+1}+r}\rfloor]\cap \mathbb{Z}$ and
    $[y_{d-t+1}+1, \lfloor{y_{d-t+1}+r}\rfloor]\cap \mathbb{Z}$. Then, there are at most $w$ number of $a_i$, where $w$ such that $\frac{r}{\left(1+\frac{\epsilon_4}{g(d)}\right)^{w}}\leq 1.$ Solving $w,$ we have $w\geq \frac{g(d)\log r}{\epsilon_4}.$ And there are $d$ iterations in algorithm RecursiveBigBallRandomLatticePoint(.).

    Thus, the running time of the algorithm is $\bigO\left(\frac{g(d)\log r}{\epsilon_4}\cdot d\right)=\bigO(\frac{d^3\log r}{\epsilon_4})=\bigO\left(\frac{d^3\log r}{\alpha}\right).$
\end{proof}

$\mathbf{Remark:}$ We note that there are at most one $(t-1)-$dimensional ball of radius $r<\frac{2d^3}{\alpha}$ with center at a lattice point, where $t=1, 2, ..., d.$ For this case, we can apply Theorem~\ref{smaller-radius-lattice-point} with $\beta=0.$

\begin{theorem}\label{larger-radius-lattice-point}
    For arbitrary $\alpha\in (0, 1),$ and $\alpha'\in (0, 1),$ there is an $(1+\alpha')-$bias algorithm with runing time $\bigO\left(\frac{d^3\log (r+\sqrt{d})}{\alpha}\right)$ for a $d-$dimensional ball $B_d(r, q, d)$ to generate a random lattice point of radius $r>\frac{2d^{\frac{3}{2}}}{\alpha}$ with an arbitrary center.
\end{theorem}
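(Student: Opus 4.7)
The plan is to analyze the algorithm BigBallRandomLatticePoint$(r,p,d)$ already described above, which reduces the arbitrary-center case to the lattice-center case handled by Theorem~\ref{lattice-center-lattice-point} via rejection sampling. First I would verify containment: since $q$ is the nearest lattice point of $p$, the coordinate-wise rounding in Definition~\ref{definiti} gives $\|p-q\|\le \tfrac{\sqrt d}{2}$, hence by the triangle inequality
\begin{equation*}
B_d(r,p,d)\ \subseteq\ B_d\!\left(r+\tfrac{\sqrt d}{2},\,q,\,d\right)\ \subseteq\ B_d(r+\sqrt d,\,q,\,d),
\end{equation*}
so every lattice point returned by line~\ref{al3} that also lies in $B_d(r,p,d)$ is a legitimate output, and every lattice point of $B_d(r,p,d)$ is reachable.

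Next I would do the bias analysis. Invoke Theorem~\ref{lattice-center-lattice-point} on the enlarged ball $B_d(r+\sqrt d, q, d)$ with a bias parameter $\alpha_0$ to be fixed (taking $\alpha_0\asymp\alpha'$). Writing $N=C(r+\sqrt d,q,d)$ and $M=C(r,p,d)$, each lattice point $x\in B_d(r+\sqrt d,q,d)$ is produced with probability $c_x/N$ where $c_x\in[1-\alpha_0,1+\alpha_0]$. Conditioning on acceptance ($s\in B_d(r,p,d)$), the output probability of an $x\in B_d(r,p,d)$ becomes $c_x / \sum_{y\in B_d(r,p,d)} c_y$, which lies in $\bigl[\tfrac{1-\alpha_0}{1+\alpha_0}\cdot\tfrac{1}{M},\ \tfrac{1+\alpha_0}{1-\alpha_0}\cdot\tfrac{1}{M}\bigr]$. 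Choosing $\alpha_0 = \alpha'/(2+\alpha')$ yields the required $(1+\alpha')$-bias (the choice of the constant is flexible, e.g., $\alpha_0=\alpha'/4$ suffices).

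For the runtime, I would bound the acceptance probability from below by a constant. Since $r>2d^{3/2}/\alpha$, by Theorem~\ref{large-total-point-lemma-ration} both $M$ and $N$ are $(1\pm\alpha)$-approximated by the corresponding Euclidean volumes, so
\begin{equation*}
\frac{N}{M}\ \le\ \frac{1+\alpha}{1-\alpha}\cdot\frac{V_d(r+\sqrt d)}{V_d(r)}\ =\ \frac{1+\alpha}{1-\alpha}\left(1+\frac{\sqrt d}{r}\right)^{d}\ \le\ \frac{1+\alpha}{1-\alpha}\cdot e^{d^{3/2}/r}\ \le\ \frac{1+\alpha}{1-\alpha}\cdot e^{\alpha/2},
\end{equation*}
which is bounded by an absolute constant for $\alpha\in(0,1)$. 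Hence the acceptance probability on each trial is $\Omega(1)$, the expected number of trials is $O(1)$, and each trial costs $O(d^3\log(r+\sqrt d)/\alpha)$ by Theorem~\ref{lattice-center-lattice-point}, giving the claimed running time.

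The main obstacle is the interplay of the two distinct error parameters $\alpha$ (governing the ball-radius regime and the volume approximation used in Theorem~\ref{large-total-point-lemma-ration}) and $\alpha'$ (governing the final output bias). The subtlety is that the inner generator already has bias $\alpha_0$ from Theorem~\ref{lattice-center-lattice-point}, and rejection roughly doubles this to $2\alpha_0/(1-\alpha_0)$; one must set $\alpha_0$ as a sufficiently small function of $\alpha'$ so that the composed bias is within $(1+\alpha')$, while simultaneously verifying that the runtime dependence on $\alpha_0$ only contributes lower-order factors absorbed in the $O(d^3\log(r+\sqrt d)/\alpha)$ bound. The remaining technicalities (bounding $\|p-q\|$, applying the containment, invoking the volume estimate) are straightforward once these parameter choices are in place.
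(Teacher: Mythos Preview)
Your approach is essentially the same as the paper's: reduce to Theorem~\ref{lattice-center-lattice-point} on the enlarged lattice-centered ball $B_d(r+\sqrt d,q,d)$ and use rejection sampling, with the bias analysis via $\tfrac{1\pm\alpha_0}{1\mp\alpha_0}$ and the acceptance probability bounded through the volume ratio $V_d(r)/V_d(r+\sqrt d)$ and Theorem~\ref{large-total-point-lemma-ration}. The paper resolves the two-parameter issue you flag exactly as you anticipate: it simply takes the inner bias equal to $\alpha$ and then \emph{defines} $\alpha'=\tfrac{2\alpha}{1-\alpha}$, so $\alpha$ and $\alpha'$ are of the same order and the stated $O(d^3\log(r+\sqrt d)/\alpha)$ runtime is consistent (the ``arbitrary $\alpha,\alpha'$'' in the statement is somewhat loose).
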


\begin{proof}
    Consider another ball $B_d(r', q, d)$ of radius $r'$ with lattice center $q=(y_1, y_2, ..., y_d)$ that contains ball $B_d(r, p, d)$ , where $r'=r+\sqrt{d}.$ Let $V_d(r)$ be the volume of a $d-$dimensional ball with radius $r,$ then probability that a lattice point in $B_d(r', q, d)$ belongs to $B_d(r, p, d)$ is at least $(1-\alpha){C(r, p, d)\over C(r', p ,d)}$.

    Via Theorem~\ref{large-total-point-lemma-ration}, we have
    \begin{equation*} \label{relationship-c-v-42}
    \left\{
    \begin{array}{lr}
    \frac{1}{1+\beta}V_d(r)\leq C(r, p, d)\leq \frac{1}{1-\beta}V_d(r)&  \\
    \frac{1}{1+\beta}V_d(r+\sqrt{d})\leq C(r', q, d)\leq \frac{1}{1-\beta}V_d(r+\sqrt{d}),&
    \end{array}
    \right.
    \end{equation*}
    then we have
    $$\frac{1-\beta}{1+\beta}\frac{V_d(r)}{V_d(r+\sqrt{d})}\leq \frac{C(r, p, d)}{C(r', p, d)}\leq \frac{1+\beta}{1-\beta}\frac{V_d(r)}{V_d(r+\sqrt{d})}.$$

    The formula for a $d-dimensional$ ball of raduis $r$ is
    $$V_d(r)=f(d)\cdot r^d$$
    where $f(d)=\pi^{\frac{d}{2}}\Gamma\left(\frac{1}{2}d+1\right)^{-1}$ and $\Gamma(.)$ is Euler's gamma function. Let $\beta=\frac{\alpha}{8+\alpha}$ and $\alpha>\frac{2d^{\frac{3}{2}}}{r+\sqrt{d}},$
    \begin{eqnarray*}
    (1-\alpha){C(r, p, d)\over C(r' ,p ,d)}
    &&\geq (1-\alpha)\frac{1-\beta}{1+\beta}\frac{f(d)\cdot r^d}{f(d)\cdot \left(r+\sqrt{d}\right)^d}\\
    &&=(1-\alpha)\frac{1-\beta}{1+\beta}\left(1-\frac{\sqrt{d}}{r+\sqrt{d}}\right)^d\\
    &&\geq (1-\alpha)\frac{1-\beta}{1+\beta}\left(1-\frac{d^{\frac{3}{2}}}{r+\sqrt{d}}\right)\\ \label{radius-decide2}
    &&\geq (1-\alpha)\left(1-\frac{2\beta}{1-\beta}\right)\left(1-\frac{d^{\frac{3}{2}}}{r+\sqrt{d}}\right)\\
    &&\geq \left(1-\alpha-\frac{2\beta}{1-\beta}-\frac{d^{\frac{3}{2}}}{r+\sqrt{d}}\right).\\
    \end{eqnarray*}
    Therefore, the probability a lattice point in $B_d(r', q, d)$ belongs to $B_d(r, p, d)$ fails is at most $\left(\alpha+\frac{2\beta}{1-\beta}+\frac{d^{\frac{3}{2}}}{r+\sqrt{d}}\right),$ where $\left(\alpha+\frac{2\beta}{1-\beta}+\frac{d^{\frac{3}{2}}}{r+\sqrt{d}}\right)<1,$ which means the algorithm BigBallRandomLatticePoint(.) fails with small possibility.

    The probability to generate a random lattic point in ball $B_d(r', q, d)$ is in range of $$\left[(1-\alpha)\frac{1}{C(r', q, d)}, (1+\alpha)\frac{1}{C(r', q, d)}\right]$$ via Theorem~\ref{lattice-center-lattice-point}. Then the bias to generate a random lattic point in ball $B_d(r, p, d)$ is $\frac{\Pr(p_i)}{\sum_{i}\Pr(p_i)},$ where $\Pr(p_i)\in \left[(1-\alpha)\frac{1}{C(r', q, d)}, (1+\alpha)\frac{1}{C(r', q, d)}\right].$

    Then, we have
    \begin{eqnarray*}
        \frac{\Pr(p_i)}{\sum\limits_{i}\Pr(p_i)}
        &&\leq \frac{(1+\alpha)\frac{1}{C(r', q, d)}}{(1-\alpha)\frac{1}{C(r', q, d)}C(r, p, d)}\\
        &&=\frac{1+\alpha}{1-\alpha}\frac{1}{C(r, p, d)}\\
        &&=\left(1+\frac{2\alpha}{1-\alpha}\right)\frac{1}{C(r, p, d)},\\
    \end{eqnarray*}

    and
    \begin{eqnarray*}
        \frac{\Pr(p_i)}{\sum\limits_{i}\Pr(p_i)}
        &&\geq \frac{(1-\alpha)\frac{1}{C(r', q, d)}}{(1+\alpha)\frac{1}{C(r', q, d)}C(r, p, d)}\\
        &&=\frac{1-\alpha}{1+\alpha}\frac{1}{C(r, p, d)}\\
        &&=\left(1-\frac{2\alpha}{1+\alpha}\right)\frac{1}{C(r, p, d)}\\
        &&\geq \left(1-\frac{2\alpha}{1-\alpha}\right)\frac{1}{C(r, p, d)}.
    \end{eqnarray*}

    Therefore, the probability to generate a random lattice point in $B_d(r, p, d)$ is range of $$\left[(1-\alpha')\frac{1}{C(r, p, d)}, (1+\alpha')\frac{1}{C(r, p, d)}\right]$$ where $\alpha'=\frac{2\alpha}{1-\alpha}.$

    It takes $\bigO\left(\frac{d^3\log (r+\sqrt{d})}{\alpha}\right)$ running time to generate a random lattice point inside a $d-$dimensional ball $B_d(r+\sqrt{d}, p, d)$ with a lattice point center via Theorem~\ref{lattice-center-lattice-point}. Thus, the algorithm BigBallRandomLatticePoint(.) takes $\bigO\left(\frac{d^3\log (r+\sqrt{d})}{\alpha}\right)$ running time to generate a random lattice.
\end{proof}


\begin{theorem}\label{theorem43}
    For an arbitrary $\alpha\in (0, 1),$ there is an algorithm with runing time $\bigO\left(\frac{d^3\log (r+\sqrt{d})}{\alpha}\right)$ and $(1+\alpha)-$bias for a $d-$dimensional ball $B_d(r, q, d)$ to generate a random lattice pointo f radius $r>\frac{2d^{\frac{3}{2}}}{\alpha}$ with a arbitrary center; and there is a $\complexityLatticePoint$ time algorithm to generate a lattice point inside a $d-$dimensional ball $B_d(r, p, d)$ of radius $r\leq\frac{2d^{\frac{3}{2}}}{\alpha}$ with center $p\in D(\lambda,
    d, l)$.
\end{theorem}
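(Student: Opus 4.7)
The plan is to observe that Theorem~\ref{theorem43} is a direct combination of two previously established results, switching between them based on the size of the radius $r$ relative to the threshold $\frac{2d^{3/2}}{\alpha}$. This mirrors the structure of the earlier Theorem~\ref{theorem14}, which combined Theorem~\ref{large-total-point-lemma-ration} and Theorem~\ref{smaller-total-point-lemma} for the counting problem; here the analog is to combine the two random-generation results Theorem~\ref{larger-radius-lattice-point} and Theorem~\ref{smaller-radius-lattice-point}.

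Concretely, I would proceed by a simple case split on the radius. First, for the large-radius regime $r > \frac{2d^{3/2}}{\alpha}$ with an arbitrary center $q \in \mathbb{R}^d$, I would invoke Theorem~\ref{larger-radius-lattice-point}, which already gives an $(1+\alpha)$-biased random lattice point generator in $B_d(r,q,d)$ with running time $\bigO\!\left(\frac{d^3\log(r+\sqrt{d})}{\alpha}\right)$. Since that theorem applies verbatim, no further work is needed in this branch. Second, for the small-radius regime $r \le \frac{2d^{3/2}}{\alpha}$ with center $p \in D(\lambda,d,l)$, I would invoke Theorem~\ref{smaller-radius-lattice-point}, which yields an exact (unbiased) random lattice point in $B_d(r,p,d)$ in $\complexityLatticePoint$ time. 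Note that in this second case the hypothesis $p \in D(\lambda,d,l)$ is exactly what allows the dynamic programming table built by \textsc{CountLatticePoint} to have polynomial size via Lemma~\ref{range-of-radius}.

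The only thing worth justifying beyond citation is that these two branches together cover the full statement of the theorem: for any input $(r,p,d,\alpha)$ satisfying either hypothesis, exactly one of the two branches applies, and in each branch both the running time bound and the bias bound claimed by Theorem~\ref{theorem43} match those provided by the invoked theorem. In particular, the output bias $(1+\alpha)$ in the large-radius branch comes directly from Theorem~\ref{larger-radius-lattice-point} (with $\alpha$ in its statement playing the role of $\alpha'$ here, rescaled by a constant), while in the small-radius branch the generator is exact and therefore trivially $(1+\alpha)$-biased for any $\alpha \ge 0$.

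There is no real obstacle: the technical content has already been absorbed by the earlier theorems, so the proof reduces to a two-case dispatch. If anything, the only point to be careful about is bookkeeping the parameter $\alpha$ consistently between the statement of Theorem~\ref{larger-radius-lattice-point} (which uses both $\alpha$ and $\alpha'$) and the statement of Theorem~\ref{theorem43}, which uses only $\alpha$; this is handled by choosing the internal $\alpha$ of Theorem~\ref{larger-radius-lattice-point} as a constant multiple of the $\alpha$ in the current theorem, which does not affect the asymptotic running time.
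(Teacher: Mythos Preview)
Your proposal is correct and matches the paper's own proof essentially verbatim: the paper also performs a two-case split on the radius, invoking Theorem~\ref{larger-radius-lattice-point} when $r>\frac{2d^{3/2}}{\alpha}$ and Theorem~\ref{smaller-radius-lattice-point} when $r\le\frac{2d^{3/2}}{\alpha}$. Your additional remarks about parameter bookkeeping and the trivial bias in the small-radius branch are more explicit than the paper's terse proof, but the approach is identical.
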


\begin{proof}
    We discuss two cases based the radius of the $d$-dimensional ball.

    Case 1: When generate a random lattice point inside a $d$-dimensional ball of radius $r>\frac{2d^{\frac{3}{2}}}{\alpha}$ with center
    arbitrary center $p$, apply Theorem~ \ref{larger-radius-lattice-point}.

    Case 2: When generate a random lattice point inside a $d$-dimensional ball of radius $r\leq\frac{2d^{\frac{3}{2}}}{\alpha}$ with center $p\in D(\lambda,
    d, l)$, apply Theorem~ \ref{smaller-radius-lattice-point}.
\end{proof}

\subsection{Count Lattice Point in the Union of High Dimensional Balls}
In this section, we apply the algorithm developed in Section~\ref{alg} to count the total number of lattice point in the union of high dimensional balls.



\begin{theorem}
    There is a $\bigO\left(\poly\left({1\over \epsilon},\log{1\over \gamma}\right)\cdot
    {\setCount}\cdot (\log {\setCount})^{\bigO(1)}\right)$ time and
    $\bigO(\log \setCount)$ rounds algorithm for the number of lattice points in $B_1\cup B_2\cup\cdots\cup
    B_{\setCount}$  such that with probability at least $1-\gamma$, it
    gives a $sum\cdot M\in [{(1-\epsilon)(1-\alpha_L)(1-\beta_L)}\cdot
    |B_1\cup\cdots \cup B_{\setCount}|,
    {(1+\epsilon)(1+\alpha_R)(1+\beta_R)}\cdot |B_1\cup\cdots \cup
    B_{\setCount}|],$ where each ball $B_i$ satisfy that either its radius $r>\frac{2d^\frac{3}{2}}{\beta}$ or its center $p\in D(\lambda, d, l)$ and $|B_1\cup\cdots \cup
    B_{\setCount}|$ is the total number of lattice point of union of $\setCount$ high dimensional balls.
\end{theorem}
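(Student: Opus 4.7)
The plan is to verify that the list of balls $B_1,\ldots,B_{\setCount}$ forms a valid input list in the sense of Definition~\ref{input-list-def}, with appropriate parameters $(\alpha_L,\alpha_R)$ and $(\beta_L,\beta_R)$, and then invoke Corollary~\ref{nearlinear-coro} as a black box. Concretely, for each ball $B_i$ I must supply (i) a size estimate $m_i$ that lies in $[(1-\beta_L)|B_i|,(1+\beta_R)|B_i|]$, (ii) an $(\alpha_L,\alpha_R)$-biased random lattice point generator $\randomElm(B_i)$, and (iii) a membership oracle for lattice points in $B_i$.

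For (i), I would invoke Theorem~\ref{theorem14}: if $B_i=B_d(r_i,p_i,d)$ has radius $r_i>\tfrac{2d^{3/2}}{\beta}$ then $V_d(r_i)$ gives a $(1\pm\beta)$-approximation to $|B_i|$ in $\bigO(d)$ time for an arbitrary center, and otherwise (since by hypothesis $p_i\in D(\lambda,d,l)$) the dynamic programming algorithm of Theorem~\ref{smaller-total-point-lemma} computes $|B_i|$ exactly in time $\complexityLatticePoint$. In either case we can take $\beta_L=\beta_R=\beta$ (with exact counting giving $0$ on the small-radius side). For (ii), I would invoke Theorem~\ref{theorem43}: large-radius balls with arbitrary center admit an $(1+\alpha)$-biased random lattice point generator in $\bigO\!\left(\tfrac{d^3\log(r_i+\sqrt d)}{\alpha}\right)$ time, and small-radius balls with center in $D(\lambda,d,l)$ admit an exact uniform generator (Theorem~\ref{smaller-radius-lattice-point}). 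Thus we may take $\alpha_L=\alpha_R=\alpha$. For (iii), membership of an integer point $x$ in $B_d(r_i,p_i,d)$ is settled by comparing $\|x-p_i\|^2$ with $r_i^2$ in $\bigO(d)$ time.

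Having populated the list $L=B_1,\ldots,B_{\setCount}$ and set $M=m_1+\cdots+m_{\setCount}$, I would then apply Corollary~\ref{nearlinear-coro} with the parameters $\epsilon$ and $\gamma$ from the statement. The corollary returns, in $\bigO(\poly(\tfrac{1}{\epsilon},\log\tfrac{1}{\gamma})\cdot\setCount\cdot(\log\setCount)^{\bigO(1)})$ time and $\bigO(\log\setCount)$ rounds, a value $sum\cdot M$ satisfying the two-sided bound
\[
(1-\epsilon)(1-\alpha_L)(1-\beta_L)\,|B_1\cup\cdots\cup B_{\setCount}|\le sum\cdot M\le (1+\epsilon)(1+\alpha_R)(1+\beta_R)\,|B_1\cup\cdots\cup B_{\setCount}|
\]
with failure probability at most $\gamma$, where $|B_1\cup\cdots\cup B_{\setCount}|$ denotes the number of lattice points in the union. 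The per-operation costs introduced by the ball subroutines are $\poly(d,\tfrac{1}{\alpha},\tfrac{1}{\beta},\log r,|\lambda|,|l|)$, which are absorbed into the hidden constants since $d,\alpha,\beta,\lambda,l,r$ are regarded as fixed parameters of the input geometry rather than functions of $\setCount$.

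The only point requiring care is that Definition~\ref{input-list-def} assumes $\alpha_L,\alpha_R,\beta_L,\beta_R$ are universal constants across all sets, whereas the two regimes (large versus small radius, arbitrary versus lattice-type center) in Theorems~\ref{theorem14} and~\ref{theorem43} yield bias/approximation parameters of potentially different flavors. I would handle this by choosing a common $\alpha$ and $\beta$ up front (e.g.\ $\alpha=\alpha_L=\alpha_R$ and $\beta=\beta_L=\beta_R$) and tuning the subroutine parameters of Theorems~\ref{theorem43} and~\ref{theorem14} accordingly; the small-radius/lattice-center case then trivially satisfies the bounds with slack. This is the only non-routine step; the rest is bookkeeping to confirm that all preconditions of Corollary~\ref{nearlinear-coro} hold and that the aggregate running time remains of the stated form.
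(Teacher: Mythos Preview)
Your proposal is correct and follows essentially the same approach as the paper: use Theorem~\ref{theorem14} for approximate set sizes, Theorem~\ref{theorem43} for biased random generators, and then invoke the main set-union approximation result. The only cosmetic difference is that the paper cites Theorem~\ref{nonadaptive-thm} directly, whereas you cite Corollary~\ref{nearlinear-coro}; your choice is in fact the more precise one, since the specific $\bigO(\log\setCount)$-round and near-linear time bounds in the statement come from that corollary (with $c_1=0$, $z_{\min}=1$, $z_{\max}=\setCount$) rather than from the general Theorem~\ref{nonadaptive-thm}.
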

\begin{proof}
    Apply Theorem~\ref{theorem14} and Theorem~\ref{theorem43}, we have $m_i$ for each ball $B_i$ with $$m_i\in \left((1-\beta_L)C_i(r_i, p_i, t), (1+\beta_R)C_i(r_i, p_i, t)\right),$$ and biased random generators with $$\prob(x=\randomElm(B_i))\in
    \left[{1-\alpha_L\over C_i(r_i, p_i, t)},{1+\alpha_R\over C_i(r_i, p_i, t)}\right]$$ for each input
    ball $B_i$, where $C_i(r_i, p_i, t)$ is the number of lattice point of $t-$dimensional ball $B_i$ of radius $r_i$ for $i=1, 2, ..., \setCount$.
    Then apply Theorem~\ref{nonadaptive-thm}.
\end{proof}

\subsection{Hardness to Count Lattice Points in a Set of Balls}

In this section, we show that it is \#P-hard to count the number of
lattice points in a set of balls.

\begin{theorem}
    It is \#P-hard to count the number of lattice points in a set of
    $d$-dimensional balls even the centers are of the format
    $(x_1,\cdots, x_d)\in \mathbb{R}^d$ that has each $x_i$ to be either $1$ or
    ${\sqrt{h}\over 2}$ for some integer $h\le d$.
\end{theorem}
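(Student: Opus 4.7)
The plan is to reduce from \#3SAT. Given a 3CNF formula $\phi$ on variables $y_1,\ldots,y_n$ with clauses $C_1,\ldots,C_m$, I encode Boolean values by $\{1,2\}\subset\mathbb{Z}$ (so $y_i=2$ is ``true'' and $y_i=1$ is ``false'') and construct one ball in $\mathbb{R}^n$ per clause. For each clause $C_j$ involving the three variables $y_a,y_b,y_c$, there is a unique triple $(v_a,v_b,v_c)\in\{1,2\}^3$ that falsifies $C_j$; I place $B_j$ with center $p^{(j)}$ defined by $p^{(j)}_i=v_i$ for $i\in\{a,b,c\}$ and $p^{(j)}_i=3/2$ otherwise, and with radius $r=\sqrt{(n-3)/4}$.

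The key computation is that for any lattice point $y\in\mathbb{Z}^n$,
\[
\|y-p^{(j)}\|^2 \;=\; \sum_{i\in\{a,b,c\}}(y_i-v_i)^2 \;+\; \sum_{i\notin\{a,b,c\}}(y_i-3/2)^2.
\]
Each of the first three terms is a nonnegative integer, zero iff $y_i=v_i$, while each of the remaining $n-3$ terms is at least $1/4$ with equality iff $y_i\in\{1,2\}$ (and jumps to at least $9/4$ once $y_i$ leaves $\{1,2\}$). Thus the sum is at least $(n-3)/4=r^2$, with equality iff $y\in\{1,2\}^n$ and $y_a=v_a$, $y_b=v_b$, $y_c=v_c$. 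So $B_j\cap\mathbb{Z}^n$ consists of precisely the $2^{n-3}$ assignments in $\{1,2\}^n$ that falsify $C_j$, from which
\[
\left|\bigcup_{j=1}^m B_j\cap\mathbb{Z}^n\right| \;=\; 2^n-\#\phi,
\]
identifying satisfying assignments of $\phi$ with points of $\{1,2\}^n$ not falsifying any clause.

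Finally, I verify the center format. Every coordinate of every $p^{(j)}$ lies in $\{1,3/2,2\}$; the value $1$ is allowed directly, while $3/2=\sqrt{9}/2$ and $2=\sqrt{16}/2$ are of the form $\sqrt{h}/2$ for $h=9$ and $h=16$. Both satisfy $h\le n=d$ once $n\ge 16$, and \#3SAT remains \#P-hard when restricted to $n\ge 16$ variables, so the reduction is polynomial-time overall. The main subtlety, which dictates the $\{1,2\}$-encoding in place of the more natural $\{0,1\}$, is avoiding $0$ as a coordinate of a center, since $0=\sqrt{0}/2$ would be admissible only if the format allowed $h=0$, which the statement does not make explicit; the $\{1,2\}$ choice keeps all required values of $h$ among small perfect squares bounded by $d$.
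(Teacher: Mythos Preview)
Your proof is correct. The reduction from \#3SAT via the complement (balls capture the falsifying assignments of each clause, so the union counts non-satisfying assignments and $\#\phi = 2^n - |\bigcup_j B_j\cap\mathbb{Z}^n|$) is clean, and the distance computation showing that each $B_j$ contains precisely the $2^{n-3}$ points of $\{1,2\}^n$ falsifying $C_j$ is airtight.

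The paper takes a different route: it reduces from \#DNF rather than \#3SAT, so the balls are meant to capture the \emph{satisfying} assignments of each term directly (the union then equals the DNF's satisfying set without a complementation step). The paper's sketch uses a $\{0,1\}$ encoding with center coordinates drawn from $\{0,1,1/2\}$ and radius $\sqrt{h}/2$; your $\{1,2\}$ encoding and your explicit discussion of why $3/2=\sqrt{9}/2$ and $2=\sqrt{16}/2$ fit the stated format is more careful on exactly the point you flag as subtle---the paper's construction needs the coordinate $0$, which is only admissible if one reads ``integer $h\le d$'' as allowing $h=0$. Your version sidesteps that ambiguity entirely at the modest cost of requiring $d\ge 16$, which is harmless. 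In short, both arguments work; yours is a complement-based variant that handles the center-format constraint more explicitly.
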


\begin{proof}
    We derive a polynomial time reduction from DNF problem to it. For
    each set of lattice points in a $h$-dimensional cube $\{0,1\}^h$, we
    design a ball with radius $r={\sqrt{h}\over 2}$ and center at
    $C=({\sqrt{h}\over 2},\cdots, {\sqrt{h}\over 2})$. It is easy to see
    that this ball only covers the lattice points in $\{0,1\}^h$. Every
    $0,1$-lattice point in ${0,1}$ has distance to the center $C$ equal
    to $r$. For every lattice point $P\in R^h$ that is not in
    $\{0,1\}^h$ has distance $d$ with $d^2\ge r^2+(1+{1\over
        2})^2-({1\over 2})^2= r^2+2$.
\end{proof}

\begin{definition}
    For a center $c=(c_1,\cdots, c_d)$ and an even number $k>0$ and a
    real $r>0$, a $d$-dimensional {\it $k$-degree} ball $B_k(c,r)$ is
    $\{(x_1,\cdots, x_d): (x_1,\cdots, x_d)\in \mathbb{R}^d$ and $\sum\limits_{i=1}^d
    (x_i-c_i)^k\le r\}$.
\end{definition}

\begin{theorem}\label{k-hard-thm}Let $k$ be an even number at least $2$. Then we have:
    \begin{enumerate}
        \item
        There is no polynomial time algorithm to approximate the number of
        lattice points in the intersection $n$-dimensional $k$-degree balls
        unless P=NP.
        \item
        It is \#P-hard to count the number of lattice points in the
        intersection $n$-dimensional $k$-degree  balls.
    \end{enumerate}
\end{theorem}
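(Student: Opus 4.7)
The plan is to reduce from 3-SAT (for the inapproximability claim) and from \#3-SAT (for the \#P-hardness claim) by constructing, from a formula $\phi$ on $d$ variables with clauses $C_1,\ldots,C_m$, a collection of $m+1$ $d$-dimensional $k$-degree balls whose intersection meets $\mathbb{Z}^d$ in exactly the set of satisfying assignments of $\phi$, viewed inside $\{0,1\}^d$. Once this equivalence is established, counting lattice points in the intersection equals $\#\phi$, giving \#P-hardness, and any multiplicative approximation to this count would decide whether the count is $0$ or at least $1$, thereby solving 3-SAT and forcing $\P=\NP$.

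First I would build a ``restriction ball'' $B_0$ with center $c_0=(1/2,\ldots,1/2)$ and parameter $r_0=d/2^k$. Every $y\in\{0,1\}^d$ satisfies $\sum_i(y_i-1/2)^k=d\cdot(1/2)^k=r_0$, so $y\in B_0$. For any lattice point $y\in\mathbb{Z}^d\setminus\{0,1\}^d$ some coordinate $y_j\notin\{0,1\}$ contributes $(y_j-1/2)^k\ge(3/2)^k=3^k/2^k$ while each other coordinate contributes at least $(1/2)^k$, yielding a $k$-sum at least $(d-1+3^k)/2^k>r_0$ since $3^k>1$. Hence $B_0\cap\mathbb{Z}^d=\{0,1\}^d$, which works uniformly for every even $k\ge 2$ because $k$ is even so $(y_j-1/2)^k\ge 0$ always.

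Next, for each clause $C_j=\ell_{j,1}\vee\ell_{j,2}\vee\ell_{j,3}$ on variables $x_{i_1},x_{i_2},x_{i_3}$, let $a_s\in\{0,1\}$ be the value of $x_{i_s}$ making $\ell_{j,s}$ true. I place a ball $B_j$ with center having coordinate $a_s$ in position $i_s$ for $s=1,2,3$ and coordinate $1/2$ in all other positions, with parameter $r_j=5/2+(d-3)/2^k$. For $y\in\{0,1\}^d$ the $k$-sum against this center equals $|\{s:y_{i_s}\ne a_s\}|+(d-3)/2^k$, since $(y_{i_s}-a_s)^k\in\{0,1\}$ whenever $y_{i_s},a_s\in\{0,1\}$ and $k$ is even. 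This sum is at most $2+(d-3)/2^k<r_j$ for any satisfying assignment of $C_j$ (at least one literal is true) and exactly $3+(d-3)/2^k>r_j$ for the unique falsifying assignment. Therefore $B_0\cap B_1\cap\cdots\cap B_m\cap\mathbb{Z}^d$ is precisely the set of satisfying assignments of $\phi$.

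The construction is polynomial-time in $|\phi|$, which finishes both parts: the counting reduction proves Part 2, and applying the reduction to a standard 3-SAT instance shows that distinguishing an empty intersection from one of size at least $1$ is $\NP$-hard, ruling out any polynomial-time multiplicative approximation unless $\P=\NP$, giving Part 1. The main technical obstacle I expect is choosing the centers and radii so that (a) $B_0$ isolates $\{0,1\}^d$ uniformly in $k$, and (b) the clause balls cleanly separate satisfying from falsifying assignments without being perturbed by the half-integer coordinates used for the $d-3$ free variables; both are handled because $k$ is even makes $(\cdot)^k$ nonnegative and because the discrete differences stay in $\{0,1\}$ for the encoded variables regardless of $k$.
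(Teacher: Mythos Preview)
Your construction is correct and the reduction works as stated. The restriction ball $B_0$ cleanly isolates $\{0,1\}^d$, and because you then intersect with $B_0$, the analysis of each clause ball $B_j$ only needs to be carried out on $\{0,1\}^d$, where the distance contributions at the three clause positions are exactly $0$ or $1$. The threshold $r_j=5/2+(d-3)/2^k$ sits strictly between $2+(d-3)/2^k$ and $3+(d-3)/2^k$, so the separation is valid for every even $k\ge 2$.

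Your route differs from the paper's in two respects. First, the paper uses no auxiliary restriction ball: each clause ball is designed to do double duty, both excluding non-$\{0,1\}$ lattice points and separating satisfying from falsifying assignments. This is achieved by placing the clause center at $(1-\delta,1-\delta,1-\delta,\tfrac12,\ldots,\tfrac12)$ with $\delta=0.30$ (coordinates flipped according to literal signs) and choosing $r_C^k=2(1-\delta)^k+\delta^k+(n-3)(\tfrac12)^k$. The price is that one must verify several numerical inequalities to rule out non-$\{0,1\}$ points, and the paper in fact only checks these for $k=2$. Second, your clause centers are at $\{0,1\}$ values on the clause coordinates rather than at $1-\delta$, which makes the arithmetic parameter-free and the argument uniform in $k$. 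The trade-off is that your instance has $m+1$ balls instead of $m$; this is immaterial for the hardness conclusions. Overall your decomposition into ``restriction ball $+$ clause balls'' is more elementary and transparently handles all even $k$, while the paper's single-ball-per-clause design is slightly more economical in the number of balls but demands more delicate inequality checking.
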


\begin{proof}
    We derive a polynomial time reduction from 3SAT problem to it. For
    each clause $C=(x_i^*\vee x_j^*\vee x_k^*)$, we can get a ball to
    contain all lattice points in the 0-1-cube to satisfy $C$, each
    $x_i^*$ is a literal to be either $x_i$ or its negation $\bar{x_i}$.

    Without loss of generality, let $C=(x_1\vee x_2\vee x_3)$. Let
    $\delta=0.30$. Let center $D_C=(d_1,d_2,d_3,d_4, d_5,\cdots,
    d_n)=(1-\delta,1-\delta,1-\delta,{1\over 2},{1\over 2},\cdots,
    {1\over 2})$, which has value $1-\delta$ in the first three
    positions, and ${1\over 2}$ in the rest. For $0,1$ assignment $(a_1,
    a_2, \cdots, a_n)$ of $n$ variables, if it satisfies $C$ if and only
    if $\sum\limits_{i=1}^n (a_i-d_i)^k\le 2(1-\delta)^k+\delta^k+(n-3)\cdot
    ({1\over 2})^k$. Therefore, we can select radius $r_C$ that
    satisfies $r_C^k=2(1-\delta)^k+\delta^k+(n-3)\cdot ({1\over 2})^k$.
    We have the following inequalities:
    \begin{equation} \label{first-out-ineqn}
    \left\{
    \begin{array}{lr}
    (2-\delta)^2>(1+\delta)^k> 2(1-\delta)^k+\delta^k\\
    (1+{1\over 2})^k>2(1-\delta)^k+\delta^k+({1\over
        2})^k.
    \end{array}
    \right.
    \end{equation}

    This is because we have the following equalities:
    \begin{equation} \label{first-out-ineqn}
        \left\{
        \begin{array}{lr}
            (1+\delta)^2=1.69,\\
            2(1-\delta)^2+\delta^2=2\times
            0.49+0.09=1.07,\\
            2(1-\delta)^2+\delta^2+({1\over
                2})^2=1.07+0.25=1.32,\\
            (1+{1\over 2})^2=2.25.
        \end{array}
        \right.
    \end{equation}


    If $Y=(y_1,y_2,\cdots, y_n)$ is not a $0,1$-lattice point, we
    discuss two cases:

    \begin{enumerate}
        \item
        Case 1. $y_i\not\in \{0,1\}$ for some $i$ with $1\le i\le 3$.

        In this case we know that dist$(Y, D_C)^2>r_C^2$ by
        inequality~(\ref{first-out-ineqn}).

        \item
        Case 2. $y_i\not\in \{0,1\}$ for some $i$ with $3< i\le n$.

        In this case we know that dist$(Y, D_C)^2>r_C^2$ by
        inequality~(\ref{first-out-ineqn}).
    \end{enumerate}

    If $Y=(y_1,y_2,\cdots, y_n)$ is a $0,1$-lattice point, we discuss
    two cases:

    \begin{enumerate}
        \item
        Case 1. $Y$ satisfies $C$.

        In this case we know that dist$(Y, D_C)^2\le r_C^2$.

        \item
        Case 2. $Y$ does not satisfy $C$.

        In this case we know that dist$(Y, D_C)^2>r_C^2$ by inequality
        $(1-\delta)^2>\delta^2$.
    \end{enumerate}

    The ball $B_C$ with center at $D_C$ and radius $r_C$ contains
    exactly those 0,1-lattice points that satisfy clause $C$. This
    proves the first part of the theorem.

    If there
    were any factor $c$-approximation to the intersection of balls, it
    would be able to test if the intersection is empty. This would bring
    a polynomial time solution to 3SAT.

    It is well known that \#3SAT is \#P-hard. Therefore, It is \#P-hard
    to count the number of lattice points in the intersection
    $n$-dimensional balls. This proves the second part of the theorem.

\end{proof}




\section{Approximation for the Maximal Coverage
    with Balls}\label{ser}
We apply the technology developed in this paper to the maximal
coverage problem when each set is a set of lattice points in a ball
with center in $D(\lambda,d, l)$.

The classical maximum coverage is that given a list of sets
$A_1,\cdots, A_{\setCount}$ and an integer $k$, find $k$ sets from
$A_1, A_2, \cdots, A_{\setCount}$ to maximize the size of the union
of the selected sets in the computational model defined in
Definition~\ref{input-list-def}. For real number $a\in [0,1]$, an
approximation algorithm is a $(1-a)$-approximation for the maximum
coverage problem that has input of integer parameter $k$ and a list
of sets $A_1,\cdots, A_{\setCount}$ if it outputs a sublist of sets
$A_{i_1}, A_{i_2},\cdots, A_{i_k}$ such that $|A_{i_1}\cup
A_{i_2}\cup \cdots\cup A_{i_k}|\ge (1-a) |A_{j_1}\cup A_{j_2}\cup
\cdots\cup A_{j_k}|$, where $A_{j_1}, A_{j_2}, \cdots, A_{j_k}$ is a
solution with maximum size of union.

\begin{theorem}\label{appr3-cover-theorem}\cite{fu2016} Let $\rho$ be a constant in
    $(0,1)$. For  parameters $\xi,\gamma\in (0,1)$ and
    $\alpha_L,\alpha_R,\delta_L,\delta_R\in [0,1-\rho]$, there is an
    algorithm to give a $\left(1-(1-{\beta\over k})^k-\xi\right)$-approximation for
    the maximum cover problem, such that given a
    $((\alpha_l,\alpha_r),(\delta_L,\delta_R))$-list $L$ of finite sets
    $A_1,\cdots, A_{m}$ and an integer $k$,  with probability at
    least $1-\gamma$, it returns an integer $z$ and a subset $H\subseteq
    \{1,2,\cdots, m\}$ that satisfy
    \begin{enumerate}
        \item
        $|\cup_{j\in H}A_j|\ge \left(1-(1-{\beta\over k})^k-\xi\right)C^*(L,k)$ and
        $|H|=k$,
        \item
        $((1-\alpha_L)(1-\delta_L)-\xi)|\cup_{j\in H}A_j|\le z\le
        ((1+\alpha_R)(1+\delta_R)+\xi)|\cup_{j\in H}A_j|$, and
        \item
        Its complexity is  $(T(\xi,\gamma,k, \setCount), R(\xi,\gamma
        ,k,\setCount), Q(\xi,\gamma ,k,\setCount))$ with
        \begin{eqnarray*}
            T(\xi,\gamma,k,\setCount)&=&\bigO\left({k^3\over \xi^2}\left(k\log\left({3
                \setCount\over k}\right)+\log
            {1\over \gamma}\right)\setCount\right),
        \end{eqnarray*}
        where $\beta={(1-\alpha_L)(1-\delta_L)\over
            (1+\alpha_R)(1+\delta_R)}$ and $C^*(L,k)$ is the number of elements to be covered in an optimal solution.
    \end{enumerate}
\end{theorem}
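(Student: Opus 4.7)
My plan is to run the classical greedy scheme for maximum coverage, with each marginal gain estimated through the biased random generators of the input sets, and to show that the bias factor $\beta=(1-\alpha_L)(1-\delta_L)/((1+\alpha_R)(1+\delta_R))$ enters the textbook $1-(1-1/k)^k$ guarantee in exactly the way stated. Concretely, starting from $H=\emptyset$, I would iterate $k$ times: for each $j\in\{1,\ldots,\setCount\}\setminus H$, draw an appropriate batch of samples from $A_j$ via $\randomElm(A_j)$, count how many miss every set already in $H$ using the membership oracle, scale by $m_j$ to obtain an estimate $\widehat{u}_j$ of $|A_j\setminus\bigcup_{i\in H}A_i|$, and adjoin to $H$ the index maximising $\widehat{u}_j$. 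Once $|H|=k$, I would invoke Theorem~\ref{nonadaptive-thm} once on $\{A_i:i\in H\}$ with $z_{\min}=1$, $z_{\max}=k$ to produce the output $z$ required by part (ii).

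First I would calibrate the per-estimate sample budget $s$ and the per-estimate failure probability so that a single union bound over the $k\setCount$ invocations keeps the overall failure probability at most $\gamma$. Using Hoeffding's inequality (Theorem~\ref{hoeffiding-theorem}), $s=\bigO((k^2/\xi^2)(k\log(\setCount/k)+\log(1/\gamma)))$ samples per candidate per round force every $\widehat{u}_j$ into the interval $[(1-\alpha_L)(1-\delta_L)(1-\xi')\,u_j,\,(1+\alpha_R)(1+\delta_R)(1+\xi')\,u_j]$ for an appropriately chosen $\xi'=\Theta(\xi/k)$, at least whenever the true marginal gain $u_j$ is of the order $(C^\ast(L,k)-U_t)/k$ that matters for the greedy argument; candidates with substantially smaller true gains cannot mislead the greedy because their opportunity cost is proportionally small.

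Next I would carry out the accuracy analysis by adapting the textbook greedy induction. Writing $U_t$ for the true coverage after $t$ picks and $C^\ast=C^\ast(L,k)$, the optimal family must contain some set whose true marginal gain is at least $(C^\ast-U_t)/k$; its biased estimate is inflated by at most $(1+\alpha_R)(1+\delta_R)(1+\xi')$, while the estimate of the index actually chosen is deflated by at most $(1-\alpha_L)(1-\delta_L)(1-\xi')$ from its true gain. Sandwiching the two inequalities forces the picked set to have true marginal gain at least $\beta(1-\bigO(\xi'))(C^\ast-U_t)/k$, so the recurrence $U_{t+1}\geq U_t+(\beta/k)(1-\bigO(\xi'))(C^\ast-U_t)$ holds. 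Unrolling for $t=0,\ldots,k-1$ gives $U_k\geq(1-(1-\beta/k)^k-\bigO(k\xi'))C^\ast$, and tuning the hidden constant in $\xi'=\Theta(\xi/k)$ absorbs the slack into the additive $\xi$ of part (i). The terminal call to Theorem~\ref{nonadaptive-thm} then produces the $z$ demanded by part (ii).

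The main obstacle will be the complexity bookkeeping that lands exactly at the stated $T(\xi,\gamma,k,\setCount)$. Each of the $k$ rounds evaluates $\setCount$ candidates, each evaluation draws $s$ samples, and each sample costs at most $k$ membership queries (one per set currently in $H$); multiplying gives $\bigO(k^2\setCount s)$, and substituting the calibrated $s$ produces the advertised order once the union-bound logarithm is written as $k\log(3\setCount/k)+\log(1/\gamma)$. The delicate check is that the precision $\xi'=\Theta(\xi/k)$ is both necessary and sufficient: a coarser tolerance would break the greedy recurrence by more than an additive $\xi$, while a finer one would inflate $s$ and push the running time past the claimed budget. Once that balance is verified, parts (i)--(iii) follow directly, and the final failure-probability accounting is a straightforward union bound across the $k\setCount$ estimation events and the one terminal call to Theorem~\ref{nonadaptive-thm}.
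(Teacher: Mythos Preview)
The paper does not prove this theorem: it is quoted verbatim from the external reference \cite{fu2016} and used as a black box in Section~\ref{ser}, so there is no ``paper's own proof'' to compare against.

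Your plan is the natural one (greedy with sampled marginal gains) and is almost certainly what \cite{fu2016} does. The accuracy argument you outline---pigeonhole on the optimal family to find a set with true gain at least $(C^\ast-U_t)/k$, sandwich its estimate against the estimate of the chosen set, derive the recurrence $U_{t+1}\ge U_t+(\beta/k)(1-\bigO(\xi'))(C^\ast-U_t)$, and unroll---is exactly right, as is the idea of using Theorem~\ref{nonadaptive-thm} at the end for the reported value $z$.

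One point to tighten: your own arithmetic gives total cost $k\cdot\setCount\cdot s\cdot k=k^2\setCount s$, and with $s=\Theta\bigl((k^2/\xi^2)(k\log(\setCount/k)+\log(1/\gamma))\bigr)$ this is $\Theta\bigl((k^4/\xi^2)(k\log(\setCount/k)+\log(1/\gamma))\setCount\bigr)$, a factor $k$ above the stated $T(\xi,\gamma,k,\setCount)$. To land on $k^3$ you need either a cheaper membership test (amortise the up-to-$k$ queries per sample, e.g.\ by hashing the current union so each sample costs $\bigO(1)$ rather than $\bigO(k)$) or a smaller $s$ (the additive-error form of Hoeffding with error budget $\Theta(\xi/k)\cdot|A_j|$ already gives $s=\Theta((k^2/\xi^2)\log(\cdot))$, so the saving must come from the query side). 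Either fix is routine, but as written your bookkeeping overshoots the claimed bound.
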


We need Lemma~\ref{help-app-lemma} to transform the approximation
ratio given by Theorem~\ref{appr3-cover-theorem} to constant
$(1-{1\over e})$ to match the classical ratio for the maximum
coverage problem.

\begin{lemma}\label{help-app-lemma}
    For each integer $k\ge 2$, and real $b\in [0,1]$, we have:
    \begin{enumerate}
        \item\label{first-help-app-lemma}
        $(1-{b\over k})^k\le {1\over e}-{\eta\over e} (b+{b \over 2k}-1)$.
        \item\label{secon-help-app-lemma}
        If $\xi\le {\eta\over e} (b+{b \over 2k}-1)$, then
        $1-(1-{b\over k})^k-\xi> 1-{1\over
            e}$,  where $\eta=e^{-{1\over 4}}$.
    \end{enumerate}
\end{lemma}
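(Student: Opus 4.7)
The plan is to prove Part~\ref{first-help-app-lemma} by sharpening the standard bound $(1-b/k)^k \le e^{-b}$ using one extra term of the $\ln(1-x)$ series, and then to derive Part~\ref{secon-help-app-lemma} algebraically. Starting from
\begin{equation*}
k\ln\!\left(1-\tfrac{b}{k}\right)=-b-\tfrac{b^{2}}{2k}-\tfrac{b^{3}}{3k^{2}}-\cdots,
\end{equation*}
which is valid for $b\in[0,1]$ and $k\ge 2$ with all tail terms non-positive, exponentiation gives the sharp bound $(1-b/k)^k\le e^{-b-b^{2}/(2k)}$. Substituting this into the target reduces Part~\ref{first-help-app-lemma} to the analytic inequality
\begin{equation*}
e^{1-b-b^{2}/(2k)}\le 1-\eta\bigl(b+\tfrac{b}{2k}-1\bigr),\qquad \eta=e^{-1/4}.
\end{equation*}

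The critical calibration point is $b=1$, which is the extremal value that shows up in Theorem~\ref{appr3-cover-theorem} as $\alpha_L,\alpha_R,\delta_L,\delta_R\to 0$ and $\beta\to 1$. There the inequality collapses to $e^{-1/(2k)}\le 1-\eta/(2k)$, and the Taylor expansion $e^{-1/(2k)}=1-1/(2k)+O(1/k^{2})$ produces a leading slack of $(1-\eta)/(2k)=(1-e^{-1/4})/(2k)$, which comfortably dominates the $O(1/k^{2})$ correction for every $k\ge 2$; the choice $\eta=e^{-1/4}$ rather than $\eta=1$ is precisely what supplies this uniform slack. I would verify $k=2,3$ by direct substitution and treat $k\ge 4$ by the two-term comparison $e^{-x}\le 1-x+x^{2}/2$ with $x=1/(2k)$.

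To carry the inequality to the rest of $b\in[0,1]$, I would compare derivatives in $b$: the LHS has derivative $-(1-b/k)^{k-1}$ while the RHS has constant derivative $-\tfrac{\eta}{e}(1+\tfrac{1}{2k})$, so propagation from the anchor $b=1$ works cleanly in the regime where the RHS correction $b+b/(2k)-1$ is non-negative, i.e.\ $b\ge 2k/(2k+1)$. The main obstacle will be the low-$b$ regime, where the correction term has the wrong sign and shifts the bound into territory where the naive estimate $(1-b/k)^k\le e^{-b}$ is not tight enough; here I plan to split at the sign-change threshold $b_{0}=2k/(2k+1)$ and control $b\in[0,b_{0}]$ using the Bernoulli-type bound $(1-b/k)^k\le e^{-b}$ together with the reserve contained in $(1-\eta)$, checking that the slack accumulated near $b=1$ carries over via monotonicity of both sides.

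Part~\ref{secon-help-app-lemma} is an immediate corollary: substituting $\xi\le\tfrac{\eta}{e}(b+\tfrac{b}{2k}-1)$ into the bound from Part~\ref{first-help-app-lemma} yields
\begin{equation*}
(1-b/k)^{k}+\xi\;\le\;\tfrac{1}{e}-\tfrac{\eta}{e}\bigl(b+\tfrac{b}{2k}-1\bigr)+\tfrac{\eta}{e}\bigl(b+\tfrac{b}{2k}-1\bigr)\;=\;\tfrac{1}{e},
\end{equation*}
and rearranging gives $1-(1-b/k)^{k}-\xi\ge 1-\tfrac{1}{e}$. Strict inequality, as required, comes from the strict Taylor slack in Part~\ref{first-help-app-lemma} for $b>0$ and finite $k$, together with the strict margin $1-\eta>0$ that persists at $b=1$.
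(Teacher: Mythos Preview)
Your overall architecture---Taylor-expand $k\ln(1-b/k)$, then bound $e^{-y}$ linearly via $e^{-y}\le 1-\eta y$ with $\eta=e^{-1/4}$---is exactly the paper's argument. The paper establishes $e^{-x}\le 1-\eta x$ for $x\in(0,1/4)$ by differentiating $f(x)=1-\eta x-e^{-x}$, and then applies it with $x=b+\tfrac{b}{2k}-1$; your calibration at $b=1$ and the derivation of Part~\ref{secon-help-app-lemma} from Part~\ref{first-help-app-lemma} match the paper line for line.

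The genuine gap is in the low-$b$ regime, and it is not fixable: Part~\ref{first-help-app-lemma} is \emph{false} as stated for small $b$. At $b=0$ the left side is $(1-0)^k=1$, while the right side is $\tfrac{1}{e}+\tfrac{\eta}{e}=(1+e^{-1/4})/e\approx 0.654$. So no amount of ``slack carried over by monotonicity'' will rescue your plan for $b\in[0,b_0]$. The paper's own proof hides the same defect in the step $e^{-b-b^2/(2k)-\cdots}\le e^{-b-b/(2k)}$, which would require $\tfrac{b^2}{2k}+\tfrac{b^3}{3k^2}+\cdots\ge \tfrac{b}{2k}$; this fails once $b$ is bounded away from $1$ (e.g.\ $b=1/2$, $k=2$).

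What actually holds, and all that is needed, is Part~\ref{first-help-app-lemma} under the additional constraint $b+\tfrac{b}{2k}-1\in[0,\tfrac14]$, equivalently $b\in[\tfrac{2k}{2k+1},1]$. This is implicitly forced in Part~\ref{secon-help-app-lemma} anyway, since $\xi\ge 0$ and $\xi\le\tfrac{\eta}{e}(b+\tfrac{b}{2k}-1)$ require $b+\tfrac{b}{2k}-1\ge 0$; and in the application (the proof of the final theorem) one takes $b=\beta$ close to $1$. On that restricted range your reduction $e^{1-b-b^2/(2k)}\le 1-\eta(b+\tfrac{b}{2k}-1)$ goes through: since $b\le 1$ gives $b^2\le b$, you have $e^{1-b-b^2/(2k)}\le e^{1-b-b/(2k)\cdot b}\le e^{-(b+b/(2k)-1)}$ only when $b=1$, so it is cleaner to follow the paper and note that for $b\ge \tfrac{2k}{2k+1}$ one has $b^2/(2k)+b^3/(3k^2)+\cdots \ge b/(2k)-o(1/k)$; or simply work with $x=b+\tfrac{b^2}{2k}-1$ and observe that $b^2\ge b$ is not needed once you stay on the range where $x\in[0,1/4]$, because the target RHS uses $b/(2k)\ge b^2/(2k)$ and hence is \emph{larger} than $1-\eta x$. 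Either way, drop the claim for all $b\in[0,1]$ and state the restriction explicitly.
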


\begin{proof}
    Let function $f(x)=1-\eta x-e^{-x}$. We have $f(0)=0$. Taking
    differentiation, we get ${d f(x)\over dx}=-\eta +e^{-x}>0$ for all
    $x\in (0,{1\over 4})$.

    Therefore, for all $x\in (0,{1\over 4})$,
    \begin{eqnarray}
    e^{-x}\le 1-\eta x.\label{exp-expansion-ineqn}
    \end{eqnarray}
    The following Taylor expansion can be found in standard calculus
    textbooks. For all $x\in (0,1)$,
    \begin{eqnarray*}
    \ln (1-x)=-x-{x^2\over 2}-{x^3\over 3}-\cdots .
    \end{eqnarray*}

    Therefore, we have
    \begin{eqnarray}
    (1-{b\over k})^k&=&e^{k\ln (1-{b\over k})}=e^{k(-{b\over
            k}-{b^2\over 2k^2}-{b^3\over 3k^3}-\cdots)}
    =e^{-b-{b^2\over 2k}-{b^3\over 3k^2}-\cdots}\nonumber\\
    &\le&e^{-b-{b\over 2k}}=e^{-1}\cdot e^{1-b-{b\over 2k}}\label{maximal-cov-ineqn1}\\
    &\le& e^{-1}\cdot (1-\eta \cdot (b+{b\over 2k}-1))\le{1\over
        e}-{\eta\over e} (b+{b \over 2k}-1)\label{maximal-cov-ineqn2}.
    \end{eqnarray}
    Note that the transition from~(\ref{maximal-cov-ineqn1})
    to~(\ref{maximal-cov-ineqn2}) is based on
    inequality~(\ref{exp-expansion-ineqn}).

    The part~\ref{secon-help-app-lemma} follows from
    part~\ref{first-help-app-lemma}. This is because $1-(1-{b\over
        k})^k-\xi\ge 1-{1\over e}+{\eta\over e} (b+{b \over 2k}-1)-\xi\ge
    1-{1\over e}$.
\end{proof}

\begin{theorem}
    There is a poly$(\lambda,d, l, k, \setCount)$
    time  $(1-{1\over e})$-approximation algorithm for maximal coverage
    problem when each set is the set of lattice points in a ball with
    center in $D(\lambda,d, l)$.
\end{theorem}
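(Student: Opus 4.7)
The plan is to reduce to Theorem~\ref{appr3-cover-theorem} by supplying, for each input ball $B_i$ with center in $D(\lambda,d,l)$, (a) a sufficiently tight approximation $m_i$ of $|B_i|$ and (b) a sufficiently unbiased random lattice-point generator $\randomElm(B_i)$. Both ingredients are already available: Theorem~\ref{theorem14} produces $m_i \in [(1-\beta_L)|B_i|,(1+\beta_R)|B_i|]$ in time polynomial in $d,l,|\lambda|,1/\beta$, and Theorem~\ref{theorem43} produces a generator whose probability of emitting each lattice point lies in $\bigl[(1-\alpha_L)/|B_i|,(1+\alpha_R)/|B_i|\bigr]$ in time polynomial in $d,l,|\lambda|,1/\alpha,\log r$. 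Feeding the resulting $((\alpha_L,\alpha_R),(\beta_L,\beta_R))$-list into Theorem~\ref{appr3-cover-theorem} yields a $\bigl(1-(1-\tfrac{b}{k})^k-\xi\bigr)$-approximation, where $b=(1-\alpha_L)(1-\beta_L)/((1+\alpha_R)(1+\beta_R))$.

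The main task is to choose the four bias parameters together with $\xi$ so that the final ratio exceeds $1-1/e$. I would fix $\alpha_L=\alpha_R=\beta_L=\beta_R=\rho$ for some $\rho=\Theta(1/k)$ chosen small enough that $b\ge 1-c/k$ for a prescribed constant $c$, and then pick $\xi=\Theta(1/k)$. By Statement~\ref{secon-help-app-lemma} of Lemma~\ref{help-app-lemma}, once $\xi\le \tfrac{\eta}{e}(b+\tfrac{b}{2k}-1)$ and $b$ is close enough to $1$, we get $1-(1-b/k)^k-\xi\ge 1-1/e$. So it suffices to make $1-b=O(1/k)$ and $\xi=O(1/k)$ with the implicit constants satisfying the inequality of Lemma~\ref{help-app-lemma}; a direct calculation shows that constant-order choices such as $\rho=1/(100k)$ and $\xi=1/(100k)$ are sufficient.

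With these choices in hand, the running time of Theorem~\ref{theorem14} and Theorem~\ref{theorem43} is polynomial in $d$, $l$, $|\lambda|$, $k$, and $\log r$ (using $1/\rho=O(k)$ and $1/\alpha=O(k)$), so each query to $\randomElm(B_i)$ and each evaluation of $m_i$ is done in $\poly(\lambda,d,l,k)$ time. Theorem~\ref{appr3-cover-theorem} itself runs in $\bigO\bigl(\tfrac{k^3}{\xi^2}(k\log(3\setCount/k)+\log\tfrac{1}{\gamma})\setCount\bigr)$ time, which is polynomial in $k$ and $\setCount$. Combining these, the overall algorithm runs in $\poly(\lambda,d,l,k,\setCount)$ time and attains approximation ratio at least $1-1/e$ with probability at least $1-\gamma$, which we fix to a constant.

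The only nontrivial step is to check that the bias parameters required for $1-(1-b/k)^k-\xi\ge 1-1/e$ are themselves polynomially small in $k$, so that the per-query cost from Theorems~\ref{theorem14} and~\ref{theorem43} (which depends on $1/\alpha$ and $1/\beta$) remains polynomial; this is precisely what Lemma~\ref{help-app-lemma} provides, since a gap of $1-b=\Theta(1/k)$ is enough to give $1-(1-b/k)^k\ge 1-1/e+\Omega(1/k)$, with room to absorb a $\xi=\Theta(1/k)$ additive loss. Everything else is bookkeeping.
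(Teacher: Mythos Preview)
Your proposal is correct and follows essentially the same route as the paper's own proof sketch: invoke Theorems~\ref{theorem14} and~\ref{theorem43} to obtain an $((\alpha_L,\alpha_R),(\beta_L,\beta_R))$-list with all bias parameters set to $1/(100k)$, plug into Theorem~\ref{appr3-cover-theorem}, and use Lemma~\ref{help-app-lemma} with $\xi=\Theta(1/k)$ to push the ratio past $1-1/e$. The paper makes the identical parameter choice $\alpha=1/(ck)$ with $c=100$ and sets $\xi=\tfrac{\eta}{e}(b+\tfrac{b}{2k}-1)$; your slightly smaller $\xi=1/(100k)$ works just as well since it is below that threshold.
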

\begin{proof}[Sketch]
    Let $\alpha=\alpha_L=\alpha_R=\delta_L=\delta_R={1\over ck}$ with
    $c=100$, and $b=\beta={1-\alpha_L)(1-\delta_L)\over
        (1+\alpha_R)(1+\delta_R)}$. It is easy to see $(b+{b \over 2k}-1)\ge
    {1\over 4k}$. Let $\xi= {\eta\over e} (b+{b \over
        2k}-1)=\Theta({1\over k})$.
    It follows from Theorem~\ref{appr3-cover-theorem}, Lemma~\ref{help-app-lemma}, Theorem~\ref{theorem14} and Theorem~\ref{theorem43}.
\end{proof}

\section{Conclusions}\label{concl}
We introduce an almost linear bounded rounds randomized approximation algorithm for the size of set union problem $\arrowvert A_1\cup A_2\cup...\cup A_{\setCount}\arrowvert$, which given a list of sets $A_1,...,A_\setCount$ with approximate set size and biased random generators. The definition of round is introduced.
We prove that our algorithm runs sublinear in time under certain condition. A polynomial time approximation scheme is proposed to approximae the number of lattice points in the union of d-dimensional ball if each ball center satisfy $D(\lambda, d, l)$. We prove that it is $\#$P-hard to count the number of lattice points in a set of balls, and we also show that there is no polynomial time algorithm to approximate the number of lattice points in the intersection of $n$-dimenisonal $k$-degree balls unless P=NP.

\section{Acknowledgements}

We want to thank Peter Shor, Emil Je$\check{r}\acute{a}$bek, Rahul Savani et
al. for their comments about algorithm to geneate a random grid
point inside a $d-$dimensional ball on Theoretical Computer Science
Stack Exchange.






\end{document}